\documentclass{fundam}

\usepackage{latexsym,amsmath}
\usepackage{refcount}
\usepackage{url}
\usepackage{graphicx}
\usepackage{graphicx,xcolor}

  \usepackage{hyperref}


\begin{document}

\setcounter{page}{307}
\publyear{2021}
\papernumber{2101}
\volume{184}
\issue{4}

   \finalVersionForARXIV

\title{A Polynomial-Time Construction of a Hitting Set\\
for Read-Once Branching Programs of Width 3}

\author{Ji\v{r}\'{\i} \v{S}\'{\i}ma\thanks{Address for correspondence:
           Institute of Computer Science of  the Czech Academy of Sciences, P.O.~Box~5,
           182~07~Prague~8, Czech Republic.\newline \newline
          \vspace*{-6mm}{\scriptsize{Received January 2021; \ accepted January 2022.}}}, Stanislav \v{Z}\'{a}k
\\
Institute of Computer Science of the Czech Academy of Sciences\\
Prague, Czech Republic\\
sima@cs.cas.cz,  stan@cs.cas.cz
}

\maketitle

\runninghead{J. \v{S}\'{\i}ma and  S. \v{Z}\'{a}k}{A Hitting Set for Width-3 1-Branching Programs}

\begin{abstract}
Recently, an interest in constructing pseudorandom or hitting set
generators for restricted branching programs has increased, which is
motivated by the fundamental issue of derandomizing space-bounded
computations. Such constructions have been known only in the case of
width~2 and in very restricted cases of bounded width. In this paper,
we characterize the hitting sets for read-once branching programs of
width~3 by a so-called richness condition. Namely, we
show that such sets hit the class of read-once conjunctions of DNF
and CNF (i.e.\ the weak richness). Moreover, we prove that
any rich set extended with all strings within Hamming distance of~3
is a hitting set for read-once branching programs of width~3.
Then, we show that any almost $O(\log n)$-wise independent set
satisfies the richness condition. By using such a set due to Alon
et al. (1992) our result provides an explicit polynomial-time
construction of a hitting set for read-once branching programs of
width~3 with acceptance probability $\varepsilon>5/6$. We announced this result at conferences more than ten years ago, including only proof sketches, which motivated a number of subsequent results on pseudorandom generators for restricted read-once branching programs. This paper contains our original detailed proof that has not been published yet.
\end{abstract}

\begin{keywords}
derandomization, hitting set, read-once branching program, bounded width, almost k-wise independent set
\end{keywords}

\section{Introduction}

An \emph{$\varepsilon$-hitting set} for a class of Boolean functions of $n$
variables is a set $H\subseteq\{0,1\}^n$ such that for every function
$f$ in the class, the following is satisfied: If a random input is
accepted by $f$ with probability at least $\varepsilon$, then there is
also an input in $H$ that is accepted by $f$. An efficiently
constructible sequence of hitting sets for increasing $n$ is a
straightforward generalization of the \emph{hitting set generator}
introduced in \cite{Goldreich99}, which is a weaker (one-sided error)
version of a pseudorandom generator~\cite{NisanW94}. Recall that an
\emph{$\varepsilon$-pseudorandom generator} for a class of Boolean
functions of $n$ variables is a function
\linebreak
$\mathbf{g}:\{0,1\}^s\longrightarrow\{0,1\}^n$ which stretches a short
uniformly random \emph{seed} of \emph{length} $s$ bits into $n$ bits
($s\ll n$) that cannot be distinguished from uniform ones. In particular,
for every function $f$ in the class, condition
\mbox{$\left|Pr_{\mathbf{x}\sim U_n}\left[f(\mathbf{x})=1\right]-
Pr_{\mathbf{y}\sim U_s}\left[f(\mathbf{g}(\mathbf{y}))=1\right]\right|
\leq\varepsilon$} holds where $\mathbf{x}\sim U_n$ means that
$\mathbf{x}$ is uniformly distributed in $\{0,1\}^n$.

For the class of Boolean functions of polynomial complexity in any
reasonable model, it is easy to prove the existence of
$\varepsilon$-hitting set of polynomial size, if $\varepsilon>1/n^c$
for a constant $c$ where $n$ is the number of variables. The proof is
nonconstructive, since it uses a counting argument.
An important problem in complexity theory is to find polynomial-time
constructible hitting sets for functions of polynomial complexity in
different standard models like circuits, formulas, branching programs etc.
Such constructions would have consequences for the relationship between
deterministic and probabilistic computations in the respective models.

Looking for polynomial-time constructions of hitting sets for
unrestricted models belongs to the hardest problems in computer
science. Hence, restricted models are investigated. We consider \emph{read-once branching (1-branch\-ing) programs}, which is a restricted model of space-bounded computations~\cite{Wegener00}. Recall a branching program (see Section~\ref{1bp3} for precise definitions) is used to compute a Boolean function which is represented as a directed acyclic multi-graph with a root (a~source). This graph consists of inner nodes labeled with input variables and terminal nodes (sinks) labeled with Boolean output values 0 or 1. Each inner node has out-degree 2, while its two outgoing edges are labeled with 0 and 1, respectively. The computational path starts at the source, always follows the edge outgoing from the inner node whose label agrees with an assignment of the value to the input variable associated with this node, and terminates in a sink providing the output. A 1-branching program queries every input variable at most once along each computational path.
In a leveled branching program, the edges connect only nodes in the consecutive levels where the level of a node is defined as its distance from the source. Then the width of such a program is the maximum number of nodes on any of its levels.

For read-once branching programs of polynomial size, pseudorandom generators with seed length $O(\log^2 n)$ have been known for a long time through the result of Ni\-san~\cite{Nisan92}. Note that an explicit pseudorandom generator for this model which is computable in logarithmic space and has seed length $O(\log n)$ would suffice to derandomize the complexity class BPL (Bounded-error Probabilistic Log\-a\-rith\-mic-space). Recently, considerable
attention has been paid to improving the seed length to $O(\log n)$ in
the constant-width case, which is a fundamental problem with many
applications in circuit lower bounds and derandomization~\cite{MekaZ10,Vadhan12}. The problem has been resolved for width~2 but the known techniques provably fail for
width~3~\cite{BogdanovDVY09,BrodyV10,De11,FeffermanSUV,MekaZ10,Vadhan12},
which applies even to hitting set generators~\cite{BrodyV10}.

\eject

In the case of width 3, we do not know of any significant improvement
over Nisan's result except for some recent progress in the severely
restricted case of so-called regular oblivious read-once branching
programs. Recall that an \emph{oblivious} branching program queries
the input variables in a fixed order, which represents a provably
weaker computational model~\cite{BeameM10}. For constant-width
\emph{regular} oblivious 1-branching programs which have the in-degree
of all nodes equal to 2 (or 0), three independent constructions of
$\varepsilon$-pseudorandom generators with seed length
$O(\log n(\log\log n+\log(1/\varepsilon)))$ were
achieved~\cite{BravermanRRY10,BrodyV10,De11}. This seed length has later
been improved to $O(\log n\log(1/\varepsilon))$ for constant-width
\emph{permutation} oblivious 1-branching programs~\cite{Pudlak10,De11}
which are regular programs with the two edges incoming to any node
labeled 0 and 1, i.e.\ edges labeled with 0 respectively
1 create a permutation for each level-to-level transition~\cite{MekaZ10}.

In the constant-width regular 1-branching programs the fraction of
inputs that are queried at any node is always lower-bounded by a positive
constant. This excludes the fundamental capability of general
(non-regular) branching programs to recognize the inputs that
contain a given substring on a \mbox{non-constant} number of selected
positions. In our approach, we manage the analysis also for this
essential case. In particular, we identify two types of convergence of
the number of inputs along a computational path towards zero which
implement read-once DNFs and CNFs, respectively. Thus, we achieve the
construction of a hitting set generator for general width-3 1-branching
programs which need not be regular nor oblivious. In our previous
work~\cite{Sima07}, we constructed a hitting set for so-called
\emph{simple} width-3 1-branching programs which exclude one specific
pattern of level-to-level transition in their normalized form and cover
the width-3 regular case.

In the present paper, we provide a polynomial-time construction of
a hitting set for read-once branching programs of width~3 with acceptance
probability $\varepsilon>5/6$, which need not be oblivious. This represents
an important step in the effort of constructing hitting set generators for
the model of read-once branching programs of bounded width. For this purpose,
we formulate a so-called \emph{richness} condition which is independent of
a rather technical definition of branching programs. In fact, the (full)
richness condition implies its weaker version which is equivalent
to the definition of hitting sets for read-once conjunctions of DNF and CNF.
Thus, a related line of study concerns pseudorandom generators for read-once
formulas, such as read-once DNFs~\cite{DeETT10}.

We show that the richness property characterizes in a certain sense the
hitting sets for width-3 1-branching programs. In particular, its weaker
version proves to be necessary for such hitting sets, while the sufficiency
of richness represents the main result of this paper. More precisely, we
show that any rich set extended with all strings within Hamming distance
of~3 is a hitting set for 1-branching programs of width~3 with the acceptance probability greater than $5/6$. The same result with a weakly rich set holds for the oblivious width-3 1-branching programs~\cite{Sima10}. The proof is based on a detailed analysis of structural properties of the width-3 1-branching programs that reject all the inputs from the candidate hitting set. Then, we prove that for a suitable constant $C$, any almost $(C\log n)$-wise independent set which can be constructed in polynomial time by the result due to Alon et al.~\cite{Alon92} satisfies the richness condition, which implies our result. In addition, it follows from the latter result that almost $O(\log n)$-wise independent sets are weakly rich and hence, they hit the class of read-once conjunctions of DNF and CNF which is
a generalization of the earlier result from~\cite{DeETT10}.

A preliminary version of this article appeared as extended
abstracts~\cite{Sima11a,Sima11b} including only proof sketches, where our result was formulated for acceptance probability $\varepsilon>11/12$. Since then a number of results on pseudorandom generators for restricted 1-branching programs~\cite{GopalanMRTV12,Steinke12,ForbesS13,ReingoldSV13,Steinberger13,Watson13,ForbesSS14,BazziN17,MurtaghRSV17,ServedioT17,SteinkeVW17,Ahmadinejad19,DoronHH19,MekaRT19,ServedioT19,BravermanCG20,ChengH20,HozaZ20} have been achieved which were motivated and/or follow our study referring to our result; see e.g.\ the paper~\cite{HozaZ20} for a current survey of the newest achievements along this direction. This paper contains our original complete proof that has not been published yet.

The paper is organized as follows. After a brief review of basic
definitions regarding branching programs in Section~\ref{1bp3}
(see \cite{Wegener00} for more information), the weak richness condition
is formulated and proved to be necessary in Section~\ref{necc}.
The richness condition and its sufficiency is presented in
Section~\ref{suffc} including the intuition behind the proof. The
subsequent four Sections~\ref{dfpart}--\ref{indend} are devoted to
the technical proof of this proposition. Furthermore,
our theorem that any almost $O(\log n)$-wise independent set is
rich is presented in Section~\ref{kwISrich} where also the main steps
of the technical proof occupying the subsequent four
Sections~\ref{pclmod}--\ref{Taylorth} are introduced. Finally, our
result is summarized in Section~\ref{concl}.

\section{Normalized width-$w$ 1-branching programs}
\label{1bp3}

A \emph{branching program} $P$ on the set of input Boolean variables
$X_n=\{x_1,\ldots,$ $x_n\}$ is a directed acyclic multi-graph $G=(V,E)$
that has one \emph{source} $s\in V$ of zero in-degree and, except for
\emph{sinks} of zero out-degree, all the \emph{inner} (non-sink) nodes
have out-degree~2. In addition, the inner nodes get labels from $X_n$
and the sinks get labels from $\{0,1\}$. For each inner node, one of
the outgoing edges gets the label~0 and the other one gets the
label~1. The branching program $P$ computes Boolean function
$P:\{0,1\}^n\longrightarrow\{0,1\}$ as follows. The computational
path of $P$ for an input $\mathbf{a}=(a_1,\ldots,a_n)\in \{0,1\}^n$
starts at source $s$. At any inner node labeled by $x_i\in X_n$, input
variable $x_i$ is tested and this path continues with the outgoing
edge labeled by $a_i$ to the next node, which is repeated until the
path reaches the sink whose label gives the output value $P(\mathbf{a})$.
Denote by \mbox{$P^{-1}(a)=\{\mathbf{a}\in\{0,1\}^n\,|\,P(\mathbf{a})=a\}$}
 the set of inputs for which $P$ outputs $a\in\{0,1\}$. For inputs of arbitrary
lengths, infinite families $\{P_n\}$ of branching programs, each $P_n$
for one input length $n\geq 1$, are used.

A branching program $P$ is called \emph{read-once} (or shortly
\emph{1-branching} program) if every input variable from $X_n$
is queried at most once along each computational path. Here we consider
\emph{leveled} branching programs in which each node belongs to a
level, and edges lead from level $k\geq 0$ only to the next
level $k+1$. We assume that the source of $P$ creates level 0,
whereas the last level is composed of all sinks. The number of levels
decreased by 1 equals the \emph{depth} of $P$ which is the length of
its longest path, and the maximum number of nodes on one level is
called the \emph{width} of $P$. In addition, $P$ is called
\emph{oblivious} if all nodes at each level are labeled with
the same variable.

For a 1-branching program $P$ of width $w$ define a $w\times w$
\emph{transition matrix} $T_k$ on level $k\geq 1$ such that
$t_{ij}^{(k)}\in\{0,\frac{1}{2},1\}$ is the half of the number of
edges leading from node $v_j^{(k-1)}$ ($1\leq j\leq w$) on level $k-1$
of $P$ to node $v_i^{(k)}$ ($1\leq i\leq w$) on level $k$. For
example, $t_{ij}^{(k)}=1$ implies there is a \emph{double edge} from
$v_j^{(k-1)}$ to $v_i^{(k)}$. Clearly, $\sum_{i=1}^w t_{ij}^{(k)}=1$
since this sum equals the half of the out-degree of inner node
$v_j^{(k-1)}$, and $2\cdot\sum_{j=1}^w t_{ij}^{(k)}$ is the in-degree
of node $v_i^{(k)}$. Denote by a column vector
$\mathbf{p}^{(k)}=(p_1^{(k)},\ldots,p_w^{(k)})^{\sf T}$
the \emph{distribution}
of inputs among $w$ nodes on level $k$ of $P$, that is, $p_i^{(k)}$
is the probability that a random input is tested at node $v_i^{(k)}$,
which equals the ratio of the number of inputs from
$M(v_i^{(k)})\subseteq\{0,1\}^n$ that are tested at $v_i^{(k)}$ to all
$2^n$ possible inputs. It follows $\bigcup_{i=1}^w M(v_i^{(k)})=\{0,1\}^n$
and $\sum_{i=1}^w p_i^{(k)}=1$ for every level $k\geq 0$. Given the
distribution $\mathbf{p}^{(k-1)}$ on level $k-1$, the distribution on the
subsequent level $k$ can be computed using the transition matrix $T_k$ as
\begin{equation}
\label{pkAkpk1}
\mathbf{p}^{(k)}=T_k\cdot\mathbf{p}^{(k-1)}\,.
\end{equation}
It is because the ratio of inputs coming to node $v_i^{(k)}$ from
previous-level nodes equals
$p_i^{(k)}=$
\linebreak
$\sum_{j=1}^w t_{ij}^{(k)}p_j^{(k-1)}$ since each of the two
edges outgoing from node $v_j^{(k-1)}$ distributes exactly the half of
the inputs tested at $v_j^{(k-1)}$.

\medskip
We say that a 1-branching program $P$ of width $w$ is \emph{normalized} if
$P$ has the minimum depth among the programs computing the same function
(e.g.\ $P$ does not contain the identity transition $T_k$) and $P$ satisfies
\begin{equation}
\label{normdis}
1>p_1^{(k)}\geq p_2^{(k)}\geq\cdots\geq p_w^{(k)}>0
\end{equation}
for every $k\geq \log w$ (hereafter, $\log$ denotes the binary logarithm).
Obviously, condition (\ref{normdis}) can always be met by possible splitting
(if $p_w^{(k)}=0$) and permuting the nodes at each level of~$P$:
\begin{lemma}[\cite{Sima07}]
\label{lemnorm}
Any width-$w$ 1-branching program can be normalized.
\end{lemma}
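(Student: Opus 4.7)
The plan is to put any width-$w$ 1-branching program into normalized form by three successive, function-preserving transformations, applied in the order (i) depth minimization, (ii) node splitting at deep levels, (iii) permutation of nodes within each level. None of them alters the computed function, and together they enforce both clauses of the normalized definition.

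For (i) I would repeatedly contract any identity transition $T_k$: both outgoing edges of every level-$(k-1)$ node then point to the same successor, so the variable queried at level $k-1$ is a dummy and the two adjacent levels can be merged without changing the function. More generally, any structural shortcut that reduces depth while preserving the function is performed; since depth is a non-negative integer, this terminates in a program of minimum depth. For (iii), at every level $k$ I apply a permutation of the $w$ node indices together with a matching relabeling of the rows of $T_k$ and the columns of $T_{k+1}$. This is trivially function-preserving and merely reorders the coordinates of $\mathbf{p}^{(k)}$, so choosing the permutation that sorts them in non-increasing order yields $p_1^{(k)} \geq \cdots \geq p_w^{(k)}$. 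Once step (ii) has ensured $p_w^{(k)} > 0$ for every $k \geq \log w$, the remaining bound $p_1^{(k)} < 1$ required by (\ref{normdis}) is automatic, since at least two positive coordinates sum to $1$.

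The substantive step is therefore (ii). For each level $k \geq \log w$ with some $p_i^{(k)} = 0$, the node $v_i^{(k)}$ receives no edges and plays no role in the computation; I would pick a node $v_j^{(k)}$ with $p_j^{(k)} > 0$, extract a non-empty fragment of its incoming mass (either by splitting a double incoming edge into two simple edges, or by rerouting one of several distinct simple incoming edges), and redirect the extracted mass to a fresh copy of $v_j^{(k)}$ that duplicates its label and its outgoing pair. The copy replaces $v_i^{(k)}$, so the width stays at $w$; the computed function is unchanged because the copy and the original are equivalent; the count of zero-probability nodes on level $k$ strictly decreases; iterating this terminates with all $p_i^{(k)} > 0$.

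The main obstacle will be verifying that the extraction step is always feasible, i.e.\ that whenever $p_i^{(k)} = 0$ for some $i$, some other node on level $k$ admits a splittable incoming fragment. I would handle this by induction on $k$, starting at $k = \lceil \log w \rceil$, where the $2^{\lceil \log w \rceil} \geq w$ distinct source-to-level-$k$ paths provide enough edges to populate $w$ nodes; for $k > \lceil \log w \rceil$ the inductive assumption that level $k-1$ already has $w$ positively loaded nodes supplies $2w$ outgoing edges across the $w$ nodes of level $k$, so by pigeonhole some nonempty target carries at least two of them, i.e.\ either a double edge or two distinct simple ones, both of which are splittable. A final sanity check is that splitting, which only duplicates a node's outgoing structure, neither reintroduces an identity transition nor creates a depth-reducing shortcut that would undo step (i); hence performing (i), (ii), (iii) in that order indeed produces a normalized program.
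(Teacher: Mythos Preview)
The paper does not actually prove this lemma; it cites \cite{Sima07} and only remarks beforehand that condition~(\ref{normdis}) ``can always be met by possible splitting (if $p_w^{(k)}=0$) and permuting the nodes at each level of~$P$.'' Your three-step plan---minimize depth, split zero-probability nodes, permute to sort---is precisely this intended argument spelled out, so the approach matches.

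One small repair is needed in your base case. At $k=\lceil\log w\rceil$ you invoke the $2^{\lceil\log w\rceil}\geq w$ source-to-level-$k$ \emph{paths}, but what governs splittability is the number of \emph{edges} entering level~$k$, namely $2m_{k-1}$ where $m_{k-1}$ is the count of positive-probability nodes at level~$k-1$. If you have not touched earlier levels, $m_{k-1}$ can be as small as~$1$ (e.g.\ a chain of double edges that individually are not identity transitions), and then two incoming edges cannot populate $w>2$ nodes. The fix is immediate and does not change your approach: run the same splitting step at \emph{all} levels $k\geq 1$, not just $k\geq\lceil\log w\rceil$, enforcing $m_k=\min(2m_{k-1},w)$ inductively from $m_0=1$; this stays within width~$w$, reaches $m_k=w$ exactly at $k=\lceil\log w\rceil$, and thereafter your pigeonhole step works as written.
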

In the sequel, we confine ourselves to the 1-branching programs of width $w=3$.
Any such normalized program $P$ satisfies $p_1^{(k)}+p_2^{(k)}+p_3^{(k)}=1$ and
$1>p_1^{(k)}\geq p_2^{(k)}\geq p_3^{(k)}>0$, which implies
\begin{equation}
\label{phi1}
p_1^{(k)}>\frac{1}{3}\,,\qquad
p_2^{(k)}<\frac{1}{2}\,,\qquad
p_3^{(k)}<\frac{1}{3}
\end{equation}
for every level $2\leq k\leq d$ where $d\leq n$ is the depth of $P$.
Note that the strict inequalities for $p_1^{(k)}$ and $p_3^{(k)}$ in
(\ref{phi1}) hold since $p_i^{(k)}\not=\frac{1}{3}$ according to
(\ref{pkAkpk1}) and $t_{ij}^{(k)}\in\{0,\frac{1}{2},1\}$.

\section{The weak richness condition is necessary}
\label{necc}

Let ${\cal P}$ be a class of branching programs and $\varepsilon>0$ be
a real constant. A set of input strings $H\subseteq\{0,1\}^*$ is called an
\emph{$\varepsilon$-hitting set} for class ${\cal P}$ if for sufficiently
large $n$, for every branching program $P\in{\cal P}$ with $n$ input
variables
\begin{equation}
\label{hitting}
\frac{\left|P^{-1}(1)\right|}{2^n}\geq\varepsilon
\quad\mbox{implies}\quad
(\exists\,\mathbf{a}\in H\cap\{0,1\}^n)\,P(\mathbf{a})=1\,.
\end{equation}
Furthermore, we say that a set $A\subseteq\{0,1\}^*$ is
\emph{weakly $\varepsilon$-rich} if for sufficiently large $n$, for any index
set $I\subseteq\{1,\ldots,n\}$, and for any partition
$\{Q_1,\ldots,Q_{q},R_{1},\ldots,R_{r}\}$ of $I$ where $q\geq 0$ and
$r\geq 0$, and for any $\mathbf{c}\in\{0,1\}^n$ the following implication
holds: If
\begin{equation}
\label{wacond}
\left(1-\prod_{j=1}^q\left(1-\frac{1}{2^{|Q_j|}}\right)\right)\times
\prod_{j=1}^r\left(1-\frac{1}{2^{|R_j|}}\right)\geq\varepsilon\,,
\end{equation}
then there exists $\mathbf{a}\in A\cap\{0,1\}^n$ such that
\begin{eqnarray}
\label{wcond1}
&&(\exists\,j\in\{1,\ldots,q\})\,(\forall\, i\in Q_j)\, a_i=c_i\\
\label{wcond2}
&\mbox{and}&
(\forall\,j\in\{1,\ldots,r\})\,(\exists\, i\in R_j)\,
a_i\not=c_i\,.\qquad
\end{eqnarray}
Particularly for $q=0$ inequality (\ref{wacond}) reads
\begin{equation}
\label{acondR}
\prod_{j=1}^r\left(1-\frac{1}{2^{|R_j|}}\right)\geq\varepsilon
\end{equation}
and conjunction
(\ref{wcond1}) and (\ref{wcond2}) reduces to the second conjunct
(\ref{wcond2}), while for $r=0$ inequality (\ref{wacond}) reads
\begin{equation}
\label{acondQ}
1-\prod_{j=1}^q\left(1-\frac{1}{2^{|Q_j|}}\right)\geq\varepsilon
\end{equation}
and conjunction
(\ref{wcond1}) and (\ref{wcond2}) reduces to the first conjunct
(\ref{wcond1}).

\medskip
Note that the product on the left-hand side of inequality (\ref{wacond})
expresses the probability that a random string $\mathbf{a}\in\{0,1\}^n$
(not necessarily in $A$) satisfies the conjunction~(\ref{wcond1}) and
(\ref{wcond2}). Moreover, this formula can be interpreted as a read-once
conjunction of DNF and CNF (each variable occurs at most once)
\begin{equation}
\label{DNF&CNF}
\bigvee_{j=1}^q\,\bigwedge_{i\in Q_j}\ell(x_i)\,\wedge\,
\bigwedge_{j=1}^r\,\bigvee_{i\in R_j}\lnot \ell(x_i)\,,
\quad\mbox{where}\quad
\ell(x_i)=\left\{
\begin{array}{ll}
x_i &\mbox{ for }c_i=1\\
\lnot x_i &\mbox{ for }c_i=0
\end{array}
\right.
\end{equation}
which accepts a random input with probability at least $\varepsilon$ according to~(\ref{wacond}). Hence, the weak richness condition is, in fact, equivalent to the definition of a hitting set for read-once conjunctions of DNF and CNF. The following proposition observes that the weak richness condition is necessary for any set to be a hitting set for width-3 1-branching programs. It is based on the clear facts that the 1-branching programs of width~3 can implement any read-once conjunction of DNF and CNF, and any hitting set for a class of functions hits any of its subclass. Nevertheless, we provide a detailed proof for a reader to get used to the introduced definitions and notations.
\begin{proposition}
Every $\varepsilon$-hitting set for the class of read-once
branching programs of width 3 is weakly $\varepsilon$-rich.
\end{proposition}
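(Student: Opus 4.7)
The plan is to turn the read-once DNF-CNF conjunction (\ref{DNF&CNF}) associated with any witness of weak richness into a concrete width-$3$ read-once branching program $P$, and then invoke the assumed hitting-set property of $H$. Fix any index set $I\subseteq\{1,\ldots,n\}$, partition $\{Q_1,\ldots,Q_q,R_1,\ldots,R_r\}$ of $I$, and string $\mathbf{c}\in\{0,1\}^n$ that satisfies (\ref{wacond}). The program $P$ will query its variables in the fixed oblivious order $Q_1,Q_2,\ldots,Q_q,R_1,\ldots,R_r$, followed by the variables in $\{1,\ldots,n\}\setminus I$, testing a single variable per level.

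During the scan of a block $Q_j$, $P$ maintains three nodes per level whose informal meanings are: (D1) ``the DNF is still unsatisfied and every variable already read in $Q_j$ agreed with $\mathbf{c}$''; (D2) ``the DNF is still unsatisfied and some variable of $Q_j$ has already disagreed with $\mathbf{c}$''; and (D3) ``some earlier $Q_{j'}$ has been fully satisfied.'' Within $Q_j$, the test of $x_i$ sends D1 to D1 if $a_i=c_i$ and to D2 otherwise, while D2 and D3 use double edges to themselves; at the boundary between $Q_j$ and $Q_{j+1}$, D1 collapses into D3, D2 is renamed D1 of $Q_{j+1}$, and D3 persists. After $Q_q$, the node D3 feeds the CNF phase while D1 and D2 are routed to a reject sink. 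The CNF phase is built analogously with states (C1) ``no disagreement with $\mathbf{c}$ yet in $R_j$'', (C2) ``a disagreement already found in $R_j$'', and (C3) ``reject'', with the boundary sending C1 to C3 and C2 to C1 of $R_{j+1}$; after $R_r$, C2 leads to the accepting sink and C1, C3 to the rejecting sink. The variables outside $I$ are read with double edges that preserve the accept/reject verdict, and the degenerate cases $q=0$ or $r=0$ are handled by omitting the corresponding phase.

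A level-by-level inspection then confirms that $P$ is leveled of width at most $3$, read-once, and computes precisely the Boolean function (\ref{DNF&CNF}). Because the predicates (\ref{wcond1}) and (\ref{wcond2}) depend on the disjoint variable blocks $\bigcup_j Q_j$ and $\bigcup_j R_j$ and the remaining variables do not affect the output, the acceptance probability $|P^{-1}(1)|/2^n$ under the uniform distribution equals exactly the left-hand side of (\ref{wacond}) and is therefore at least $\varepsilon$. Invoking the defining property (\ref{hitting}) of an $\varepsilon$-hitting set on $P$ yields some $\mathbf{a}\in H\cap\{0,1\}^n$ with $P(\mathbf{a})=1$, which by construction satisfies both (\ref{wcond1}) and (\ref{wcond2}), giving the desired witness of weak $\varepsilon$-richness. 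The only mildly delicate point in the argument is the bookkeeping at the block boundaries, where one has to verify that the state mergings do not momentarily push the width above $3$; this is routine and is essentially the only thing to check.
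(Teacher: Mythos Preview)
Your argument is correct and follows essentially the same route as the paper: both construct a width-$3$ read-once branching program computing the formula~(\ref{DNF&CNF}), observe that its acceptance probability is exactly the left-hand side of~(\ref{wacond}), and invoke the hitting-set property (the paper phrases this contrapositively and writes out explicit transition matrices, but the construction is the same three-state automaton you describe). One small slip worth fixing: after the scan of $Q_q$, state D1 means every variable of $Q_q$ agreed with $\mathbf{c}$, so the monomial $\bigwedge_{i\in Q_q}\ell(x_i)$ is satisfied and D1 must be routed into the CNF phase alongside D3 rather than to the reject sink.
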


\begin{proof}
We proceed by transposition. Assume a set $H\subseteq\{0,1\}^*$ is not
weakly $\varepsilon$-rich which means that
for infinitely many $n$ there is an index set $I\subseteq\{1,\ldots,n\}$,
a partition $\{Q_1,\ldots,Q_{q},$ $R_{1},\ldots,R_{r}\}$ of $I$ satisfying
(\ref{wacond}), and a string $\mathbf{c}\in\{0,1\}^n$ such that every
$\mathbf{a}\in H\cap\{0,1\}^n$ meets
\begin{eqnarray}
\label{ncond1}
&&(\forall\,j\in\{1,\ldots,q\})\,(\exists\, i\in Q_j)\, a_i\not=c_i\\
\label{ncond2}
&\mbox{or}&
(\exists\,j\in\{1,\ldots,r\})\,(\,\forall i\in R_j)\,
a_i=c_i\,.
\end{eqnarray}
We will use this partition and $\mathbf{c}$ for constructing a (non-normalized
oblivious) width-3 1-branching program $P$ such that
\begin{equation}
\label{nhitt}
\frac{\left|P^{-1}(1)\right|}{2^n}\geq\varepsilon
\quad\mbox{and}\quad
(\forall\,\mathbf{a}\in H\cap\{0,1\}^n)\,P(\mathbf{a})=0\,,
\end{equation}
which negates that $H$ is an $\varepsilon$-hitting set for 1-branching programs
of width~3 according to (\ref{hitting}). In fact, $P$ implements the
corresponding negated conjunction of DNF and CNF (\ref{DNF&CNF}).

\begin{figure}[htbp]
\centering
\includegraphics[height=18.6cm]{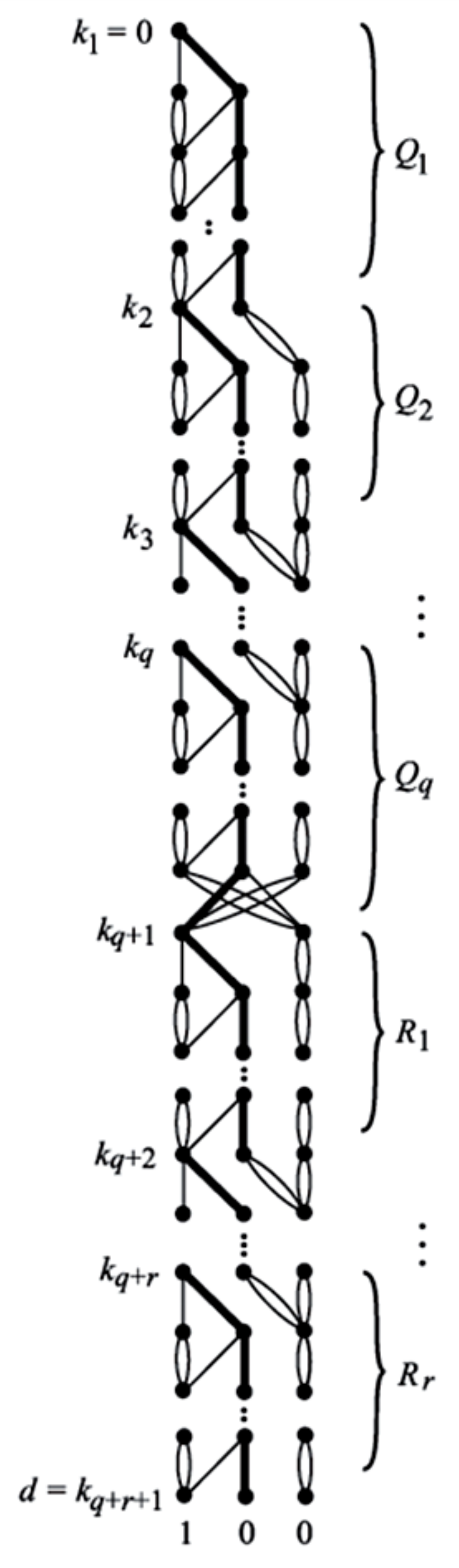}\vspace*{-1mm}
\caption{The Necessary Condition.}
\label{neccon}
\end{figure}

\medskip
As depicted in Figure~\ref{neccon}, branching
program $P$ is composed of $q+r$ consecutive blocks corresponding to the
partition classes $Q_1,\ldots,Q_{q},R_{1},\ldots,R_{r}$ which determine
the indices of variables that are queried within these blocks. For the simplicity of notation, we assume $q\geq 1$, $r\geq 1$, and $|Q_q|>1$, while the proof for $q=0$ or $r=0$ or $|Q_q|=1$ follows the same argument\footnote{The program $P$ in Figure~\ref{neccon} starts or terminates (with the sinks $v_1^{(k_{q+1})}$ and $v_3^{(k_{q+1})}$  labeled with 1 and 0, respectively) at level $k_{q+1}$ if $q=0$ or $r=0$, respectively, while for $|Q_q|=1$ the levels $k_q+1$ and $k_{q+1}$ coincide and the respective transitions are merged as $t_{11}^{(k_{q+1})}=t_{31}^{(k_{q+1})}=\frac{1}{2}$ and $t_{12}^{(k_{q+1})}=t_{13}^{(k_{q+1})}=1$.}. The block associated with $Q_j$ for $j\in\{1,\ldots,q\}$ starts on level $k_j=\sum_{\ell=1}^{j-1}|Q_\ell|$ of $P$ (e.g.\ $k_1=0$) with a transition satisfying $t_{11}^{(k_j+1)}=t_{21}^{(k_j+1)}=\frac{1}{2}$, followed by a sequence of transitions that meet $t_{11}^{(k)}=1$ and $t_{12}^{(k)}=t_{22}^{(k)}=\frac{1}{2}$ for every $k=k_j+2,\ldots,k_j+|Q_j|$, except for the boundary level $k_q+|Q_q|=k_{q+1}$, which is defined below. In addition, there is a parallel double-edge path leading from the node $v_3^{(k_2+1)}$ on level $k_2+1$ up to node $v_3^{(k_{q+1}-1)}$, and thus $t_{33}^{(k)}=1$ for every $k=k_2+2,k_2+3,\ldots,k_{q+1}-1$. This path
is wired up by $q-1$ double edges coming from nodes $v_2^{(k_j)}$, that
is, $t_{32}^{(k_j+1)}=1$ for every $j=2,\ldots,q$.
Finally, a special boundary transition is defined on level $k_{q+1}$
as $t_{31}^{(k_{q+1})}=t_{13}^{(k_{q+1})}=1$ and
$t_{12}^{(k_{q+1})}=t_{32}^{(k_{q+1})}=\frac{1}{2}$. Note that there
are only two nodes $v_1^{(k_{q+1})},v_3^{(k_{q+1})}$ on the boundary
level $k_{q+1}$.

Furthermore, $P$ continues analogously with
blocks corresponding to $R_j$ for $j=1,\ldots,r$, each starting on level
$k_{q+j}=k_{q+1}+\sum_{\ell=1}^{j-1}|R_\ell|$ (e.g.\
$k_{q+r+1}=d$ is the depth of $P$) with the transition satisfying
$t_{11}^{(k_{q+j}+1)}=t_{21}^{(k_{q+j}+1)}=\frac{1}{2}$,
followed by $t_{11}^{(k)}=1$ and $t_{12}^{(k)}=t_{22}^{(k)}=\frac{1}{2}$
for every $k=k_{q+j}+2,\ldots,k_{q+j}+|R_j|$, including the parallel
double-edge path, that is, $t_{33}^{(k)}=1$ for every $k=k_{q+1}+1,\ldots,d$
and $t_{32}^{(k_{q+j}+1)}=1$ for every $j=2,\ldots,r$.
The branching program $P$ then queries the value of each variable
$x_i$ such that $i\in Q_j$ for some $j\in\{1,\ldots,q\}$
or $i\in R_j$ for some $j\in\{1,\ldots,r\}$ only on one level
$k\in\{k_j,\ldots,k_{j+1}-1\}$ or $k\in\{k_{q+j},\ldots,k_{q+j+1}-1\}$,
respectively (i.e.\ the nodes on level $k$ are labeled with $x_i$),
while the single edge leading to $v_2^{(k+1)}$ (or to $v_1^{(k_{q+1})}$
for $k=k_{q+1}-1$) on the subsequent level $k+1$ (indicated by a bold line
in Figure~\ref{neccon}) gets label $c_i$. Finally, the sink $v_1^{(d)}$
gets label $1$, whereas the sinks $v_2^{(d)}$, $v_3^{(d)}$ are labeled
with the output $0$, which completes the construction of $P$.

\medskip
Clearly, $P$ is an (oblivious) read-once branching program of width 3.
The probability that an input reaches the node $v_3^{(k_{q+1})}$ on
the boundary level $k_{q+1}$ can simply be computed as
\begin{equation}
p_3^{(k_{q+1})}=\prod_{j=1}^q\left(1-\frac{1}{2^{|Q_j|}}\right)\,,
\end{equation}
while the probability of the complementary event that an input
reaches $v_1^{(k_{q+1})}$ equals
$p_1^{(k_{q+1})}=$
\linebreak
$1-p_3^{(k_{q+1})}$. Therefore, the probability that
$P$ outputs $1$ can be expressed and lower-bounded by~(\ref{wacond}):
\begin{equation}
\frac{\left|P^{-1}(1)\right|}{2^n}=p_1^{(d)}
=\left(1-\prod_{j=1}^q\left(1-\frac{1}{2^{|Q_j|}}\right)\right)\times
\prod_{j=1}^r\left(1-\frac{1}{2^{|R_j|}}\right)\geq\varepsilon\,.
\end{equation}
Furthermore, we split $H\cap\{0,1\}^n=A_1\cup A_2$ into two parts so that every
$\mathbf{a}\in A_1$ satisfies the first term (\ref{ncond1}) of the underlying
disjunction, whereas every $\mathbf{a}\in A_2=H\setminus A_1$ meets the second
term (\ref{ncond2}). Thus, for any input $\mathbf{a}\in A_1$
and for every $j\in\{1,\ldots,q\}$ the block of $P$ corresponding to $Q_j$
contains a level $k\in\{k_j,\ldots,k_{j+1}-1\}$ where variable $x_i$ is
tested such that $a_i\not=c_i$. This ensures that the computational path for
$\mathbf{a}\in A_1$ reaches $v_3^{(k_{q+1})}$ and further continues through
$v_3^{(k_{q+1}+1)},\ldots,v_3^{(d)}$, which gives $P(\mathbf{a})=0$ for every
$\mathbf{a}\in A_1$. Similarly, for any input $\mathbf{a}\in A_2$ there
exists a block of $P$ corresponding to $R_j$ for some $j\in\{1,\ldots,r\}$
such that the computational path for $\mathbf{a}$ traverses nodes
$v_1^{(k_{q+j})},v_2^{(k_{q+j}+1)},v_2^{(k_{q+j}+2)},\ldots,
v_2^{(k_{q+j}+|R_j|)}$. For $j<r$ this path continues through
$v_3^{(k_{q+j+1}+1)},\ldots,v_3^{(d)}$, whereas for $j=r$ it terminates
at $v_2^{(d)}$, which gives $P(\mathbf{a})=0$ in both cases. Hence,
$P$ satisfies (\ref{nhitt}), which completes the proof.
\end{proof}

\section{The richness condition is sufficient}
\label{suffc}

We say that a set $A\subseteq\{0,1\}^*$ is
\emph{$\varepsilon$-rich} if for sufficiently large $n$, for any index
set $I\subseteq\{1,\ldots,n\}$, and for any partition
$\{R_{1},\ldots,R_{r}\}$ of $I$ ($r\geq 0$) satisfying
\begin{equation}
\label{acond}
\prod_{j=1}^r\left(1-\frac{1}{2^{|R_j|}}\right)\geq\varepsilon\,,
\end{equation}
and for any $Q\subseteq\{1,\ldots,n\}\setminus I$ such that $|Q|\leq\log n$,
for any $\mathbf{c}\in\{0,1\}^n$ there exists $\mathbf{a}\in A\cap\{0,1\}^n$
that meets
\begin{equation}
\label{cond}
(\forall\, i\in Q)\, a_i=c_i\mbox{ and }
(\forall\,j\in\{1,\ldots,r\})\,(\exists\, i\in R_j)\,a_i\not=c_i\,.
\end{equation}
One can observe that an $\varepsilon$-rich set is weakly $\varepsilon$-rich
(see Section~\ref{necc}) since inequality (\ref{wacond}) implies (\ref{acond})
and
\begin{equation}
\label{wacond2}
1-\prod_{j=1}^q\left(1-\frac{1}{2^{|Q_j|}}\right)\geq\varepsilon
\end{equation}
which ensures that there is index $j\in\{1,\ldots,q\}$ of $Q_j=Q$ such
that $|Q|\leq\log n$. Namely, if $|Q_j|>\log n$ for every $j=1,\ldots,q$,
then inequality (\ref{wacond2}) would give
\begin{equation}
1-\varepsilon\geq
\prod_{j=1}^q\left(1-\frac{1}{2^{|Q_j|}}\right)
\geq\left(1-\frac{1}{2^{\log n}}\right)^{\frac{n}{\log n}}
>1-\frac{1}{n}\cdot\frac{n}{\log n}=1-\frac{1}{\log n}
\end{equation}
which is a contradiction for $n>2^{1/\varepsilon}$. Thus, we have (\ref{cond})
which validates the conjunction of (\ref{wcond1}) and (\ref{wcond2})
completing the argument.

\medskip
It follows that any rich set is a hitting set for read-once conjunctions of
DNF and CNF. Also note that formula (\ref{cond}) can be interpreted as
a read-once CNF (cf.~\ref{DNF&CNF})
\begin{equation}
\label{CNF}
\bigwedge_{i\in Q}\ell(x_i)\,\wedge\,
\bigwedge_{j=1}^r\,\bigvee_{i\in R_j}\lnot \ell(x_i)\,,
\quad\mbox{where}\quad
\ell(x_i)=\left\{
\begin{array}{ll}
x_i &\mbox{ for }c_i=1\\
\lnot x_i &\mbox{ for }c_i=0
\end{array}
\right.
\end{equation}
which contains at most logarithmic number of single literals together with
clauses whose sizes satisfy (\ref{acond}). Moreover, Theorem~\ref{richis}
in Section~\ref{kwISrich} proves that any almost $O(\log n)$-wise independent
set satisfies the richness condition.

\medskip
The following theorem shows that the richness condition is, in a certain sense,
sufficient for a set to be a hitting set for 3-width 1-branching programs.
For an input $\mathbf{a}\in\{0,1\}^n$ and an integer constant $c\geq 0$,
denote by $\Omega_c(\mathbf{a})=\{\mathbf{a}'\in\{0,1\}^n\,|\,
h(\mathbf{a},\mathbf{a}')\leq c\}$ the set of so-called
\mbox{\emph{$c$-neighbors}} of $\mathbf{a}$, where $h(\mathbf{a},\mathbf{a}')$
is the Hamming distance between $\mathbf{a}$ and $\mathbf{a}'$
(i.e.\ the number bits in which $\mathbf{a}$ and $\mathbf{a}'$ differ).
We also define $\Omega_c(A)=\bigcup_{\mathbf{a}\in A}\Omega_c(\mathbf{a})$
for a given set $A\subseteq\{0,1\}^*$.
\begin{theorem}
\label{suff}
Let $\varepsilon>\frac{5}{6}$.
If $A$ is $\varepsilon'^{11}$-rich for some $\varepsilon'<\varepsilon$,
then $H=\Omega_3(A)$ is an $\varepsilon$-hitting set for the class of
read-once branching programs of width~3.
\end{theorem}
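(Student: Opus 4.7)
The plan is to argue by contrapositive. I suppose $P$ is a width-3 read-once branching program with $|P^{-1}(1)|/2^n \geq \varepsilon$ and that \emph{no} input in $H=\Omega_3(A)$ is accepted by $P$, and I derive a contradiction. By Lemma~\ref{lemnorm} I may assume $P$ is normalized, so the distributions $\mathbf{p}^{(k)}$ obey the width-3 bounds (\ref{phi1}) on every level $k\geq 2$, and the transition entries $t_{ij}^{(k)}$ take only the values $0,\frac12,1$. The core idea is that the assumption that $A\subseteq H$ is entirely rejected is restrictive enough to let one \emph{read off} from $P$ a partition $\{R_1,\ldots,R_r\}$ of some $I\subseteq\{1,\ldots,n\}$, a small anchor set $Q$ with $|Q|\leq\log n$, and a reference string $\mathbf{c}\in\{0,1\}^n$ that jointly encode a sufficient condition for the computational path of an input to be driven into the $0$-sinks.

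To extract $(Q,R_1,\ldots,R_r,\mathbf{c})$ I would scan the levels of $P$ and group consecutive queried variables into blocks according to how mass is shed along the path toward $0$-sinks. Following the intuition announced in the introduction, every block admits one of two characterizations: a \emph{CNF-like} block, where the queried variable routes inputs away from a low-weight node only on one specific value, contributes its $|R_j|$ variables to a class $R_j$ and a factor $(1-2^{-|R_j|})$ to the product in (\ref{acond}); a \emph{DNF-like} block must instead be short (otherwise $P$ would reject too much mass) and its variables are placed into the anchor set $Q$, with the threshold $\varepsilon>\frac56$ precisely what keeps $|Q|$ bounded by $\log n$. Invoking the width-3 transition constraints (\ref{phi1}) level by level, one verifies that the resulting partition satisfies (\ref{acond}) with a bound $\varepsilon'^{11}$ after absorbing all the multiplicative slack incurred during the classification; this is where the exponent $11$ enters and where the condition $\varepsilon'<\varepsilon$ gives the margin needed to round the width-3 inequalities into clean product form.

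Applying the $\varepsilon'^{11}$-richness of $A$ to this partition, this small anchor set, and $\mathbf{c}$ yields an $\mathbf{a}\in A\cap\{0,1\}^n$ with $a_i=c_i$ for $i\in Q$ and, for every $j$, some $i\in R_j$ with $a_i\neq c_i$. I then trace the computational path of $\mathbf{a}$ through $P$. By construction of $(Q,R_1,\ldots,R_r,\mathbf{c})$, the disagreements on each $R_j$ force the path of $\mathbf{a}$ off the "all-reject" trajectory at each CNF-like block. Since $P(\mathbf{a})=0$ by hypothesis, the path must nevertheless be recaptured into a rejecting node at some later level, and the only way this can happen in a width-3 program is through a bounded number of specific "wrong-turn" decisions. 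Exploiting that at most three nodes exist on each level, I show that at most three bit-flips suffice to re-route $\mathbf{a}$ into an accepting $1$-sink, producing $\mathbf{a}'\in\Omega_3(\mathbf{a})\subseteq H$ with $P(\mathbf{a}')=1$, contradicting the assumption that $H$ does not hit $P$.

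The main obstacle is the combinatorial taxonomy of the level-to-level transition matrices $T_k$ in the normalized width-3 setting. The finitely many types of transitions consistent with $t_{ij}^{(k)}\in\{0,\frac12,1\}$, $\sum_i t_{ij}^{(k)}=1$, and (\ref{phi1}) each have to be unambiguously classified as contributing either to $Q$ or to a specific $R_j$, and the contributions must telescope so that the overall rejection probability indeed dominates $\varepsilon'^{11}$ while the anchor set remains logarithmic and the $3$-neighborhood suffices for the re-routing argument uniformly across all cases. Getting this bookkeeping right, and in particular justifying that exactly $3$ bit-flips always suffice, is the technical content developed over Sections~\ref{dfpart}--\ref{indend}.
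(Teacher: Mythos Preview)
Your high-level contrapositive setup is right, but the plan misidentifies where the three key ingredients---the $3$-neighborhood, the richness condition, and the $R$/$Q$ split---actually do their work, and this is not just missing detail.

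First, the role of $H=\Omega_3(A)$ is not a single ``re-routing'' step at the end. In the paper the assumption $P(\mathbf{a}')=0$ for all $\mathbf{a}'\in\Omega_3(A)$ is invoked \emph{throughout} the structural analysis (Lemma~\ref{aboveedge}, Lemmas~\ref{mptomu}--\ref{mpc1}) to \emph{rule out} transition patterns. For instance, Lemma~\ref{aboveedge}(ii) says there cannot be two short ``switching paths'' from $v_2^{(k)}$ and $v_3^{(k)}$ to the first column, precisely because if there were, a $3$-neighbor of the witness $\mathbf{a}^{(m)}\in A$ would be accepted. The clean block structure (Figure~\ref{block}) that lets you read off $R$ and the $Q_j$ is a \emph{consequence} of these neighbor-based exclusions; you cannot first extract the partition and only afterwards use the $3$-flips.

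Second, the CNF/DNF dichotomy is not ``each block is one or the other.'' In the paper every block contributes \emph{both}: one partition class $R_b$ (from the $\mu$-to-$\nu$ segment) \emph{and} a list $Q_{b1},\ldots,Q_{bq_b}$ (from the $\omega$-to-$m$ segment). The $Q$ in the richness condition is not accumulated from ``DNF-like blocks''; rather, at the end of the recursion (Section~\ref{indend}) one proves that when the product $\Pi_{r+1}$ in (\ref{1Pir1n}) drops, some single $Q_{b^*j^*}$ must have $|Q_{b^*j^*}|\leq\log n$, and that one is taken as $Q$.

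Third, richness is not applied once. The argument is a backward induction on blocks: the invariant is the four $m$-conditions, and to propagate $m$-condition~\ref{mc4} from level $m_r$ to $m_{r+1}$ one must \emph{already} invoke richness with $Q=\emptyset$ and the classes $R_1,\ldots,R_{r+1}$ collected so far (Paragraph~\ref{recst}). Only when the recursion terminates is richness invoked a final time with a nonempty $Q$, followed by a $2$-neighbor argument (Lemma~\ref{gsp}) rather than a $3$-neighbor one. The exponent $11$ arises specifically from the inequality $(1-2^{-|R_b|-3})^{11}<1-2^{-|R_b|}$ in (\ref{to11}), where the ``$+3$'' is the loss from Lemma~\ref{lemp1p24p2} and the bound (\ref{p1om1b}); it is not a generic slack absorption.
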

\begin{proof}
Suppose a read-once branching program $P$ of width 3 with sufficiently many input
variables $n$ meets
\begin{equation}
\label{podmhitt}
\frac{\left|P^{-1}(1)\right|}{2^n}\geq\varepsilon
>\frac{5}{6}\,.
\end{equation}
We will prove that there exists $\mathbf{a}\in H$ such that $P(\mathbf{a})=1$.
On the contrary, we assume that
\begin{equation}
\label{PH0}
P(\mathbf{a})=0\quad\mbox{for every}\quad\mathbf{a}\in H\,,
\end{equation}
which will lead to a contradiction. Without loss of generality, we assume that $P$ is normalized according to Lemma~\ref{lemnorm}.
More precisely, the logical argument goes as follows. The branching program $P$ is transformed to an equivalent branching program $P_1$ which computes the same function as~$P$ (i.e.\ $P_1$ preserves (\ref{podmhitt}) and (\ref{PH0})) and has some additional property (e.g.\ $P_1$ is normalized). In the following proof, several equivalent transformations are employed one after the other in order to achieve various extra properties, which generates a sequence of branching programs $P,P_1,P_2,\ldots,P_c$. After showing that the existence of the last program $P_c$ eventually leads to a contradiction one can conclude that the original program $P$ cannot exist.

\subsection{The plan of the proof}
\label{prplan}

\begin{figure}[htbp]
\vspace*{-2mm}
\centering
\includegraphics[height=20cm]{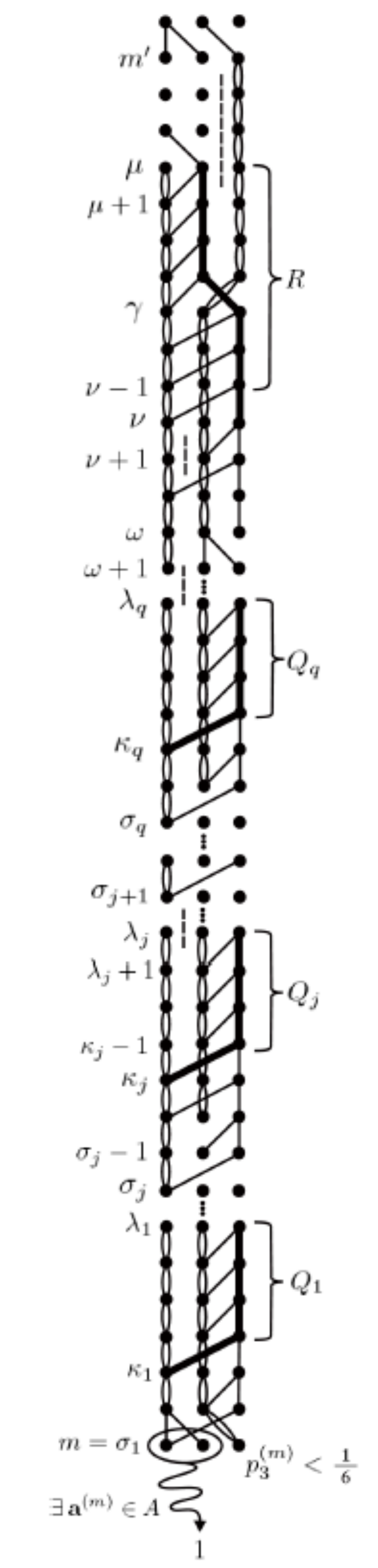}\vspace*{-3mm}
\caption{The structure of a typical block.}
\label{block}
\end{figure}

In this paragraph we will informally explain the main ideas of the proof with the pointers to the subsequent paragraphs and sections where a precise and detailed argument is given. The assumption that branching program $P$ accepts a large fraction of inputs and rejects all the inputs from candidate hitting set $H$ constrains the structure of $P$ severely.
In particular, we inspect the structure of $P$ with respect to (\ref{podmhitt}) and (\ref{PH0}), starting from its last (i.e.\ greatest) level $d$ (containing the sinks) and we proceed in the analysis step by step backwards to smaller levels\footnote{\label{lorder} Recall that we number the levels of $P$ from the least level 0 containing the source up to the greatest last level $d$ which is composed of sinks. Note that in figures, the smaller levels are situated on the top of branching program whereas the greater levels are located at the bottom.}. This analysis reveals that the structure at the end of branching program $P$ can be split into subsequent \emph{blocks} whose typical shape is schematically depicted in Figure~\ref{block} while the following \mbox{Figures~\ref{lemma2}--\ref{nutoo}} focus on particular parts of the block. As the inspection of these blocks proceeds backwards from the last level $d$ towards smaller levels, the blocks are numbered in the order reverse to that of levels, which means the first block represents the tail of~$P$.

\begin{table}[!ht]
\vspace*{-2mm}
\caption{A summary of definitions of block levels and corresponding important conditions on the distribution $p_1^{(k)},p_2^{(k)},p_3^{(k)}$ at these levels, where the notation \mbox{``$a\leq\mathbf{b}\uparrow\,\,\leq c$ :~~$C(b)$''} means $b$ is the \emph{greatest} level such that $a\leq b\leq c$ and condition $C(b)$ is satisfied (similarly, $\downarrow$ denotes the \emph{least} such level).}\label{levsum}\vspace*{-2mm}
\begin{center}
\scalebox{0.9}{
\begin{tabular}{|c||c|c|}
\hline
level & definition & conditions on the distribution\\
\hline\hline\rule{0pt}{1.5em}
$m'$ & $2\leq\mbox{\boldmath $m'$}\uparrow\,\,\leq\mu$ :~~$t_{32}^{(m')}>0$ & $p_3^{(m')}<\frac{1}{6}$\\[0.8ex]
\hline\rule{0pt}{1.5em}
$\mu$ & $2\leq{\boldsymbol\mu}\downarrow\,\,<m$ :~~$t_{11}^{(\ell)}=1$ & $p_1^{(\mu)}+p_2^{(\mu)}=p_1^{(m')}+p_2^{(m')}$\\[0.4ex]
 & for $\ell=\mu+1,\ldots,m-1$ & $p_3^{(\mu)}=p_3^{(m')}<\frac{1}{6}$\\[0.8ex]
\hline\rule{0pt}{1.5em}
$\gamma$ & $\mu\leq{\boldsymbol\gamma}\uparrow\,\,\leq\nu$ :~~$t_{12}^{(\gamma)}>0$ & \\[0.8ex]
\hline\rule{0pt}{1.5em}
$\nu$ & $\mu\leq{\boldsymbol\nu}\uparrow\,\,\leq m$ :~~$t_{12}^{(\ell)}+t_{13}^{(\ell)}>0$ & $p_1^{(\nu)}=p_1^{(\mu)}+p_2^{(\mu)}\left(1-\frac{1}{2^{|R|}}\right)$ \\[0.4ex]
 & for $\ell=\mu+1,\ldots,\nu$ & $p_3^{(\nu+1)}=\frac{p_2^{(\mu)}}{2^{|R|+1}}$\\[0.8ex]
\hline\rule{0pt}{1.5em}
$\omega$ & $\nu-1\leq{\boldsymbol\omega}\uparrow\,\,\leq m$ :~~$\exists$ double-edge path & $p_1^{(\omega+1)}\leq p_1^{(\nu)}+p_3^{(\nu+1)}= p_1^{(\mu)}+p_2^{(\mu)}\left(1-\frac{1}{2^{|R|+1}}\right)$ \\[0.4ex]
& $v_\mu,v_{\mu+1},\ldots,v_\omega$ where $v_\ell\in\left\{v_2^{(\ell)},v_3^{(\ell)}\right\}$ &
$\rightarrow\enspace$ \mbox{\boldmath $p_1^{(\omega+1)}\leq\left(p_1^{(m')}+p_2^{(m')}\right)\left(1-\frac{1}{2^{|R|+3}}\right)$}\\[0.8ex]
\hline\rule{0pt}{1.5em}
$\sigma_{j+1}$ & $\omega+1<\mbox{\boldmath $\sigma_{j+1}$}\uparrow\,\,\leq\lambda_j$ :~~ $t_{13}^{(\sigma_{j+1})}>0$ & $p_2^{(\sigma_j)}+p_3^{(\sigma_j)}\geq\left( p_2^{(\sigma_{j+1})}+p_3^{(\sigma_{j+1})}\right)\left(1-\frac{1}{2^{|Q_j|}}\right)$\\[0.8ex]
\hline\rule{0pt}{1.5em}
$\lambda_j$ & $\omega\leq\mbox{\boldmath $\lambda_j$}\downarrow\,\,<\sigma_j-1$ :~~ $t_{11}^{(\ell)}=t_{22}^{(\ell)}=1$ & $p_2^{(\lambda_q)}+p_3^{(\lambda_q)}=p_2^{(\omega+1)}+p_3^{(\omega+1)}$\\[0.4ex]
& \& $\,t_{33}^{(\ell)}=\frac{1}{2}$ for $\ell=\lambda_j+1,\ldots,\sigma_j-1$ & $p_2^{(\lambda_j)}+p_3^{(\lambda_j)}=p_2^{(\sigma_{j+1})}+p_3^{(\sigma_{j+1})}$\\[0.8ex]
\hline\rule{0pt}{1.5em}
$\kappa_j$ & $\lambda_j+1<\mbox{\boldmath $\kappa_j$}\downarrow\,\,\leq\sigma_j$ :~~ $t_{13}^{(\kappa_j)}>0$ & $p_3^{(\kappa_j-1)}=\frac{p_3^{(\lambda_j)}}{2^{|Q_j|-1}}\leq\frac{p_2^{(\lambda_j)}+p_3^{(\lambda_j)}}{2^{|Q_j|}}$\\[0.8ex]
\hline\rule{0pt}{1.5em}
$m$ & by $m$-conditions & \mbox{\boldmath $p_3^{(m)}\geq\left(p_2^{(\omega+1)}+p_3^{(\omega+1)}\right)\prod_{j=1}^q\left(1-\frac{1}{2^{|Q_j|}}\right)$}\\[0.8ex]
\hline
\end{tabular} }
\end{center}
\end{table}

Figure~\ref{block} contains various parameters denoting certain levels in $P$ which are used to describe the structure of $P$. These parameters cannot be defined formally in advance since their definitions often build on the preceding detailed analysis of $P$. Thus, the formal definitions of levels in $P$ which are indicated in boldface, are scattered below in the proof. Nevertheless, these levels are summarized in Table~\ref{levsum} which also includes the main conditions on the distribution $p_1^{(k)},p_2^{(k)},p_3^{(k)}$ of inputs among three nodes at important levels in the block. Table~\ref{levsum} as well as some details in Figure~\ref{block} which are presented for completeness, will probably be incomprehensible to a reader at this point but they can provide a useful overview when reading the following detailed proof.

\medskip
The \emph{last} (greatest) level of the block is denoted by $m$ and this \textbf{level} \mbox{\boldmath $m$} satisfies the following four so-called \emph{$m$-conditions}:
\begin{enumerate}
\item
\label{mc1}
$t_{11}^{(m)}=t_{21}^{(m)}=\frac{1}{2}$,
\item
\label{mc2}
$t_{32}^{(m)}>0$,
\item
\label{mc3}
$p_3^{(m)}<\frac{1}{6}$,
\item
\label{mc4}
there is $\mathbf{a}^{(m)}\in A$ such that if we put $\mathbf{a}^{(m)}$ at node
$v_1^{(m)}$ or $v_2^{(m)}$, then its onward computational path arrives to the
sink labeled with 1.
\end{enumerate}
Without loss of generality, these $m$-conditions can also be met for $m=d$ (Paragraph~\ref{inicasemd}) which is the last level of $P$. In particular, the sinks $v_1^{(d)}$ and $v_2^{(d)}$ are labeled with 1 according to $m$-condition~\ref{mc4}. Thus, the inspection of the structure of $P$ starts with the analysis of the first block which constitutes the tail of $P$.

\medskip
The \emph{first} (least) level of the block is denoted by $m'$ (its formal definition can be found in Paragraph~\ref{cond1-3}) which, in a typical case, proves also to satisfy the four $m'$-conditions~\ref{mc1}--\ref{mc4}. Thus the block is delimited by levels $m'$ and $m$. The shape of the block is being revealed step by step by the case analysis (Sections~\ref{dfpart} and~\ref{sbelowmu}) which starts from level $m$ and proceeds towards smaller levels down to $m'$.
We will now shortly outline the structure of a typical block as depicted in Figure~\ref{block} which results from this analysis. From the first level $m'$ through $\mu$, there is no edge between the first two columns and the third column, which means there is a double-edge path in the third column from $m'$ through $\mu$ (Paragraph~\ref{cond1-3}). Moreover, there is a double-edge path in the first column starting at level $\mu$ which leads up to level $m-1$ where it is split into vertices $v_1^{(m)}$ and $v_2^{(m)}$ at the next level $m$ (cf.\ $m$-condition~\ref{mc1}). Hence, if $\mathbf{a}^{(m)}\in A$ is redirected to the first column since layer $\mu$, then it will be accepted by $P$ according to $m$-condition~\ref{mc4}.

At the top of Figure~\ref{block}, a single-edge path from $\mu$ to $\nu$ is indicated in boldface which is used to define the partition class $R$ associated with this block (Paragraph~\ref{dfpartR}). In particular, class $R$ contains all the indices of the variables that are queried on this computational path up to level $\nu-1$. Moreover, the edge labels on this path define relevant bits of $\mathbf{c}\in\{0,1\}^n$ so that any input passing through this path that differs from $\mathbf{c}$ in at least one bit location from $R$ turns to the double-edge path in the first column and consequently enters node $v_1^{(m)}$ or $v_2^{(m)}$. This implements one CNF clause $\vee_{i\in R}\,\lnot\ell(x_i)$ from~(\ref{CNF}). Similarly, sets $Q_j$ for $j=1,\ldots,q$ associated with this block are defined (Paragraph~\ref{dfpartQ}) using the single-edge paths from $\lambda_j$ to $\kappa_j$ which are also highlighted in Figure~\ref{block} so that any input that passes through $v_3^{(\lambda_j)}$ and agrees with $\mathbf{c}$ on all the bit locations from $Q_j$ reaches the double-edge path in the first column coming in $v_1^{(m)}$ or $v_2^{(m)}$. This implements DNF monomials $\wedge_{i\in Q_j}\,\ell(x_i)$ in~(\ref{DNF&CNF}) which are candidates for the monomial $\wedge_{i\in Q}\,\ell(x_i)$ in~(\ref{CNF}). Thus, the general block structure corresponds to one CNF clause $\vee_{i\in R}\,\lnot\ell(x_i)$ for class $R$, followed by DNF monomials $\wedge_{i\in Q_j}\,\ell(x_i)$ for sets $Q_1,\ldots,Q_q$.

Under certain assumptions ((\ref{p3mu}) and (\ref{pik45})), one can show that level $m'$ satisfies \mbox{$m'$-condition~\ref{mc1}--\ref{mc3}} (Paragraph~\ref{cond1-3}). In such a case, the first level $m'=m_r$ of the current $r$th block might at the same time represent the last level of the next smaller-level $(r+1)$st block to which the structural analysis could recursively be applied (Section~\ref{recursionm}). It suffices to show that level $m'$ also meets \mbox{$m'$-condition~\ref{mc4}}.
For this purpose, the richness condition (\ref{cond}) is employed for $Q=\emptyset$ and for the partition classes $R_1,\ldots,R_r$ associated with the first $r$ blocks (that have been analyzed so far), provided that this partition satisfies (\ref{acond}). This gives an input $\mathbf{a}^{(m')}\in A$ such that for every block $j=1,\ldots,r$ there is $i\in R_j$ such that $a_i^{(m')}\not=c_i$ according to (\ref{cond}), that is, $\mathbf{a}^{(m')}$ satisfies  $\wedge_{j=1}^r\vee_{i\in R_j}\lnot\ell(x_i)$ from (\ref{CNF}). Hence, if we put this $\mathbf{a}^{(m')}$ at node $v_1^{(m')}$ or $v_2^{(m')}$ ($m'=m_r$), then the block structure in Figure~\ref{block} ensures that $\mathbf{a}^{(m')}$ also traverses $v_1^{(\mu)}$ or $v_2^{(\mu)}$ and reaches the double-edge path in the first column coming in $v_1^{(m)}$ or $v_2^{(m)}$ ($m=m_{r-1}$), by the definition of $R$ and $c_i$ for $i\in R$. This argument is applied recursively to each block $j=r,r-1,\ldots,1$ which implies that $\mathbf{a}^{(m')}$ eventually arrives to the sink $v_1^{(d)}$ or $v_2^{(d)}$ ($m_0=d$) labeled with 1. This proves the $m'$-condition~\ref{mc4} also for level $m'$.
Thus the analysis including the definition of an associated partition class $R=R_{r+1}$ and sets $Q_j=Q_{r+1,j}$ is applied recursively to the next $(r+1)$st block for $m$ replaced with $m'$ etc.

If, on the other hand, the underlying partition does not satisfy (\ref{acond}), then one can prove that there is a set $Q=Q_{bj}$ associated with the $b$th block among the first $r$ blocks (that have been analyzed so far) such that $|Q|\leq\log n$, and the recursive analysis ends (Section~\ref{indend}). In this case, the richness condition (\ref{cond}) for this set $Q$ and for partition $R_1,\ldots,R_{b-1}$ provides $\mathbf{a}\in A$ such that $a_i=c_i$ for each $i\in Q$ and for every block $j=1,\ldots,b-1$ there is $i\in R_j$ such that $a_i\not=c_i$, that is, $\mathbf{a}\in A$ satisfies $\wedge_{i\in Q}\,\ell(x_i)\,\wedge\,\wedge_{j=1}^{b-1}\vee_{i\in R_j}\lnot\ell(x_i)$ according to (\ref{CNF}). Moreover, one can show (Lemma~\ref{gsp}) that there is a \mbox{2-neighbor} $\mathbf{a}'\in\Omega_2(\mathbf{a})\subseteq H$ that differs from this $\mathbf{a}$ in at most two bits so that these bits guarantee that the computational path for $\mathbf{a}'$ in the $b$th block either reaches the double-edge path in the first column, or comes in node $v_3^{(\lambda_j)}$ (see Figure~\ref{block}). In the latter case, $\mathbf{a}'$ further traverses the path corresponding to $Q$ which reaches the double-edge path in the first column anyway by the definition of $Q_j$ and $c_i$ for $i\in Q_j$. In both cases, input $\mathbf{a}'$ traverses node $v_1^{(m_{b-1})}$ or $v_2^{(m_{b-1})}$, and by the above-mentioned recursive argument it eventually arrives to the sink $v_1^{(d)}$ or $v_2^{(d)}$ labeled with 1. This provides the desired contradiction $P(\mathbf{a}')=1$ for $\mathbf{a}'\in H$.

\subsection{The initial case of $m=d$}
\label{inicasemd}

We will first observe that the four $m-$conditions can be met
for $m=d$. Clearly, both edges outgoing from $v_1^{(d-1)}$ lead to the
sink(s) labeled with 1 since $p_1^{(d-1)}>\frac{1}{3}$ due to (\ref{phi1}) and
$|P^{-1}(0)|/2^n<\frac{1}{6}$ according to (\ref{podmhitt}). Hence, we will
assume without loss of generality that $t_{11}^{(d)}=t_{21}^{(d)}=\frac{1}{2}$
($m$-condition~\ref{mc1}) while the remaining edges that originally led to the sinks
labeled with 1 or 0 are possibly redirected to $v_1^{(d)}$ or $v_3^{(d)}$,
respectively, so that the normalization condition
$p_1^{(d)}\geq p_2^{(d)}>\frac{1}{6}>p_3^{(d)}$ ($m$-condition~\ref{mc3})
is preserved by (\ref{podmhitt}). Thus, sinks $v_1^{(d)}$ and $v_2^{(d)}$ are
labeled with 1 ($m$-condition~\ref{mc4}) whereas sink $v_3^{(d)}$ gets label~0.
Finally, we show that $t_{32}^{(d)}>0$ ($m$-condition~\ref{mc2}). On the contrary,
suppose $t_{32}^{(d)}=0$, which implies $t_{33}^{(d)}>0$ and
$H\subseteq P^{-1}(0)\subseteq M(v_3^{(d-1)})$ due to $t_{31}^{(d)}=0$.
In the case of $t_{13}^{(d)}+t_{23}^{(d)}>0$, the computational path for a 1-neighbor $\mathbf{a'}\in\Omega_1(\mathbf{a})$ of
$\mathbf{a}\in A\subseteq H\subseteq M(v_3^{(d-1)})$ that differs from
$\mathbf{a}$ in the $i$th bit that is tested at node $v_3^{(d-1)}$
(i.e.\ $v_3^{(d-1)}$ is labeled with $x_i$), would reach the sink labeled
with~1, and hence $P(\mathbf{a}')=1$ which contradicts the assumption
$H\subseteq P^{-1}(0)$. For $t_{33}^{(d)}=1$, on the other hand, we
could shorten $P$ by removing the last level $d$ while preserving its function
and condition (\ref{podmhitt}), which is in contradiction with the normalization (minimality) of $P$.
This completes the proof that $m$-conditions~\ref{mc1}--\ref{mc4} can be assumed for $m=d$
without loss of generality.

\subsection{A technical lemma}

Let \textbf{level} $\boldsymbol\mu'$ be the least level of $P$ such that $2\leq\mu'<m$
and $t_{11}^{(\ell)}=1$ for every $\ell=\mu'+1,\ldots,m-1$. We define \textbf{level}
$\boldsymbol\mu$ as
\begin{equation}
\label{defmu}
\mu=\left\{
\begin{array}{ll}
\mu'-1&\quad\mbox{if }t_{12}^{(\mu')}=1\mbox{ and }
t_{11}^{(\mu')}=t_{21}^{(\mu')}=\frac{1}{2}\\
\mu'&\quad\mbox{otherwise.}
\end{array}
\right.
\end{equation}
For the analysis of a single block structure
(Sections~\ref{suffc}--\ref{sbelowmu}, \ref{indend}),
we swap $v_1^{(\mu)}$ and $v_2^{(\mu)}$ if $\mu=\mu'-1$ for the notation simplicity
so that $t_{11}^{(\ell)}=1$ for every $\ell=\mu+1,\ldots,m-1$ at the cost of violating
condition $p_1^{(\mu)}\geq p_2^{(\mu)}$ given by (\ref{normdis}). Thus, for $\mu=\mu'-1$,
assume $p_1^{(\mu)}<p_2^{(\mu)}$, $t_{11}^{(\mu+1)}=1$, and
$t_{12}^{(\mu+1)}=t_{22}^{(\mu+1)}=\frac{1}{2}$.  For the recursion
(Section~\ref{recursionm}) when the last level $m$ in the next (smaller-level) block
may coincide with level $\mu$ of the current block we will nevertheless assume the
original node order and $p_1^{(\mu)}\geq p_2^{(\mu)}$.

\medskip
The following lemma represents a technical tool which will be used for the analysis
of the block from level $\mu$ through $m$. For this purpose, define a so-called \emph{switching path} starting from $v\in\{v_2^{(k)},v_3^{(k)}\}$ at level $k$, where $\mu\leq k<m$, to be a computational path of length at most 3 edges leading from $v$ to $v_1^{(\ell)}$ at level $\ell$ such that $k<\ell\leq\min(k+3,m)$ or possibly to $v_2^{(m)}$ for $m\leq k+3$.
\begin{lemma}~
\label{aboveedge}
\begin{enumerate}
\renewcommand\labelenumi{\theenumi}
\renewcommand{\theenumi}{(\roman{enumi})}
\item
$\mu>3$.
\item
There are no two switching paths starting from $v_2^{(k)}$ and
$v_3^{(k)}$, respectively, at any level $k$ such that $\mu\leq k<m$.
\item
If $t_{12}^{(k+1)}>0$ for some level $k$ such that $\mu\leq k<m$, then $t_{11}^{(\ell)}=t_{33}^{(\ell)}=1$,
$t_{12}^{(\ell)}=t_{22}^{(\ell)}=\frac{1}{2}$ for every $\ell=\mu+1,\ldots,k$,
and $t_{12}^{(k+1)}=\frac{1}{2}$ (see Figure~\ref{lemma2}).
\item
If $t_{13}^{(k+1)}>0$ for some level $k$ such that $\mu<k<m$, then one of
the following four cases occurs:
\begin{enumerate}
\renewcommand\labelenumii{\theenumii}
\renewcommand{\theenumii}{\arabic{enumii}.}
\item\label{c1}
$t_{11}^{(k)}=t_{23}^{(k)}=1$ and $t_{12}^{(k)}=t_{32}^{(k)}=\frac{1}{2}$,
\item\label{c2}
$t_{11}^{(k)}=t_{23}^{(k)}=1$ and $t_{22}^{(k)}=t_{32}^{(k)}=\frac{1}{2}$,
\item\label{c3}
$t_{11}^{(k)}=t_{22}^{(k)}=1$ and $t_{13}^{(k)}=t_{33}^{(k)}=\frac{1}{2}$,
\item\label{c4}
$t_{11}^{(k)}=t_{22}^{(k)}=1$ and $t_{23}^{(k)}=t_{33}^{(k)}=\frac{1}{2}$.
\end{enumerate}
In addition, if $t_{23}^{(k)}=1$ (case~1 or~2),
then $t_{11}^{(\ell)}=t_{33}^{(\ell)}=1$
and $t_{12}^{(\ell)}=t_{22}^{(\ell)}=\frac{1}{2}$ for every
$\ell=\mu+1,\ldots,k-1$ (see Figure~\ref{lemma2}).
\end{enumerate}
\end{lemma}
\begin{proof}

\vspace*{-6mm}
\begin{enumerate}
\renewcommand\labelenumi{\theenumi}
\renewcommand{\theenumi}{(\roman{enumi})}
\item
Suppose $\mu\leq 3$ and let $\mathbf{a}^{(m)}\in A$ be the input from
$m$-condition~\ref{mc4}. Then there exists a \mbox{3-neighbor}
$\mathbf{a}'\in\Omega_3(\mathbf{a}^{(m)})$ of $\mathbf{a}^{(m)}$ whose
computational path starting from source $v_1^{(0)}$ reaches $v_1^{(\mu)}$.
Hence, $P(\mathbf{a}')=1$ for $\mathbf{a}'\in H$ follows from
$M(v_1^{(\mu)})\subseteq M(v_1^{(m)})\cup M(v_2^{(m)})$ and
\mbox{$m$-condition}~\ref{mc4}, which is a contradiction, and thus $\mu>3$.
\item
Suppose there are two switching paths starting from $v_2^{(k)}$
and $v_3^{(k)}$, respectively, at some level $k$ such that $\mu\leq k<m$, and let
$\mathbf{a}^{(m)}\in A$ be the input satisfying $m$-condition~\ref{mc4}. Clearly,
$\mathbf{a}^{(m)}\not\in M(v_1^{(k)})\subseteq M(v_1^{(m)})\cup M(v_2^{(m)})$ since
otherwise $P(\mathbf{a}^{(m)})=1$ for $\mathbf{a}^{(m)}\in H$. Thus, assume
$\mathbf{a}^{(m)}\in M(v)$ for $v\in\{v_2^{(k)},v_3^{(k)}\}$. Then there is
a 3-neighbor $\mathbf{a}'\in\Omega_3(\mathbf{a}^{(m)})\cap M(v)$ of $\mathbf{a}^{(m)}$
whose computational path follows the switching path starting from $v$.
Hence, $\mathbf{a}'\in M(v_1^{(m)})\cup M(v_2^{(m)})$ implying
$P(\mathbf{a}')=1$ for $\mathbf{a}'\in H$ due to $P$ is read-once, which is a contradiction. This completes the proof of (ii).
\end{enumerate}
\begin{figure}[h]
\vspace*{-4mm}
\centering
\includegraphics[width=10.1cm]{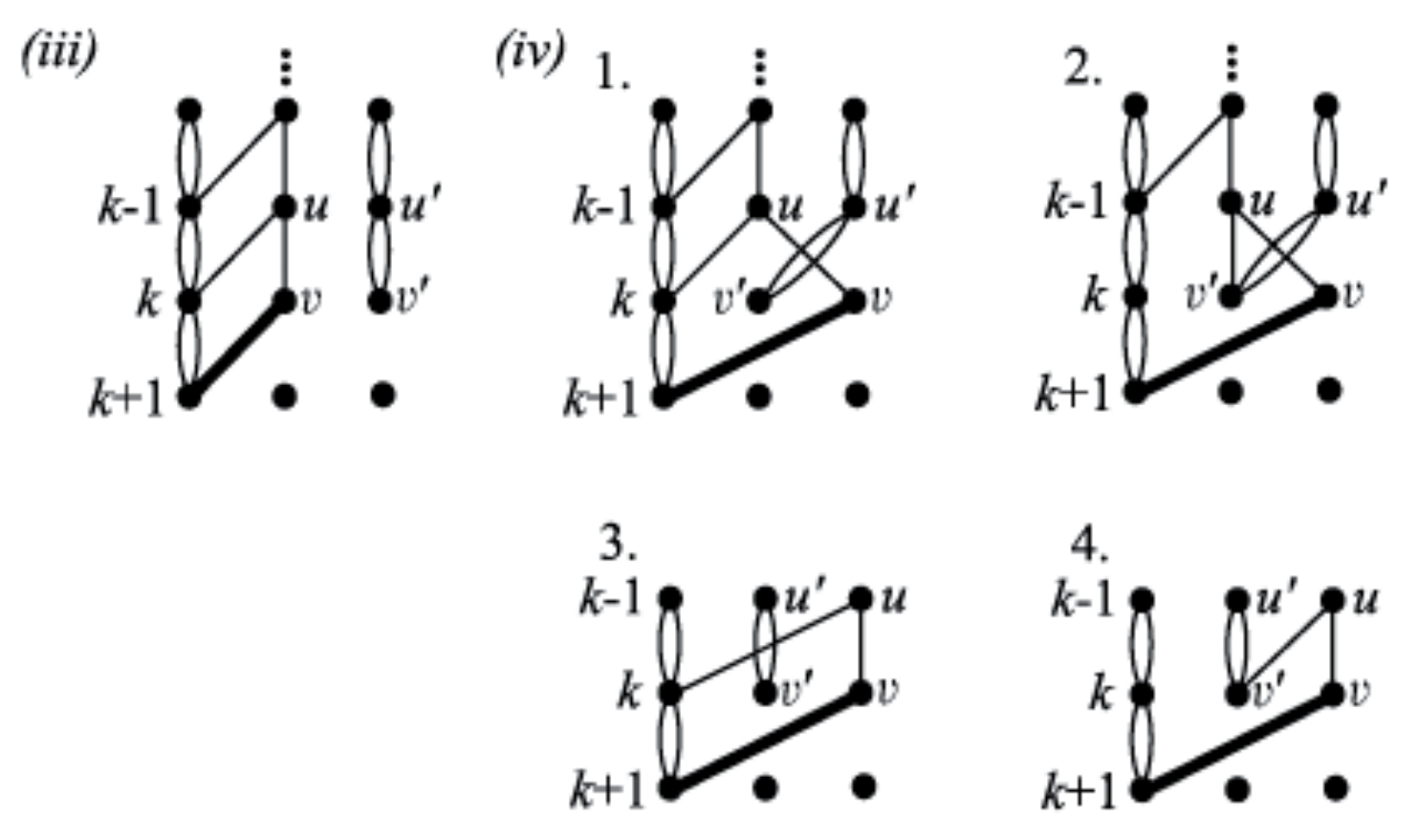}\vspace*{-2mm}
\caption{Lemma~\ref{aboveedge}.iii and iv.}
\label{lemma2}\vspace*{-1mm}
\end{figure}
As depicted in Figure~\ref{lemma2}, at level $k$ such that $\mu<k<m$,
denote by $v\in\{v_2^{(k)},v_3^{(k)}\}$ a node with the edge outgoing to $v_1^{(k+1)}$, and let $u$ be a node on level $k-1$ from which an edge
leads to $v$, while $v'\in\{v_2^{(k)},v_3^{(k)}\}\setminus\{v\}$ and
$u'\in\{v_2^{(k-1)},v_3^{(k-1)}\}\setminus\{u\}$ denote the other nodes.
It follows from (ii) there is no edge from $u'$ to $v$ nor to $v_1^{(k)}$,
which would establish two switching paths starting from
$v_2^{(k-1)}$ and $v_3^{(k-1)}$, respectively. Hence, there must be
a double edge from $u'$ to $v'$. Since $P$ is normalized, $u'=v_2^{(k-1)}$
and $v'=v_3^{(k)}$ cannot happen simultaneously. Moreover, the second
edge from $u$ may lead either to $v_1^{(k)}$ or to $v'$ if
$v'\not=v_3^{(k)}$. Now, the possible cases can be summarized:
\begin{enumerate}
\renewcommand\labelenumi{\theenumi}
\renewcommand{\theenumi}{(\roman{enumi})}
\item[(iii)]
For $t_{12}^{(k+1)}>0$ we know $v=v_2^{(k)}$ and $v'=v_3^{(k)}$, which implies
$t_{11}^{(k)}=t_{33}^{(k)}=1$ and $t_{12}^{(k)}=t_{22}^{(k)}=\frac{1}{2}$.
The proposition follows when this argument is applied recursively for $k$
replaced with $k-1$ etc. In addition, we will prove that $t_{12}^{(k+1)}<1$
for $\mu\leq k<m$. Clearly, $t_{12}^{(m)}<1$ from $m$-condition~\ref{mc2}, and
hence suppose $k<m-1$. Also for $k=\mu=\mu'-1$ we
know $t_{12}^{(\mu+1)}=\frac{1}{2}$ and thus we further assume $k\geq\mu'$.
On the contrary, suppose $t_{12}^{(k+1)}=1$ which
implies $t_{23}^{(k+1)}=t_{33}^{(k+1)}=\frac{1}{2}$.

For $k>\mu$, one could shorten $P$ by merging level $k$ with $\mu$ without changing its function as it is shown in Figure~\ref{contr}, which is a contradiction with the normalization (minimality) of~$P$.

\begin{figure}[h]
\vspace*{2mm}
\centering
\includegraphics[height=4cm]{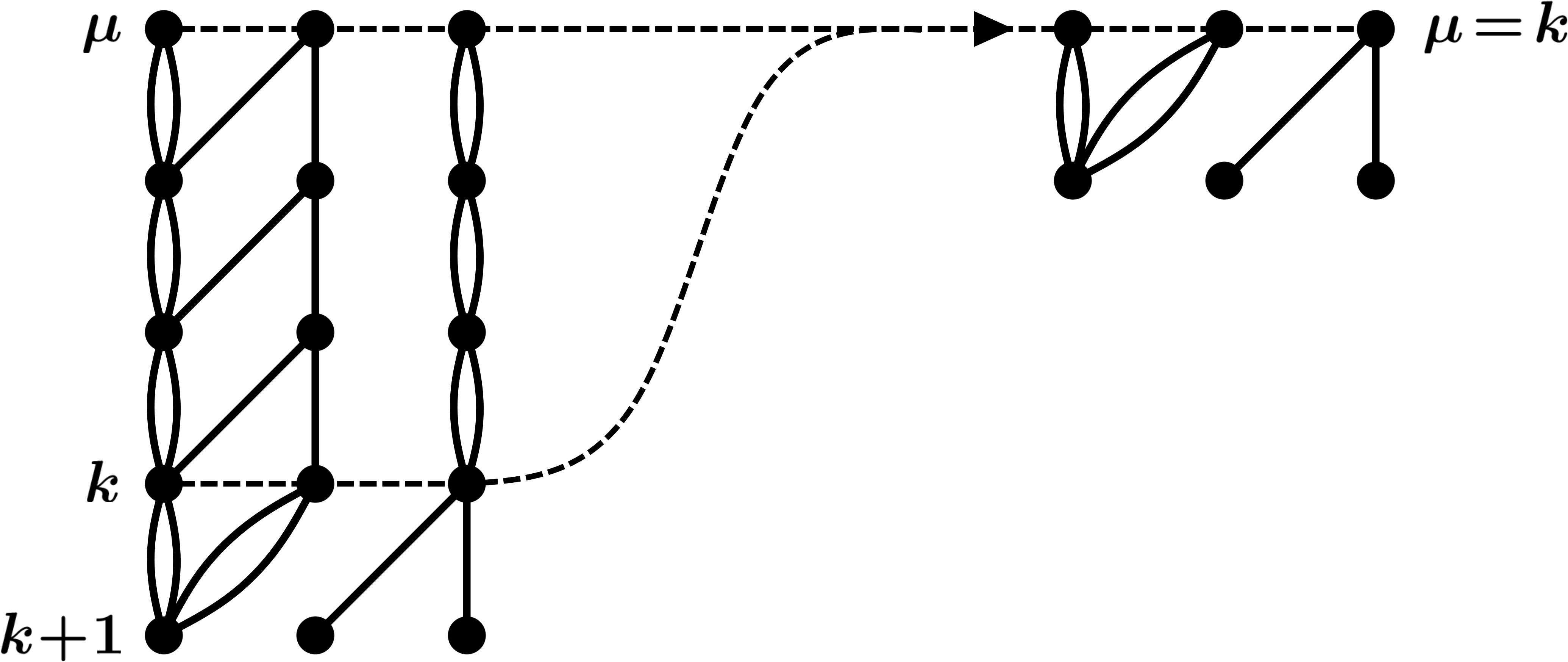}
\caption{Merging level $k$ with $\mu$ ($k>\mu$) while preserving the function of $P$.}
\label{contr}\vspace*{-2mm}
\end{figure}

For $k=\mu$, on the other hand, we know that $\mu=\mu'>3$ from the
definition of $\mu$ and we will first observe that there are at least two edges leading to $v_3^{(\mu)}$. Suppose that only one edge leads to $v_3^{(\mu)}$ from $u\in\{v_1^{(\mu-1)},v_2^{(\mu-1)},v_3^{(\mu-1)}\}$.
If $\mathbf{a}^{(m)}\not\in M(u)$, then
$\mathbf{a}^{(m)}\in M(v_1^{(\mu)})\cup M(v_2^{(\mu)})=M(v_1^{(\mu+1)})\subseteq
M(v_1^{(m)})\cup M(v_2^{(m)})$ implying $P(\mathbf{a}^{(m)})=1$ according to
$m$-condition~\ref{mc4}. If $\mathbf{a}^{(m)}\in M(u)$, then a 1-neighbor
$\mathbf{a}'\in\Omega_1(\mathbf{a}^{(m)})\cap M(u)\subseteq H$ of $\mathbf{a}^{(m)}$
exists which differs from $\mathbf{a}^{(m)}$ in the variable that is tested
at $u$ and thus $\mathbf{a}'\in M(v_1^{(\mu)})\cup M(v_2^{(\mu)})$ implying
$P(\mathbf{a}')=1$. Now, with the two edges leading to $v_3^{(\mu)}$, we could
split $v_3^{(\mu)}$ into two nodes and merge $v_1^{(\mu)}$ and $v_2^{(\mu)}$
while preserving the function of $P$. Thus, for $t_{12}^{(\mu+1)}=1$ we can construct an equivalent branching program with $t_{12}^{(\mu+1)}=0$.
\item[(iv)]
For $t_{13}^{(k+1)}>0$ we know $v=v_3^{(k)}$ and $v'=v_2^{(k)}$ and the four
cases listed in the proposition are obtained when the choice of
$u\in\{v_2^{(k-1)},v_3^{(k-1)}\}$ is combined with whether the second edge
from $u$ leads to $v_1^{(k)}$ or $v'$. In addition, the remaining part
for case~1 and~2 follows from (iii) when $k+1$ is replaced
with $k$. In particular, we know $t_{12}^{(k)}>0$ in case~1, while in case~2
there is a switching path from $v_2^{(k-1)}$ to $v_1^{(k+1)}$ via $v_3^{(k)}$
(substituting for $t_{12}^{(k)}>0$) and a similar analysis applies to
$v=v_2^{(k-1)}$ excluding two switching paths starting from
$v_2^{(k-2)}$ and $v_3^{(k-2)}$, respectively.
\end{enumerate}
\end{proof}

\section{Definition of partition class $R$ and sets $Q_1,\ldots,Q_q$}
\label{dfpart}

\subsection{The block structure from $\mu$ to $\nu$ (definition of $R$)}
\label{dfpartR}

In the following corollary, we summarize the block structure from level
$\mu$ through \textbf{level}~$\boldsymbol\nu$ by using Lemma~\ref{aboveedge},
where $\nu$ is the greatest level such that $\mu\leq\nu\leq m$ and
$t_{12}^{(\ell)}+t_{13}^{(\ell)}>0$ for every $\ell=\mu+1,\ldots,\nu$.
Note that $\nu=\mu$ for $t_{12}^{(\mu+1)}=t_{13}^{(\mu+1)}=0$.
In addition, let \textbf{level}~$\boldsymbol\gamma$ be the greatest
level such that $\mu\leq\gamma\leq\nu$ and $t_{12}^{(\gamma)}>0$ if such $\gamma$
exists, otherwise set $\gamma=\mu$.
\begin{corollary}~
\label{belowmu}
\begin{enumerate}
\item\label{podm1}
$t_{11}^{(\ell)}=t_{33}^{(\ell)}=1$ and $t_{12}^{(\ell)}=t_{22}^{(\ell)}=\frac{1}{2}$
for $\ell=\mu+1,\ldots,\gamma-1$ (Lemma~\ref{aboveedge}.iii),
\item
$t_{11}^{(\gamma)}=t_{23}^{(\gamma)}=1$ and
$t_{12}^{(\gamma)}=t_{32}^{(\gamma)}=\frac{1}{2}$ if $\mu<\gamma<\nu$ (case~1 of
Lemma~\ref{aboveedge}.iv),
\item\label{podm3}
$t_{11}^{(\ell)}=t_{22}^{(\ell)}=1$ and $t_{33}^{(\ell)}=\frac{1}{2}$
for $\ell=\gamma+1,\ldots,\nu-1$ (case~3 of Lemma~\ref{aboveedge}.iv),
\item\label{podm3-4}
if $\nu>\mu$, then $t_{12}^{(\nu)}<1$ (Lemma~\ref{aboveedge}.iii) and
$t_{13}^{(\nu)}<1$ for $\nu<m$ (similarly to $t_{12}^{(\nu)}<1$),
\item\label{podm4}
$t_{12}^{(\ell)}=0$ for $\ell=\nu+1,\ldots,m$ (Lemma~\ref{aboveedge}.iii).
\end{enumerate}
\end{corollary}
\begin{figure}[h]
\vspace*{-2mm}
\centering
\includegraphics[width=11.2cm]{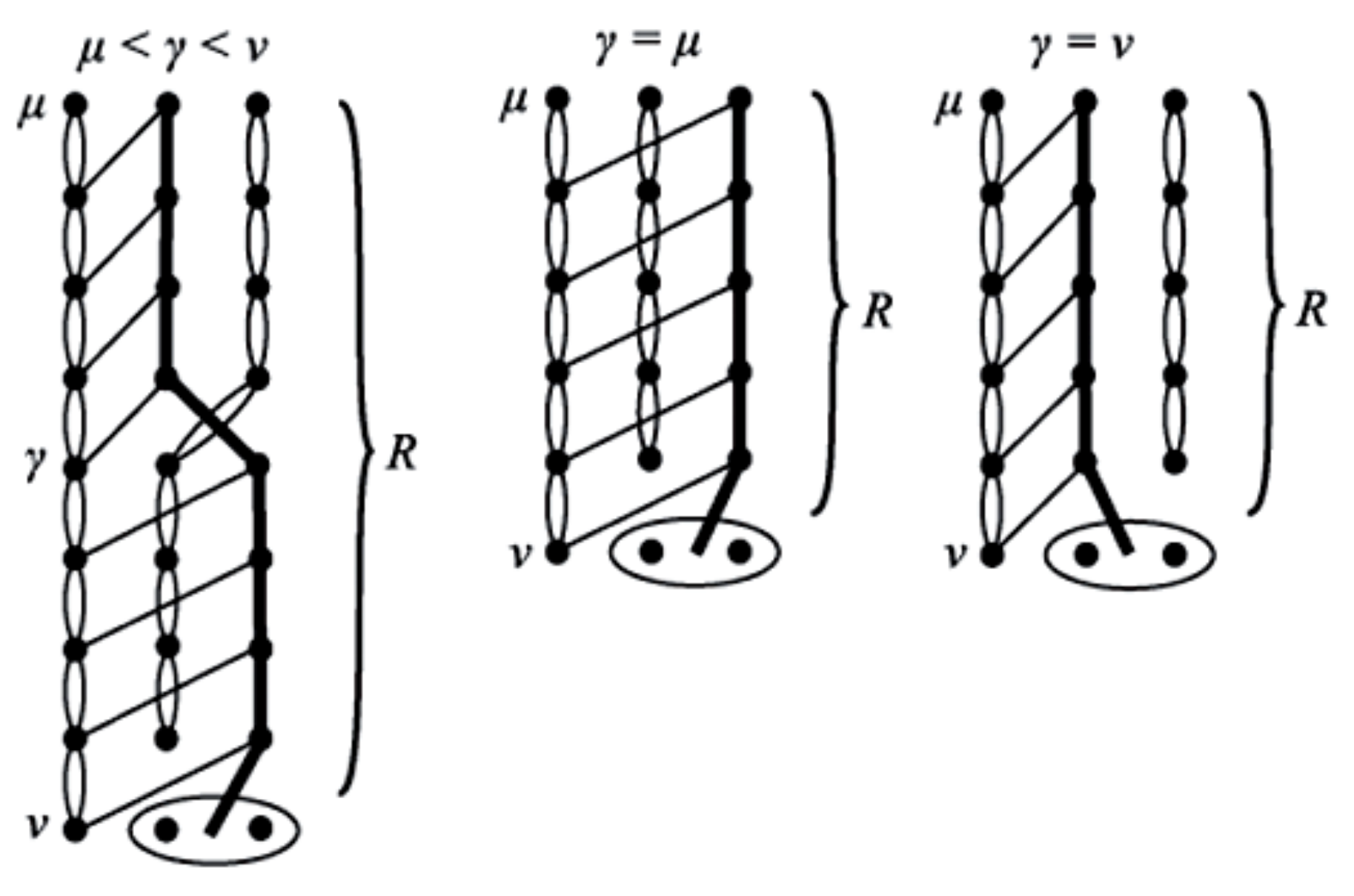}\vspace*{-2mm}
\caption{The block structure from level $\mu$ through $\nu<m$ according
to Corollary~\ref{belowmu}.}
\label{mutonu}
\end{figure}

Figure~\ref{mutonu} shows a typical structure of the block from level
$\mu$ through $\nu$ for the case of $\nu<m$, which comes out of
Corollary~\ref{belowmu}. In particular, there are two disjoint double-edge paths starting at level $\mu$. One follows the first column from $v_1^{(\mu)}$ through $v_1^{(\nu)}$. For $\mu<\gamma<\nu$, the other double-edge path starts at $v_3^{(\mu)}$, follows the third column and turns to $v_2^{(\gamma)}$ on level $\gamma$, and further continues through the second column up to $v_2^{(\nu-1)}$. For $\gamma=\mu$, this double-edge path follows only the second column leading from $v_2^{(\mu)}$ through $v_2^{(\nu-1)}$, whereas for $\gamma=\nu$, it follows the third column from $v_3^{(\mu)}$ through $v_3^{(\nu-1)}$. In addition, there is a node left on each level from $\mu$ through $\nu-1$ that does not lay on the underlying two disjoint double-edge paths. These remaining nodes are connected in a single-edge path from level $\mu$ through $\nu-1$ extended with an edge to $v_2^{(\nu)}$ or $v_3^{(\nu)}$. For each node on this single-edge path the other outgoing edge leads to the double-edge path in the first column.

Furthermore, we shortly analyze level $m$ for the special case of $\nu=m$ as depicted in Figure~\ref{nuem}. Recall that $t_{11}^{(m)}=t_{21}^{(m)}=\frac{1}{2}$ and $t_{32}^{(m)}>0$
by $m$-condition~\ref{mc1} and~\ref{mc2}, respectively. Moreover, either $t_{12}^{(m)}=\frac{1}{2}$ (i.e.\ $\nu=\gamma$) or $t_{13}^{(m)}>0$ (i.e.\ $\nu>\gamma$) by the definition of $\nu$. It follows from Lemma~\ref{aboveedge}.ii for $k=m-1$ that either $t_{33}^{(m)}=1$ for $\nu=\gamma$ or $t_{32}^{(m)}=1$ for $\nu>\gamma$, since otherwise either $t_{13}^{(m)}+t_{23}^{(m)}>0$ and $t_{12}^{(m)}=\frac{1}{2}$ for $\nu=\gamma$, or $t_{12}^{(m)}+t_{22}^{(m)}>0$ and $t_{13}^{(m)}>0$ for $\nu>\gamma$, would provide two switching paths that coincide with two edges from $v_2^{(m-1)}$ and $v_3^{(m-1)}$, respectively, leading to $v_1^{(m)}$ or $v_2^{(m)}$. In the latter case of $\nu>\gamma$, the other edge from $v_3^{(m-1)}$ may lead either to $v_3^{(m)}$ (i.e.\ $t_{13}^{(m)}=t_{33}^{(m)}=\frac{1}{2}$) or to $v_1^{(m)}$ (i.e.\ $t_{13}^{(m)}=1$) or $v_2^{(m)}$ (i.e.\ $t_{13}^{(m)}=t_{23}^{(m)}=\frac{1}{2}$). This completes the analysis of level $\nu=m$. We say that the underlying block is an \emph{empty block} if $\nu=m$ and $t_{33}^{(m)}=0$ (i.e.\ $t_{13}^{(m)}+t_{23}^{(m)}=1$ and $t_{32}^{(m)}=1$).
\begin{figure}[h]
\vspace*{-4mm}
\centering
\includegraphics[width=8.2cm]{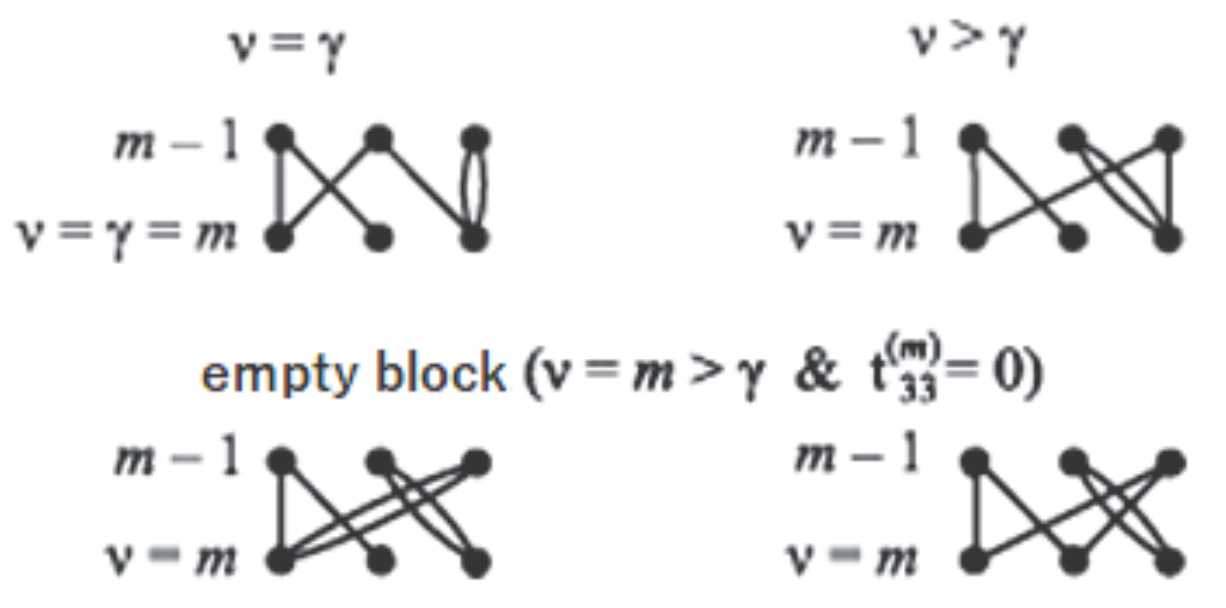}\vspace*{-2mm}
\caption{Level $m$ for $\nu=m$.}
\label{nuem}\vspace*{-1mm}
\end{figure}

Corollary~\ref{belowmu} will be used for the definition of partition class $R$ associated
with the current block, if this block is not empty, which is illustrated in Figure~\ref{mutonu}. Moreover, class $R$ is neither defined for $\nu=\mu$ when only sets
$Q_1,\ldots,Q_q$ are associated with the block (see Paragraph~\ref{dfpartQ} and
Lemma~\ref{nRyQ} in particular). Thus, for a non-empty block and $\nu>\mu$, we define the partition class $R$ to be a set of indices of the variables that are tested on the single-edge computational path $v_2^{(\mu)},v_2^{(\mu+1)},\ldots,v_2^{(\gamma-1)},$ $v_3^{(\gamma)},v_3^{(\gamma+1)},\ldots,v_3^{(\nu'-1)}$ (or $v_3^{(\mu)},v_3^{(\mu+1)},\ldots,v_3^{(\nu'-1)}$ if $\gamma=\mu$
or $v_2^{(\mu)},v_2^{(\mu+1)},\ldots,v_2^{(\nu'-1)}$ if
$\gamma=\nu$) where \textbf{level}~$\boldsymbol\nu'$ is defined as
\begin{equation}
\label{defnup}
\nu'=\min(\nu,m-1)\,.
\end{equation}
For the future use of condition (\ref{cond}) we also define relevant bits of
string $\mathbf{c}\in\{0,1\}^n$. Thus, let $c_i^R$
be the corresponding labels of the edges creating this computational path
(indicated by a bold line in Figure~\ref{mutonu}) including the edge outgoing
from the last node $v_3^{(\nu'-1)}$ (or $v_2^{(\nu'-1)}$ if $\gamma=\nu$) that
leads to \mbox{$v_2^{(\nu')}$ or $v_3^{(\nu')}$.}

\subsection{The block structure from $\omega$ to $m$
(definition of $Q_1,\ldots,Q_q$)}
\label{dfpartQ}

Furthermore, we define \textbf{level}~$\boldsymbol\omega$ to be the
greatest level such that $\mu\leq\omega\leq m$ and
there is a double-edge path from $\mu$ through $\omega$ containing only
nodes $v_\ell\in\{v_2^{(\ell)},v_3^{(\ell)}\}$ for every
$\ell=\mu,\ldots,\omega$. Note that this path possibly extends the
double-edge path from Corollary~\ref{belowmu} (see Figure~\ref{mutonu})
leading from $v_2^{(\mu)}$ to $v_2^{(\nu-1)}$ (for $\gamma=\mu<\nu$) or from
$v_3^{(\mu)}$ to $v_2^{(\nu-1)}$ (for $\mu<\gamma<\nu$) or from $v_3^{(\mu)}$
to $v_3^{(\nu-1)}$ (for $\gamma=\nu>\mu$). Hence,
\begin{equation}
\label{omnu}
\omega\geq\max(\nu-1,\mu)\,.
\end{equation}
For the special case of $\omega=m$ (including the empty block) when this
double-edge path reaches level $m$, no sets $Q_1,\ldots,Q_q$ are associated with
the current block and we set $q=0$. In this case, we will observe in the following
lemma that $\nu>\mu$, which ensures that at least class $R$ is defined
for a non-empty block (Paragraph~\ref{dfpartR}) when $\omega=m$.
\begin{lemma}
\label{nRyQ}
If $\omega=m$, then $\nu>\mu$.
\end{lemma}
\begin{proof}
On the contrary, suppose $\omega=m$ and $\nu=\mu$.
It follows from Corollary~\ref{belowmu}.5 that $t_{12}^{(m)}=0$.
Moreover, $t_{22}^{(m)}=0$ since $t_{22}^{(m)}>0$ would require
$t_{13}^{(m)}>0$ by the normalization of $P$, which contradicts
Lemma~\ref{aboveedge}.ii, and hence, $t_{32}^{(m)}=1$. If
$t_{21}^{(m-1)}+t_{31}^{(m-1)}>0$, then $p_3^{(m)}\geq p_2^{(m-1)}
\geq p_1^{(m-2)}/2>\frac{1}{6}$ due to (\ref{phi1}) which is
in contradiction to $m$-condition~\ref{mc3}. Hence, $t_{11}^{(m-1)}=1$
which means $\mu'<m-1$. Furthermore, $t_{12}^{(\mu+1)}=t_{13}^{(\mu+1)}=0$
by the definition of $\nu$ implying $\mu=\mu'$. Since $P$ is normalized,
we know $t_{22}^{(\mu+1)}>0$ and either $t_{22}^{(\mu+1)}=1$ or
$t_{23}^{(\mu+1)}=1$ due to $\omega=m$, which implies $t_{22}^{(\ell)}=1$
for $\ell=\mu+2,\ldots,m-1$. It follows that
$p_2^{(\mu+1)}\leq p_3^{(m)}<\frac{1}{6}$ according to
$m$-condition~\ref{mc3}.

On the other hand, we know
$t_{21}^{(\mu)}+t_{31}^{(\mu)}>0$ by the definition of $\mu'$, which
implies $p_2^{(\mu)}>\frac{1}{6}$ due to (\ref{phi1}).
Hence, $t_{22}^{(\mu+1)}=t_{32}^{(\mu+1)}=\frac{1}{2}$ and
$t_{23}^{(\mu+1)}=1$ because of $p_2^{(\mu+1)}<\frac{1}{6}$. This
ensures $t_{11}^{(\mu)}=t_{21}^{(\mu)}=\frac{1}{2}$
since $p_1^{(\mu-1)}>\frac{1}{3}$. Thus,
$\frac{1}{6}>p_2^{(\mu+1)}>p_1^{(\mu-1)}/4$ which rewrites as
$p_1^{(\mu-1)}<\frac{2}{3}$ implying
$p_2^{(\mu-1)}+p_3^{(\mu-1)}>\frac{1}{3}$, and hence
$p_2^{(\mu-1)}>\frac{1}{6}$ due to $p_2^{(\mu-1)}\geq p_3^{(\mu-1)}$.
Clearly, $t_{32}^{(\mu)}=0$ since otherwise we get
a contradiction $\frac{1}{6}>p_2^{(\mu+1)}\geq
\frac{1}{4}p_1^{(\mu-1)}+\frac{1}{2}p_2^{(\mu-1)}>
\frac{1}{4}\cdot\frac{1}{3}+\frac{1}{2}\cdot\frac{1}{6}=\frac{1}{6}$.
Similarly, $t_{22}^{(\mu)}=1$ produces a contradiction $\frac{1}{6}>
p_2^{(\mu+1)}\geq\frac{1}{4}p_1^{(\mu-1)}+\frac{1}{2}p_2^{(\mu-1)}
>\frac{1}{6}$. It follows that $t_{12}^{(\mu)}>0$ whereas
$t_{12}^{(\mu)}=1$ contradicts $\mu=\mu'$ according to (\ref{defmu}),
and hence $t_{12}^{(\mu)}=t_{22}^{(\mu)}=\frac{1}{2}$ and
$t_{33}^{(\mu)}>0$. This gives a contradiction $\frac{1}{6}>
p_2^{(\mu+1)}\geq\frac{1}{4}(p_1^{(\mu-1)}+p_2^{(\mu-1)})
+\frac{1}{2}p_3^{(\mu-1)}>\frac{1}{4}(p_1^{(\mu-1)}
+p_2^{(\mu-1)}+p_3^{(\mu-1)})=\frac{1}{4}$.
\end{proof}

Hereafter in this Section~\ref{dfpart}, we will consider only the case of $\mathbf{\boldsymbol\omega<m}$ which allows the definition of sets $Q_1,\ldots,Q_q$, while in the following sections, the case of $\omega=m$ is also taken into account. This implies $t_{12}^{(m)}=0$ since otherwise $t_{12}^{(m)}=t_{32}^{(m)}=\frac{1}{2}$ ($m$-condition~\ref{mc2}) forces $t_{33}^{(m)}=1$ by Lemma~\ref{aboveedge}.ii which would prolong the double-edge path from $v_3^{(\mu)}$ up to $v_3^{(m)}$ according to Lemma~\ref{aboveedge}.iii.

We will show that one can assume $t_{13}^{(m)}>0$
without loss of generality. Suppose that $t_{13}^{(m)}=0$, which implies $t_{22}^{(m)}=t_{23}^{(m)}=0$ due to $P$ is normalized, and hence $t_{32}^{(m)}=t_{33}^{(m)}=1$. Moreover, we know $t_{11}^{(m)}=t_{21}^{(m)}=\frac{1}{2}$ by $m$-condition~\ref{mc1} and $m$-condition~\ref{mc3} ensures $t_{11}^{(m-1)}=1$.
If~$t_{12}^{(m-1)}=t_{13}^{(m-1)}=0$, then $v_2^{(m-1)}$ and $v_3^{(m-1)}$
can be merged and replaced by $v_3^{(m)}$, while $v_1^{(m-1)}$ replaces
$v_1^{(m-2)}$, which shortens $P$ without changing its function. Hence,
either $t_{12}^{(m-1)}>0$ or $t_{13}^{(m-1)}>0$ by Lemma~\ref{aboveedge}.ii.
In fact, $t_{12}^{(m-1)}>0$ contradicts $\omega<m$ according to
Lemma~\ref{aboveedge}.iii since
\linebreak
$t_{23}^{(m-1)}+t_{33}^{(m-1)}=t_{32}^{(m)}=t_{33}^{(m)}=1$
can, without loss of generality, prolong the double-edge path from $v_3^{(\mu)}$
through $v_3^{(m-2)}$ up to $v_3^{(m)}$. For $t_{13}^{(m-1)}>0$, on the other hand,
$v_2^{(m-1)}$ and $v_3^{(m-1)}$ can be merged while
$v_1^{(m-1)}$ is split into two its copies, which produces
$t_{11}^{(m-1)}=t_{21}^{(m-1)}=\frac{1}{2}$, $t_{32}^{(m-1)}=1$, and
$t_{11}^{(m)}=t_{21}^{(m)}=t_{12}^{(m)}=t_{22}^{(m)}=\frac{1}{2}$, $t_{33}^{(m)}=1$.
After this modification, level $m-1$ satisfies the four
$m$-conditions~\ref{mc1}--\ref{mc4} (see Paragraph~\ref{prplan}) and thus, it
can serve as a new level $m$ while the original level $m>d$ (for $m=d$ program $P$
could be shortened by removing its last level) is included in the previous greater-level
neighboring block, which is consistent with its structure (see
Paragraph~\ref{cond1-3} and Figure~\ref{fmptomu} in particular).

Thus, we assume
$t_{13}^{(m)}>0$ without loss of generality, which implies $t_{32}^{(m)}=1$ by
Lemma~\ref{aboveedge}.ii and $t_{11}^{(m-1)}=1$ according to
$m$-condition~\ref{mc3}. Then Lemma~\ref{aboveedge}.iv can be
employed for $k=m-1$ where only case~3 and~4 may occur due
to $\omega<m$ is assumed, which even implies $\omega<m-1$. In case~3,
$t_{13}^{(m-1)}>0$ and Lemma~\ref{aboveedge}.iv can again be applied recursively
to $k=m-2$ etc.

In general, we start with \textbf{level}~$\mathbf{\boldsymbol\sigma_1}\!=\!m$ that meets
$t_{13}^{(\sigma_j)}\!>\!0$ for $j=1$. We proceed to smaller~levels and inspect
recursively the structure of subblocks indexed as $j$ from \textbf{level}~$\mathbf{\boldsymbol\lambda_j}$ through $\sigma_j$ where $\lambda_j$
is the least level such that $\omega\leq\lambda_j<\sigma_j-1$ and the
transitions from case~3 or~4 of Lemma~\ref{aboveedge}.iv occur for all
levels $\ell\!=\!\lambda_j+1,\ldots,\sigma_j-1$ as depicted
in Figure~\ref{nutom}. This means $t_{11}^{(\ell)}\!=\!t_{22}^{(\ell)}\!=\!1$ and $t_{33}^{(\ell)}\!=\!\frac{1}{2}$ for every $\ell=\lambda_j+1,\ldots,\sigma_j-1$.
Note that $\lambda_j>\mu$ because $\lambda_j=\mu$ ensures $t_{22}^{(\mu+1)}=1$
implying $\omega>\mu=\lambda_j$ by the definition of $\omega$, which contradicts
$\omega\leq\lambda_j$. In addition, we will observe that case~4 from
Lemma~\ref{aboveedge}.iv occurs at level $\lambda_j+1$, that is
$t_{23}^{(\lambda_j+1)}=\frac{1}{2}$. On the contrary, suppose that
$t_{13}^{(\lambda_j+1)}=\frac{1}{2}$ (case~3). For $\lambda_j>\omega$,
this means case~1 or~2 occurs at level $\lambda_j>\mu$ by the
definition of $\lambda_j$, which would be in contradiction to
$\omega\leq\lambda_j$ according to Lemma~\ref{aboveedge}.iv. For
$\lambda_j=\omega$, on the other hand, $t_{13}^{(\omega+1)}=\frac{1}{2}$
contradicts the definition of $\omega$ by Lemma~\ref{aboveedge}.iv.
This completes the argument for $t_{23}^{(\lambda_j+1)}=\frac{1}{2}$.

Furthermore, let \textbf{level}~${\mathbf{\boldsymbol\kappa_j}}$
be the least level such that $\lambda_j+1<\kappa_j\leq \sigma_j$ and
$t_{13}^{(\kappa_j)}>0$, which exists since at least $t_{13}^{(\sigma_j)}>0$.
Now we can define $Q_j$ associated with the current block
(a candidate for Q in the richness condition (\ref{cond})) to be a set of
indices of the variables that are tested on the computational path
$v_3^{(\lambda_j)},v_3^{(\lambda_j+1)},\ldots,v_3^{(\kappa_j-1)}$, and let
$c_i^{Q_j}$ be the corresponding labels of the edges creating this path
including the edge from $v_3^{(\kappa_j-1)}$ to $v_1^{(\kappa_j)}$
(indicated by a bold line in Figure~\ref{nutom}). This extends
the definition of $\mathbf{c}\in\{0,1\}^n$ associated with $R$ and $Q_k$
for $1\leq k<j$, which are usually pairwise disjoint due to $P$ is read-once.
Nevertheless, the definition of $\mathbf{c}$ may not be unique for indices
from their nonempty intersections in some very special cases (including those
corresponding to neighboring blocks) but the richness condition will only be
used for provably disjoint sets (see Section~\ref{recursionm}).

\medskip
Finally, define next
\textbf{level}~${\mathbf{\boldsymbol\sigma_{j+1}}}$ to be the greatest level such
that $\omega+1<\sigma_{j+1}\leq\lambda_j$ and $t_{13}^{(\sigma_{j+1})}>0$, if such
$\sigma_{j+1}$ exists, and continue in the recursive definition of
$\lambda_{j+1},\kappa_{j+1},Q_{j+1}$ with $j$ replaced by $j+1$ etc. If such
$\sigma_{j+1}$ does not exist, then set $q=j$ and the definition
of sets $Q_1,\ldots,Q_q$ associated with the current block is complete.

\begin{figure}[ht]
\centering
\includegraphics[height=12.9cm]{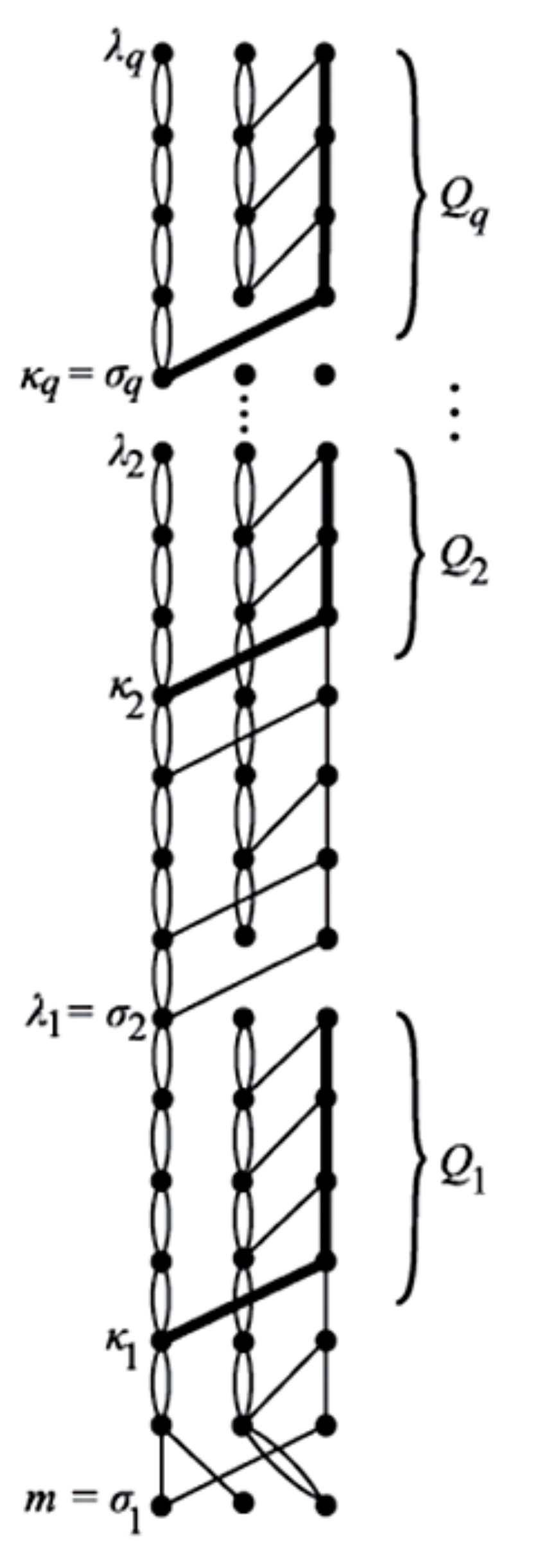}\vspace*{-2mm}
\caption{The definition of $Q_1,\ldots,Q_q$.}
\label{nutom}\vspace*{-3mm}
\end{figure}

\subsection{An upper bound on $p_1^{(m)}+p_2^{(m)}$ in terms of $p_1^{(\omega+1)}$}
\label{ubp1mp2mp1o}

In this paragraph, we will upper-bound $p_1^{(m)}+p_2^{(m)}$ in terms of
$p_1^{(\omega+1)}$ which will later be used for verifying the condition (\ref{acond}).
For any $1\leq j\leq q$, we know that
$t_{11}^{(\ell)}=t_{22}^{(\ell)}=1$ and
$t_{23}^{(\ell)}=t_{33}^{(\ell)}=\frac{1}{2}$ for every
$\ell=\lambda_j+1,\ldots,\kappa_j-1$ (see Figure~\ref{nutom}), which gives
\begin{eqnarray}
\label{kaplam}
p_2^{(\kappa_j-1)}+p_3^{(\kappa_j-1)}&=&
p_2^{(\lambda_j)}+p_3^{(\lambda_j)}\,,\\
\label{p3kapj1}
p_3^{(\kappa_j-1)}=\frac{p_3^{(\lambda_j)}}{2^{|Q_j|-1}}&\leq&
\frac{p_2^{(\lambda_j)}+p_3^{(\lambda_j)}}{2^{|Q_j|}}
\end{eqnarray}
because $p_3^{(\lambda_j)}\leq\frac{1}{2}(p_2^{(\lambda_j)}+p_3^{(\lambda_j)})$.
We know $t_{12}^{(\ell)}=0$ for every $\ell=\omega+2,\ldots,m$ according to
Corollary~\ref{belowmu}.\ref{podm4} where $\nu+1\leq\omega+2$ by (\ref{omnu}).
Moreover, it follows from the definition of
$\sigma_{j+1}>\omega+1$ that $t_{13}^{(\ell)}=0$ for every
$\ell=\sigma_{j+1}+1,\ldots,\lambda_j$ for any $1\leq j<q$, and
$t_{13}^{(\ell)}=0$ for every $\ell=\omega+2,\ldots,\lambda_q$. Hence,
\begin{equation}
\label{lammj}
p_2^{(\sigma_{j+1})}+p_3^{(\sigma_{j+1})}=p_2^{(\lambda_j)}+p_3^{(\lambda_j)}
=p_2^{(\kappa_j-1)}+p_3^{(\kappa_j-1)}
\end{equation}
for $1\leq j<q$ and
\begin{equation}
\label{lamom}
p_2^{(\omega+1)}+p_3^{(\omega+1)}=p_2^{(\lambda_q)}+p_3^{(\lambda_q)}
=p_2^{(\kappa_q-1)}+p_3^{(\kappa_q-1)}
\end{equation}
according to (\ref{kaplam}). Note that equation (\ref{lamom}) holds trivially
for $\lambda_q=\omega+1$ and it is also valid for
$\lambda_q=\omega$ (recall $\lambda_q\geq\omega$ from the definition of
$\lambda_j$) because $t_{11}^{(\lambda_q+1)}=t_{22}^{(\lambda_q+1)}=1$
and $t_{23}^{(\lambda_q+1)}=t_{33}^{(\lambda_q+1)}=\frac{1}{2}$
(case~4 of Lemma~\ref{aboveedge}.iv).
Furthermore, we know $t_{22}^{(\ell)}=1$ for every
$\ell=\kappa_j,\ldots,\sigma_j-1$ and $t_{12}^{(\sigma_j)}=0$, which
implies
\begin{eqnarray}
\label{p2p3mj1}
p_2^{(\sigma_j)}+p_3^{(\sigma_j)}&\geq&
p_2^{(\kappa_j-1)}+p_3^{(\kappa_j-1)}-p_3^{(\kappa_j-1)}\nonumber\\
&\geq&p_2^{(\kappa_j-1)}+p_3^{(\kappa_j-1)}
-\frac{p_2^{(\lambda_j)}+p_3^{(\lambda_j)}}{2^{|Q_j|}}\nonumber\\
&=&\left(p_2^{(\sigma_{j+1})}+p_3^{(\sigma_{j+1})}\right)
\left(1-\frac{1}{2^{|Q_j|}}\right)
\end{eqnarray}
for $1<j<q$ according to (\ref{p3kapj1}) and (\ref{lammj}), while formula
(\ref{p2p3mj1}) reads
\begin{equation}
\label{p2p3m2}
p_3^{(m)}=p_3^{(\sigma_1)}\geq\left(p_2^{(\sigma_2)}+p_3^{(\sigma_2)}\right)
\left(1-\frac{1}{2^{|Q_1|}}\right)
\end{equation}
for $j=1<q$ due to $t_{32}^{(m)}=1$, whereas (\ref{p2p3mj1}) is rewritten as
\begin{equation}
\label{p2p3lamp}
p_2^{(\sigma_q)}+p_3^{(\sigma_q)}\geq
\left(p_2^{(\omega+1)}+p_3^{(\omega+1)}\right)
\left(1-\frac{1}{2^{|Q_q|}}\right)
\end{equation}
for $j=q>1$ according to (\ref{lamom}). Thus starting with (\ref{p2p3m2}),
inequality (\ref{p2p3mj1}) is applied recursively for $j=2,\ldots,q-1$,
and, in the end, formula (\ref{p2p3lamp}) is employed, leading to
\begin{equation}
\label{p3mr1}
p_3^{(m)}\geq\left(p_2^{(\omega+1)}+p_3^{(\omega+1)}\right)
\prod_{j=1}^q\left(1-\frac{1}{2^{|Q_j|}}\right)
\end{equation}
which is also obviously valid for the special case of $q=1$. This can be
rewritten as
\begin{equation}
\label{p1p2m}
p_1^{(m)}+p_2^{(m)}\leq 1-\left(1-p_1^{(\omega+1)}\right)
\prod_{j=1}^q\left(1-\frac{1}{2^{|Q_j|}}\right)
\end{equation}
which represents the desired upper bound on $p_1^{(m)}+p_2^{(m)}$ in terms of
$p_1^{(\omega+1)}$.

\section{The conditional block structure before level $\mu$}
\label{sbelowmu}

\subsection{Assumptions and level $\mu+1$}

Throughout this Section~\ref{sbelowmu}, we will assume
\begin{eqnarray}
\label{p3mu}
p_3^{(\mu)}&<&\frac{1}{6}\,,\\
\label{pik45}
\prod_{j=1}^q\left(1-\frac{1}{2^{|Q_j|}}\right)&>&\frac{2}{3}
\end{eqnarray}
where the product in (\ref{pik45}) equals 1 for $q=0$. Based on these
assumption, we will further analyze the block structure
before level $\mu$ in order
to satisfy the $m'$-conditions~\mbox{\ref{mc1}--\ref{mc4}}
(see Paragraph~\ref{prplan}) also for the first block level $m'$ (the formal
definition of $m'$ appears at the beginning of Paragraph~\ref{cond1-3})
so that the analysis can be applied recursively when inequalities
(\ref{p3mu}) and (\ref{pik45}) hold (Section~\ref{recursionm}). For this purpose,
we still analyze level $\mu+1$ in the following lemma which implies $\nu>\mu$ and
thus guarantees that partition class $R$ is defined for the underlying block if not
empty.
\begin{lemma}
\label{levelmu}
$t_{12}^{(\mu+1)}=\frac{1}{2}$.
\end{lemma}
\begin{proof}
For $\mu=\mu'-1$, the proposition follows from the definition of $\mu$,
and thus assume $\mu=\mu'$. Consider first the special case of $\mu+1=m$
and on the contrary suppose $t_{12}^{(m)}=0$ which implies $t_{32}^{(m)}=1$ by using $m$-condition~\ref{mc2}, Lemma~\ref{aboveedge}.ii, and the normalization of $P$. It follows from $t_{32}^{(m)}=1$ that $p_3^{(m)}\geq p_2^{(m-1)}$. We have $t_{11}^{(m-1)}<1$ from the definition of $\mu'$, which means either $t_{21}^{(m-1)}>0$ implying $p_2^{(m-1)}\geq\frac{1}{2}p_1^{(m-2)}$, or
$t_{31}^{(m-1)}>0$ giving the same $p_2^{(m-1)}\geq p_3^{(m-1)}\geq\frac{1}{2}p_1^{(m-2)}$ by the normalization of $P$. Moreover, we know $p_3^{(m)}<\frac{1}{6}$ from $m$-condition~\ref{mc3}, and $\frac{1}{2}p_1^{(m-2)}>\frac{1}{6}$ due to (\ref{phi1}). Altogether, we get the contradiction $\frac{1}{6}>p_3^{(m)}\geq
p_2^{(m-1)}\geq\frac{1}{2}p_1^{(m-2)}>\frac{1}{6}$. Thus further assume $\mu<m-1$. Clearly,
$t_{32}^{(\mu+1)}<1$ by the normalization of $P$ whereas $t_{33}^{(\mu+1)}=1$
implies $t_{12}^{(\mu+1)}=\frac{1}{2}$, and thus, further consider the case
when no double edge leads to $v_3^{(\mu+1)}$. If $t_{12}^{(\mu+1)}>0$,
then $t_{12}^{(\mu+1)}=\frac{1}{2}$ by Lemma~\ref{aboveedge}.iii for
$k=\mu$. On the contrary, suppose $t_{12}^{(\mu+1)}=0$, which gives
$t_{22}^{(\mu+1)}>0$ due to $t_{32}^{(\mu+1)}<1$.
Assumption (\ref{p3mu}) ensures $t_{31}^{(\mu)}=0$
which implies $t_{21}^{(\mu)}>0$ by the definition of $\mu'$.

\medskip
We will first show that
\begin{equation}
\label{p2m114}
p_2^{(\mu+1)}<\frac{1}{4}\,.
\end{equation}
For $\omega<m$, assumption (\ref{pik45}) together with $m$-condition~\ref{mc3} ensures
\begin{equation}
\label{p2om1p3om1548}
p_2^{(\omega+1)}+p_3^{(\omega+1)}<\frac{1}{4}
\end{equation}
according to (\ref{p3mr1}), which gives (\ref{p2m114}) for $\omega=\mu$.
For $\omega>\mu$, we know by the definition of $\omega$
that there is a double-edge path starting from $v_2^{(\mu)}$
or $v_3^{(\mu)}$ and traversing $v_2^{(\mu+1)}$ as we assume no double edge
to $v_3^{(\mu+1)}$. For $\omega<m$, we have
$t_{22}^{(\ell)}=1$ for $\ell=\mu+2,\ldots,\omega$, and $t_{12}^{(\omega+1)}=0$
according to Lemma~\ref{aboveedge}.iii, and hence,
$p_2^{(\mu+1)}\leq p_2^{(\omega+1)}+p_3^{(\omega+1)}<\frac{1}{4}$ due to
(\ref{p2om1p3om1548}). Similarly, $p_2^{(\mu+1)}\leq p_3^{(m)}<\frac{1}{6}$
for $\omega=m$ by $m$-condition~\ref{mc3}, which
\mbox{completes the argument for (\ref{p2m114}).}

\medskip
Suppose first that $t_{21}^{(\mu)}=1$, which together with
$p_1^{(\mu-1)}>\frac{1}{3}$ implies $t_{22}^{(\mu+1)}=t_{32}^{(\mu+1)}=\frac{1}{2}$
according to (\ref{p2m114}). Obviously,
$\frac{1}{2}<t_{12}^{(\mu)}+t_{13}^{(\mu)}<2$ by the normalization of $P$.
For $t_{12}^{(\mu)}+t_{13}^{(\mu)}=1$, either $t_{12}^{(\mu)}=t_{33}^{(\mu)}=1$
or $t_{32}^{(\mu)}=t_{13}^{(\mu)}=1$ when $P$ could be shortened without changing
its function, or $t_{12}^{(\mu)}=t_{13}^{(\mu)}=t_{32}^{(\mu)}=t_{33}^{(\mu)}=\frac{1}{2}$
implying $p_1^{(\mu)}=p_2^{(\mu)}=p_3^{(\mu)}=\frac{1}{3}$ which contradicts (\ref{phi1}).
Hence, $t_{12}^{(\mu)}+t_{13}^{(\mu)}=\frac{3}{2}$.
Denote $i\in\{2,3\}$ so that $t_{1i}^{(\mu)}=1$ whereas $j\in\{2,3\}$
satisfies $t_{1j}^{(\mu)}=t_{3j}^{(\mu)}=\frac{1}{2}$.
If $t_{13}^{(\mu+1)}=1$, then we could shorten $P$ while preserving its
function, and hence $t_{23}^{(\mu+1)}>0$ due to $t_{33}^{(\mu+1)}<1$. It follows that
$p_2^{(\mu+1)}\geq\frac{1}{2}p_1^{(\mu-1)}+\frac{1}{4}p_j^{(\mu-1)}=
\frac{1}{4}(2p_1^{(\mu-1)}+p_j^{(\mu-1)})=
\frac{1}{4}(1-p_i^{(\mu-1)}+p_1^{(\mu-1)})\geq\frac{1}{4}$ which
contradicts (\ref{p2m114}). Hence,
$t_{11}^{(\mu)}=t_{21}^{(\mu)}=\frac{1}{2}$ due to $t_{11}^{(\mu)}<1$ and
$t_{31}^{(\mu)}=0$, which implies $t_{12}^{(\mu)}<1$ since $\mu=\mu'$.

In addition, there are no `switching paths' (cf. Lemma~\ref{aboveedge}.ii)
starting simultaneously from all three vertices
$v_1^{(\mu-1)},v_2^{(\mu-1)},v_3^{(\mu-1)}$ and leading to $v_1^{(\mu)}$ or
$v_1^{(\mu+1)}$ since otherwise a 2-neighbor
$\mathbf{a}'\in\Omega_2(\mathbf{a}^{(m)})\cap M(v_i^{(\mu-1)})\subseteq H$ of
$\mathbf{a}^{(m)}\in M(v_i^{(\mu-1)})$ from
$m$-condition~\ref{mc4} would exist for some $i\in\{1,2,3\}$ such that
$\mathbf{a}'\in M(v_1^{(\mu+1)})$ implying $P(\mathbf{a}')=1$.
Recall we still need to contradict $t_{12}^{(\mu+1)}=0$, provided that
$t_{11}^{(\mu)}=t_{21}^{(\mu)}=\frac{1}{2}$,
$t_{12}^{(\mu)}<1$, $t_{11}^{(\mu+1)}=1$, $t_{22}^{(\mu+1)}>0$, and
$t_{33}^{(\mu+1)}<1$.

\medskip
We will first consider the case of $t_{12}^{(\mu)}=\frac{1}{2}$ which
implies $t_{13}^{(\mu)}=0$ since three switching paths starting
from level $\mu-1$ are excluded. Suppose that $t_{33}^{(\mu)}>0$ which also
gives $t_{13}^{(\mu+1)}=0$ because of ruling out the three switching paths,
and hence $t_{23}^{(\mu+1)}>0$ due to $t_{33}^{(\mu+1)}<1$. In addition, we
know $t_{22}^{(\mu)}+t_{32}^{(\mu)}=\frac{1}{2}$ since we assume
$t_{12}^{(\mu)}=\frac{1}{2}$.
It follows that $p_2^{(\mu+1)}\geq\frac{1}{4}p_1^{(\mu-1)}+
\frac{1}{4}p_2^{(\mu-1)}+\frac{1}{4}p_3^{(\mu-1)}=\frac{1}{4}$
which contradicts (\ref{p2m114}). Hence, $t_{33}^{(\mu)}=0$
implying $t_{23}^{(\mu)}=1$ due to $t_{13}^{(\mu)}=0$, which gives
$t_{32}^{(\mu)}=\frac{1}{2}$. For $t_{23}^{(\mu+1)}>0$, we would again get
a contradiction $p_2^{(\mu+1)}\geq\frac{1}{4}p_1^{(\mu-1)}+
\frac{1}{4}p_2^{(\mu-1)}+\frac{1}{2}p_3^{(\mu-1)}\geq\frac{1}{4}$,
and hence we have $t_{23}^{(\mu+1)}=0$ and $t_{13}^{(\mu+1)}>0$ because of
$t_{33}^{(\mu+1)}<1$. We can assume without loss of
generality that $t_{13}^{(\mu+1)}=\frac{1}{2}$
since otherwise $t_{12}^{(\mu)}=t_{32}^{(\mu)}=\frac{1}{2}$ and
$t_{13}^{(\mu+1)}=1$ (implying $t_{22}^{(\mu+1)}=t_{32}^{(\mu+1)}=\frac{1}{2}$)
could be replaced with $t_{12}^{(\mu)}=1$ while $t_{23}^{(\mu)}=1$
is replaced with
$t_{22}^{(\mu)}=t_{32}^{(\mu)}=t_{23}^{(\mu)}=t_{33}^{(\mu)}=\frac{1}{2}$
and $t_{23}^{(\mu+1)}=t_{33}^{(\mu+1)}=\frac{1}{2}$ where $v_3^{(\mu)}$
is a copy of $v_2^{(\mu)}$, which redefines level $\mu$.
Thus, it follows from $t_{13}^{(\mu+1)}=\frac{1}{2}$ and $t_{23}^{(\mu+1)}=0$
that $t_{33}^{(\mu+1)}=\frac{1}{2}$ and $t_{22}^{(\mu+1)}=1$ by the
normalization of $P$.

Recall once more we have
$t_{11}^{(\mu)}=t_{21}^{(\mu)}=t_{12}^{(\mu)}=t_{32}^{(\mu)}=\frac{1}{2}$,
$t_{23}^{(\mu)}=1$, $t_{11}^{(\mu+1)}=t_{22}^{(\mu+1)}=1$, and
$t_{13}^{(\mu+1)}=t_{33}^{(\mu+1)}=\frac{1}{2}$.
We know $p_1^{(\mu-1)}\leq 2p_2^{(\mu+1)}<\frac{1}{2}$
due to (\ref{p2m114}) and $p_2^{(\mu-1)}=2p_3^{(\mu)}<\frac{1}{3}$
by (\ref{p3mu}), which implies
$p_1^{(\mu+1)}=\frac{1}{2}p_1^{(\mu-1)}+\frac{3}{4}p_2^{(\mu-1)}<\frac{1}{2}$.
This gives a contradiction
$p_2^{(\mu+1)}\geq\frac{1}{2}(p_2^{(\mu+1)}+p_3^{(\mu+1)})=
\frac{1}{2}(1-p_1^{(\mu+1)})>\frac{1}{4}$ according to (\ref{p2m114}), which
completes the argument for $t_{12}^{(\mu)}=\frac{1}{2}$.

Further consider the case of $t_{13}^{(\mu)}>0$ which ensures
$t_{12}^{(\mu)}=0$ or equivalently $t_{22}^{(\mu)}+t_{32}^{(\mu)}=1$. We know
$t_{22}^{(\mu)}<1$ by the normalization of $P$, and hence $t_{32}^{(\mu)}>0$,
which also ensures $t_{13}^{(\mu+1)}=0$ since three switching
paths starting from level $\mu-1$ are excluded. It follows that
$t_{23}^{(\mu+1)}>0$ due to \mbox{$t_{33}^{(\mu+1)}<1$.} Thus, we get a contradiction
$p_2^{(\mu+1)}\geq\frac{1}{4}p_1^{(\mu-1)}+\frac{1}{2}p_2^{(\mu-1)}\geq$
$\frac{1}{4}p_1^{(\mu-1)}+\frac{1}{4}p_2^{(\mu-1)}
+\frac{1}{4}p_3^{(\mu-1)}=\frac{1}{4}$ according to (\ref{p2m114}).

Similarly, for the remaining case of $t_{12}^{(\mu)}=t_{13}^{(\mu)}=0$
we obtain $t_{32}^{(\mu)}=t_{33}^{(\mu)}=1$ by the
\linebreak
normalization of $P$,
which again ensures $t_{13}^{(\mu+1)}=0$ implying $t_{23}^{(\mu+1)}>0$.
Hence, $p_2^{(\mu+1)}\geq$
\linebreak
$\frac{1}{4}p_1^{(\mu-1)}+\frac{1}{2}p_2^{(\mu-1)}
+\frac{1}{2}p_3^{(\mu-1)}\geq\frac{1}{4}$, which contradicts (\ref{p2m114}).
This completes the proof of the lemma.
\end{proof}

\subsection{The block structure from $m'$ to $\mu$  ($m'$-conditions~\ref{mc1}--\ref{mc3})}
\label{cond1-3}

We define the first \textbf{level}~${\mathbf{\boldsymbol m'}}$ of the underlying block
to be the greatest level such that $2\leq m'\leq\mu$ and $t_{32}^{(m')}>0$
($m'$-condition~\ref{mc2}), which exists since at least $t_{32}^{(2)}>0$.
In the following lemma, we will analyze the initial block structure from
level $m'$ through $\mu$, which is illustrated in Figure~\ref{fmptomu} (where
the dashed line shows that there is no edge from $v_1^{(k-1)}$ or $v_2^{(k-1)}$
to $v_3^{(k)}$ for any $m'<k\leq\mu$).

\begin{figure}[!h]
\centering
\includegraphics[height=5.3cm]{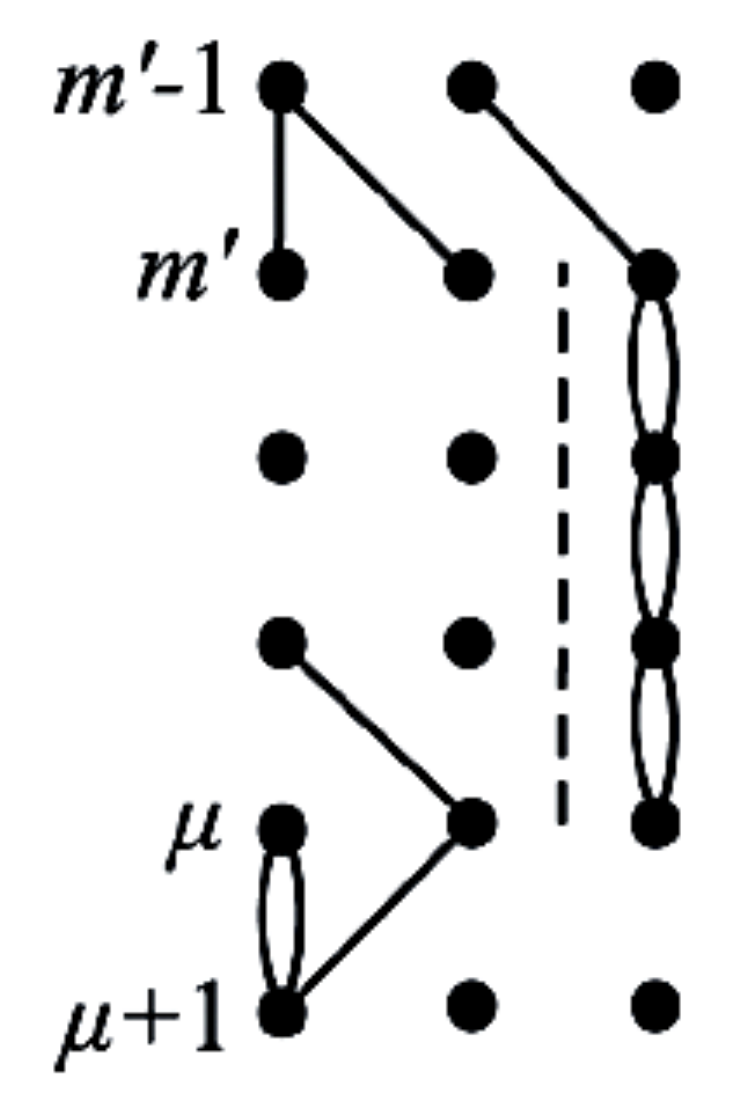}\vspace*{-3mm}
\caption{The block structure from $m'$ to $\mu$.}
\label{fmptomu}\vspace*{-3mm}
\end{figure}

\begin{lemma}
\label{mptomu}
$t_{31}^{(k)}=t_{32}^{(k)}=0$ and $t_{33}^{(k)}=1$ for every $k=m'+1,\ldots,\mu$.
\end{lemma}
\begin{proof}
On the contrary, let $k$ be the greatest level such that $m'<k\leq\mu$ and
$t_{33}^{(k)}<1$, that is $t_{33}^{(\ell)}=1$ for $\ell=k+1,\ldots,\mu$.
Obviously, $t_{33}^{(k)}>0$ because $t_{32}^{(\ell)}=0$ for
every $\ell=m'+1,\ldots,k,\ldots,\mu$ by the definition of $m'$,
and $t_{31}^{(\ell)}=0$ for every $\ell=k,\ldots,\mu$ since otherwise
$p_3^{(\mu)}\geq p_3^{(\ell)}>\frac{1}{6}$, which contradicts (\ref{p3mu}).
Hence, $t_{33}^{(k)}=\frac{1}{2}$ and the edge from $v_3^{(k-1)}$ to
$v_3^{(k)}$ is the only edge that leads to $v_3^{(k)}$ due to
$t_{31}^{(k)}=t_{32}^{(k)}=0$, while the other edge from $v_3^{(k-1)}$
goes either to $v_1^{(k)}$ or to $v_2^{(k)}$. Thus, either
$\mathbf{a}^{(m)}\in M(v_1^{(k)})\cup M(v_2^{(k)})$ for $\mathbf{a}^{(m)}$
satisfying $m$-condition~\ref{mc4} (Paragraph~\ref{prplan}), or
a 1-neighbor $\mathbf{a}'\in\Omega_1(\mathbf{a}^{(m)})\cap M(v_3^{(k-1)})$
of $\mathbf{a}^{(m)}$ exists that differs from $\mathbf{a}^{(m)}$ in the
variable that is tested at $v_3^{(k-1)}$ so that also
$\mathbf{a}'\in M(v_1^{(k)})\cup M(v_2^{(k)})$.
Since $M(v_1^{(k)})\cup M(v_2^{(k)})=M(v_1^{(\mu)})\cup M(v_2^{(\mu)})$ and
$t_{12}^{(\mu+1)}=\frac{1}{2}$ by Lemma~\ref{levelmu}, there is a 2-neighbor
$\mathbf{a}''\in \Omega_2(\mathbf{a}^{(m)})\cap M(v_1^{(\mu+1)})\subseteq H$ of
$\mathbf{a}^{(m)}$ such that $P(\mathbf{a}'')=1$ by $m$-condition~\ref{mc4}
since $M(v_1^{(\mu+1)})\subseteq M(v_1^{(m)})\cup M(v_2^{(m)})$, which is a
contradiction. Thus $t_{33}^{(k)}=1$ for $k=m'+1,\ldots,\mu$.
\end{proof}

Lemma~\ref{mptomu} together with assumption (\ref{p3mu}) gives
\begin{eqnarray}
\label{Mm'mu}
p_1^{(m')}+p_2^{(m')}&=&p_1^{(\mu)}+p_2^{(\mu)}\,,\\
\label{p3m'}
p_3^{(m')}&=&p_3^{(\mu)}<\frac{1}{6}
\end{eqnarray}
which verifies $m'$-condition~\ref{mc3} for the first block level $m'$. Note that
inequality (\ref{p3m'}) ensures $m'\geq 3$ due to $p_3^{(2)}\geq\frac{1}{4}$.
Finally, the following lemma shows $m'$-condition~\ref{mc1}.
\begin{lemma}
\label{mpc1}
$t_{11}^{(m')}=t_{21}^{(m')}=\frac{1}{2}$ \emph{($m'$-condition~\ref{mc1}).}
\end{lemma}
\begin{proof}
Obviously, $t_{31}^{(m')}=0$ since otherwise $p_3^{(m')}>\frac{1}{6}$ which
contradicts (\ref{p3m'}). For $t_{11}^{(m')}=1$ or $t_{21}^{(m')}=1$ we
obtain $t_{12}^{(m')}+t_{22}^{(m')}>0$ and $t_{13}^{(m')}+t_{23}^{(m')}>0$
by the normalization of $P$. Thus either
$\mathbf{a}^{(m)}\in M(v_1^{(m'-1)})\subseteq M(v_1^{(m')})\cup M(v_2^{(m')})$
or a 1-neighbor $\mathbf{a}'\in\Omega_1(\mathbf{a}^{(m)})\cap
(M(v_2^{(m'-1)})\cup M(v_3^{(m'-1)}))$ of
$\mathbf{a}^{(m)}$ exists such that
$\mathbf{a}'\in M(v_1^{(m')})\cup M(v_2^{(m')})$.
Since $M(v_1^{(m')})\cup M(v_2^{(m')})=M(v_1^{(\mu)})\cup M(v_2^{(\mu)})$ and
$t_{12}^{(\mu+1)}=\frac{1}{2}$ by Lemma~\ref{levelmu}, there is a 2-neighbor
$\mathbf{a}''\in \Omega_2(\mathbf{a}^{(m)})\cap M(v_1^{(\mu+1)})\subseteq H$ of
$\mathbf{a}^{(m)}$ such that $P(\mathbf{a}'')=1$ which is a contradiction.
The last possibility $t_{11}^{(m')}=t_{21}^{(m')}=\frac{1}{2}$ follows.
\end{proof}

\subsection{An upper bound on $p_1^{(\omega+1)}$ in terms of $p_1^{(m')}+p_2^{(m')}$}

In Paragraph~\ref{ubp1mp2mp1o}, we have upper-bounded $p_1^{(m)}+p_2^{(m)}$ at the last
block level $m$ in terms of $p_1^{(\omega+1)}$ provided that $\omega<m$. In this
paragraph, we will extend this estimate by upper-bounding $p_1^{(\omega+1)}$
(or $p_1^{(m)}+p_2^{(m)}$ for $\omega=m$) in terms of $p_1^{(m')}+p_2^{(m')}$ from
the first block level $m'$. Putting these two bounds together, we will obtain a recursive
formula for an upper bound on $p_1^{(m)}+p_2^{(m)}$ in terms of $p_1^{(m')}+p_2^{(m')}$
which will be used in Section~\ref{recursionm} for verifying condition (\ref{acond}).

We first resolve the case of the empty block when $\nu=m=\omega$, $t_{33}^{(m)}=0$,
$t_{13}^{(m)}+t_{23}^{(m)}=1$, and $t_{32}^{(m)}=1$ (see Figure~\ref{nuem}). It follows
from Corollary~\ref{belowmu} and Lemma~\ref{mptomu}
(see Figures~\ref{mutonu} and~\ref{fmptomu}, respectively) that
$M(v_1^{(m')})\cup M(v_2^{(m')})=M(v_1^{(m)})\cup M(v_2^{(m)})$ which ensures
$m'$-condition~\ref{mc4} ($m'$-conditions~\ref{mc1}--\ref{mc3} have already been checked
in Paragraph~\ref{cond1-3}) and $p_1^{(m')}+p_2^{(m')}=p_1^{(m)}+p_2^{(m)}$. Hence, the
empty block can be skipped in our analysis by replacing $m'$ with $m$, and we will
further consider only the non-empty blocks.

\medskip
It follows from the definition of partition class
$R$ (see Figure~\ref{mutonu}) and Lemma~\ref{levelmu} that
\begin{equation}
\label{p1nu}
p_1^{(\nu)}=p_1^{(\mu)}+p_2^{(\mu)}\left(1-\frac{1}{2^{|R|}}\right)
\qquad\mbox{for }\nu<m\,.
\end{equation}
For $\nu=m$ when $\nu'=\nu-1$, we know $t_{33}^{(m)}>0$ because we assume
a non-empty block, and hence, either $t_{12}^{(m)}=t_{32}^{(m)}=\frac{1}{2}$
and $t_{33}^{(m)}=1$, or $t_{13}^{(m)}=t_{33}^{(m)}=\frac{1}{2}$ and
$t_{32}^{(m)}=1$ (see Figure~\ref{nuem}) by the definition of $\nu$,
Lemma~\ref{aboveedge}.ii, and $m$-conditions~\ref{mc1} and~\ref{mc2}, which
also ensures $\omega=m$ in both cases. Thus,
\begin{equation}
\label{p1nub}
p_1^{(m)}+p_2^{(m)}=p_1^{(\mu)}+p_2^{(\mu)}\left(1-\frac{1}{2^{|R|+1}}\right)
\qquad\mbox{for }\nu=m=\omega
\end{equation}
according to (\ref{defnup}).
For $\nu=m-1$ we know $t_{12}^{(m)}=t_{13}^{(m)}=0$ leading to
$t_{32}^{(m)}=t_{33}^{(m)}=1$, for which $\omega=m$ can be assumed without
loss of generality.

\medskip
Further assume $\nu<m-1$, while the resulting formula for $\nu<m$ will also be
verified for the case of $\nu=m-1$ (when $\omega=m$) below in (\ref{p1om12}).
We know by the definition of $\nu$ that $t_{12}^{(\nu+1)}=t_{13}^{(\nu+1)}=0$,
which excludes $t_{32}^{(\nu+1)}=1$ and $t_{33}^{(\nu+1)}=1$ since $P$ is
normalized. First consider the case of $\omega>\nu$ excluding
$\omega=\nu-1\geq\mu$ and $\omega=\nu$ for now, cf.\ (\ref{omnu}).
Then the double-edge path from the definition of
$\omega$ passes through a double edge from $v\in\{v_2^{(\nu)},v_3^{(\nu)}\}$
to $v_2^{(\nu+1)}$, while the two edges from the other
node $v'\in\{v_2^{(\nu)},v_3^{(\nu)}\}\setminus\{v\}$ lead
to $v_2^{(\nu+1)}$ and $v_3^{(\nu+1)}$, respectively, as depicted in
Figure~\ref{nutoo}. For $\ell=\nu+2,\ldots,\omega$, we have either
$t_{22}^{(\ell)}=1$ implying $t_{33}^{(\ell)}=\frac{1}{2}$ if $\ell<m$, or
$t_{32}^{(\ell)}=1$ if $\ell=m$. Moreover, $t_{12}^{(\omega+1)}=0$
for $\omega<m$ by Corollary~\ref{belowmu}.\ref{podm4}. Hence,
$p_3^{(\nu+1)}=p_2^{(\mu)}/2^{|R|+1}$ (cf.\ Figure~\ref{mutonu} and
Lemma~\ref{levelmu}) upper-bounds the fraction of all the inputs
whose computational path traverses nodes
$v',v_3^{(\nu+1)},v_3^{(\nu+2)},\ldots,v_3^{(\ell)},v_1^{(\ell+1)}$
for some $\nu+1\leq\ell\leq\min(\omega,m-1)$. It follows that
\begin{equation}
\label{p1om1}
p_1^{(\omega+1)}\leq p_1^{(\nu)}+\frac{p_2^{(\mu)}}{2^{|R|+1}}\qquad\mbox{for }\omega<m
\end{equation}
which is even valid for any $\max(\nu-1,\mu)\leq\omega<m$ since obviously
$p_1^{(\omega+1)}=p_1^{(\nu)}$ for $\omega=\nu-1\geq\mu$ as well as for
$\omega=\nu<m$, while
\begin{equation}
\label{p1om12}
p_1^{(m)}+p_2^{(m)}\leq p_1^{(\nu)}+\frac{p_2^{(\mu)}}{2^{|R|+1}}
\qquad\mbox{for }\omega=m
\end{equation}
which also holds for $\nu=m-1$ because $p_1^{(m)}+p_2^{(m)}=p_1^{(\nu)}$ in this case.

\begin{figure}[h]
\vspace{2mm}
\centering
\includegraphics[height=5.6cm]{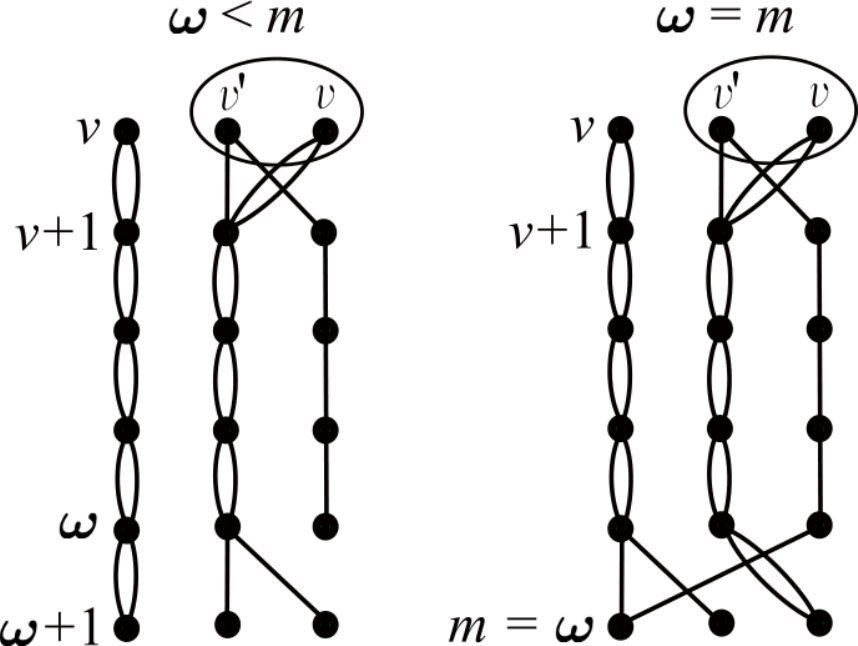}
\caption{The block structure from $\nu<\omega$ to $\omega+1$ (or to $m$ if $\omega=m$).}
\label{nutoo}
\end{figure}

In addition, we will prove the following lemma:
\begin{lemma}
\label{lemp1p24p2}
\begin{equation}
\label{p1p24p2}
p_1^{(\mu)}+p_2^{(\mu)}\leq 4p_2^{(\mu)}\,.
\end{equation}
\end{lemma}
\begin{proof}
First consider the case of $\mu>m'$. Clearly, $t_{21}^{(\mu)}>0$ follows
from the definition of $\mu'$ for $\mu=\mu'$, while for $\mu=\mu'-1$, the
case of $t_{21}^{(\mu)}=0$ translates to original $t_{11}^{(\mu)}=0$ (before
$v_1^{(\mu)}$ and $v_2^{(\mu)}$ were swapped) which contradicts
the normalization of $P$ by Lemma~\ref{mptomu}.
Hence, we have $p_1^{(\mu)}+p_2^{(\mu)}=p_1^{(\mu-1)}+p_2^{(\mu-1)}
\leq 2p_1^{(\mu-1)}\leq 4p_2^{(\mu)}$ according to Lemma~\ref{mptomu}.
For $\mu=m'$, on the other hand, we will distinguish three cases.
For $t_{32}^{(\mu)}=t_{33}^{(\mu)}=1$, we know $p_1^{(\mu)}=p_2^{(\mu)}$
by Lemma~\ref{mpc1}, which implies (\ref{p1p24p2}).
For $t_{12}^{(\mu)}+t_{22}^{(\mu)}=\frac{1}{2}$, we have $t_{33}^{(\mu)}=1$ by
Lemma~\ref{aboveedge}.ii, which gives
$p_1^{(\mu)}\leq$ $\frac{1}{2}p_1^{(\mu-1)}+\frac{1}{2}p_2^{(\mu-1)}
<\frac{1}{2}+\frac{1}{6}=\frac{2}{3}$ since $p_2^{(\mu-1)}<\frac{1}{3}$
by $m'$-conditions~\ref{mc2} and (\ref{p3m'}). In addition,
$p_3^{(\mu-1)}<p_3^{(\mu)}<\frac{1}{6}$ implying
$p_2^{(\mu-1)}+p_3^{(\mu-1)}<\frac{1}{2}$ which means
$p_2^{(\mu)}\geq\frac{1}{2}p_1^{(\mu-1)}>\frac{1}{4}$ by Lemma~\ref{mpc1}.
It follows that $p_1^{(\mu)}<\frac{2}{3}<\frac{3}{4}<3p_2^{(\mu)}$ which
gives (\ref{p1p24p2}). Similarly for
$t_{13}^{(\mu)}+t_{23}^{(\mu)}>0$, we have $t_{32}^{(\mu)}=1$ implying
$\frac{1}{6}>p_3^{(\mu)}\geq p_2^{(\mu-1)}\geq p_3^{(\mu-1)}$ due
to (\ref{p3m'}), and hence $p_1^{(\mu-1)}>\frac{2}{3}$ which ensures
$p_2^{(\mu)}>\frac{1}{3}$ by Lemma~\ref{mpc1}, while
$p_1^{(\mu)}\leq\frac{1}{2}p_1^{(\mu-1)}
+p_3^{(\mu-1)}<\frac{1}{2}+\frac{1}{6}=\frac{2}{3}<1<3p_2^{(\mu)}$
which completes the proof of the lemma.
\end{proof}

For $\nu<m$, equation (\ref{p1nu}) is plugged into (\ref{p1om1}) if $\omega<m$
or into (\ref{p1om12}) if $\omega=m$, while equation (\ref{p1nub}) is considered
for $\nu=m$ (implying $\omega=m$). Then Lemma~\ref{lemp1p24p2} and equation
(\ref{Mm'mu}) are employed, which results in
\begin{eqnarray}
\label{p1om1b}
p_1^{(\omega+1)}&\leq& p_1^{(\mu)}
+p_2^{(\mu)}\left(1-\frac{1}{2^{|R|}}\right)
+\frac{p_2^{(\mu)}}{2^{|R|+1}}
=p_1^{(\mu)}+p_2^{(\mu)}\left(1-\frac{1}{2^{|R|+1}}\right)
\qquad\nonumber\\
&\leq&\left(p_1^{(m')}+p_2^{(m')}\right)
\left(1-\frac{1}{2^{|R|+3}}\right)
\qquad\mbox{for }\omega<m\,,
\end{eqnarray}
\begin{equation}
 \label{p1om1b2}
p_1^{(m)}+p_2^{(m)}\leq\left(p_1^{(m')}+p_2^{(m')}\right)
\left(1-\frac{1}{2^{|R|+3}}\right)\qquad\mbox{for }\omega=m\,.
\end{equation}
Formula (\ref{p1om1b}) can further be plugged into (\ref{p1p2m}) giving
\begin{equation}
\label{p1p2mb}
p_1^{(m)}+p_2^{(m)}\leq 1-\left(1-\left(p_1^{(m')}+p_2^{(m')}\right)
\left(1-\frac{1}{2^{|R|+3}}\right)\right)
\prod_{j=1}^q\left(1-\frac{1}{2^{|Q_j|}}\right)
\end{equation}
which is even valid for $\omega=m$ (i.e.\ $q=0$) since equation
(\ref{p1p2mb}) coincides with (\ref{p1om1b2}) in this case.

\section{The recursion}
\label{recursionm}

In the previous Sections~\ref{suffc}--\ref{sbelowmu}, we have analyzed the
structure of the block of $P$ from level $m'$ through $m$ (see
Figure~\ref{block}). We will now employ
this block analysis recursively so that $m=m_r$ is replaced by $m'=m_{r+1}$.
For this purpose, we introduce additional index $b=1,\ldots,r$ to the
underlying objects in order to differentiate among respective blocks. For
example, the sets $R,Q_1,\ldots,Q_q$, defined in
Section~\ref{dfpart}, corresponding to the $b$th block are denoted as
$R_b,Q_{b1},\ldots,Q_{bq_b}$, respectively.

It follows from the definition of
partition class in Paragraph~\ref{dfpartR} that, for any $b>1$, the nodes
labeled with the variables whose indices are in $R_b$ are connected
with the nodes corresponding to $R_{b-1}$ through a computational path
which traverses nodes $v_1^{(\nu_b')},v_1^{(\nu_b'+1)},\ldots,v_1^{(m_b-1)}$
since $\nu_b'\leq m_b-1$ according to (\ref{defnup}). Hence, sets
$R_1,\ldots,R_r$ are pair-wise disjoint because $P$ is read-once, and
thus they create a partition.

\subsection{Inductive assumptions}

In particular, we will proceed by induction on $r$, starting with
$r=0$ and $m_0=d$. In the induction step for $r+1$, we assume that the four
$m_r$-conditions from Paragraph~\ref{prplan} are met for the last level $m=m_r$
of the $(r+1)$st block (see Paragraph~\ref{inicasemd} for $r=0$), and let the
assumption (\ref{p3mu}) be satisfied for the previous blocks, that is,
\begin{equation}
\label{p3mui}
p_3^{(\mu_b)}<\frac{1}{6}
\end{equation}
for every $b=1,\ldots,r$. In addition, assume
\begin{equation}
\label{1Pir}
1-\Pi_r<\delta=\min\left(\varepsilon-\varepsilon',\frac{6\varepsilon-5}{7}\right)<\frac{1}{7}
\end{equation}
where $\varepsilon>\frac{5}{6}$ and $\varepsilon'<\varepsilon$ are the parameters of
Theorem~\ref{suff}, and denote
\begin{equation}
\label{Pikpii}
\Pi_k=\prod_{b=1}^k\pi_b\,,\qquad
\pi_b=\prod_{j=1}^{q_b}\left(1-\frac{1}{2^{|Q_{bj}|}}\right)\,,
\end{equation}
\begin{equation}
\varrho_k=\prod_{b=1}^k\alpha_b\,,\qquad
\alpha_b=\left(1-\frac{1}{2^{|R_b|+3}}\right)
\end{equation}
for $k=1,\ldots,r$, $\varrho_0=\Pi_0=1$, and $\pi_b=1$ for
$q_b=0$, which will be used below to shorten the recursive inequality (\ref{p1p2mb}) and its solution. It follows from (\ref{Pikpii}) and (\ref{1Pir}) that
\begin{equation}
\label{piiPir1}
\pi_b\geq\Pi_r>1-\delta>\frac{2}{3}
\end{equation}
which verifies assumption (\ref{pik45}) for every $b=1,\ldots,r$.
Hence, we can employ recursive inequality (\ref{p1p2mb}) from
Section~\ref{sbelowmu} which is rewritten as
\begin{equation}
\label{pi1rec}
p_{b-1}\leq 1-(1-p_b\alpha_b)\pi_b=1-\pi_b+p_b\alpha_b\pi_b
\end{equation}
for $b=1,\ldots,r$ where notation $p_b=p_1^{(m_b)}+p_2^{(m_b)}$
is introduced. Starting with
\begin{equation}
p_0=p_1^{(d)}+p_2^{(d)}\geq\varepsilon
\end{equation}
which follows from (\ref{podmhitt}), recurrence (\ref{pi1rec}) can be
solved as
\begin{eqnarray}
\varepsilon&\leq&\sum_{k=1}^r(1-\pi_k)
\prod_{b=1}^{k-1}\alpha_b\pi_b+p_r\prod_{b=1}^r\alpha_b\pi_b
<\sum_{k=1}^r(1-\pi_k)\Pi_{k-1}+p_r\varrho_r\Pi_r\qquad\nonumber\\
\label{epsrec}
&=&1-\Pi_r+p_r\varrho_r\Pi_r\,.
\end{eqnarray}
In addition,
\begin{equation}
\label{rhor}
\varrho_r>p_r\varrho_r\Pi_r>\varepsilon-\delta\geq\varepsilon'
\end{equation}
follows from (\ref{epsrec}) and (\ref{1Pir}).

\subsection{Recursive step}
\label{recst}

Throughout this paragraph, we will consider the case when
\begin{equation}
\label{1Pir1}
1-\Pi_{r+1}<\delta
\end{equation}
(cf.\ assumption (\ref{1Pir})), while the case complementary to
(\ref{1Pir1}), which concludes the recursion, will be resolved below
in Section~\ref{indend}. Assuming condition (\ref{1Pir1}), we will
prove that inductive assumptions (\ref{p3mui}) and (\ref{1Pir}) are met for
$r$ replaced with $r+1$ together with the four $m_{r+1}$-conditions
for the first level $m_{r+1}$ of the $(r+1)$st block so that we can
further proceed in the recursion.

\medskip
By analogy to (\ref{piiPir1}), inequality (\ref{1Pir1}) implies
\begin{equation}
\label{pir1sqrt1213}
\pi_{r+1}>1-\delta>\frac{2}{3}\,.
\end{equation}
For $\omega_{r+1}<m_r$, we know
\begin{equation}
p_r\leq 1-\left(p_2^{(\omega_{r+1}+1)}+p_3^{(\omega_{r+1}+1)}\right)\pi_{r+1}
\end{equation}
according to (\ref{p1p2m}), and
\begin{equation}
p_2^{(\omega_{r+1}+1)}+p_3^{(\omega_{r+1}+1)}\geq p_3^{(\mu_{r+1})}
\end{equation}
by the definition of $\omega_{r+1}$ and Lemma~\ref{aboveedge}.iii--iv
(for $k=\omega_{r+1}$), which altogether gives
\begin{equation}
\varepsilon<1-\Pi_r+\left(1-p_3^{(\mu_{r+1})}\pi_{r+1}\right)\varrho_r\Pi_r
\end{equation}
according to (\ref{epsrec}). Hence,
\begin{equation}
\varepsilon-\delta
<\left(1-p_3^{(\mu_{r+1})}\pi_{r+1}\right)\varrho_r\Pi_r
<1-p_3^{(\mu_{r+1})}\pi_{r+1}
\end{equation}
follows from (\ref{1Pir}), which gives
\begin{equation}
\label{p3mur1112}
p_3^{(\mu_{r+1})}<\frac{1-\varepsilon+\delta}{1-\delta}\leq\frac{1}{6}
\qquad\mbox{for }\omega_{r+1}<m_r
\end{equation}
by (\ref{pir1sqrt1213}) and  (\ref{1Pir}). Inequality (\ref{p3mur1112}) is even
valid for $\omega_{r+1}=m_r$ since
\begin{equation}
\label{p3mur1112b}
p_3^{(\mu_{r+1})}\leq p_3^{(m_r)}<\frac{1}{6}\qquad\mbox{for }\omega_{r+1}=m_r
\end{equation}
according to $m_r$-condition~\ref{mc3}. Therefore, assumptions (\ref{p3mu})
and (\ref{pik45}) of the analysis in Section~\ref{sbelowmu} are also met for
the $(r+1)$st block according to (\ref{p3mur1112})--(\ref{p3mur1112b}) and
(\ref{pir1sqrt1213}), respectively, which justifies recurrence inequality
(\ref{pi1rec}) for $b=r+1$ leading to the solution
\begin{equation}
\label{epsrecr1}
\varepsilon<1-\Pi_{r+1}+p_{r+1}\varrho_{r+1}\Pi_{r+1}
\end{equation}
by analogy to (\ref{epsrec}) where $r$ is replaced with $r+1$. Similarly to
(\ref{rhor}), we obtain
\begin{equation}
\label{rhor1}
\varrho_{r+1}>
\varepsilon'
\end{equation}
by combining (\ref{epsrecr1}) with (\ref{1Pir1}).
Thus, inductive assumptions (\ref{p3mui}) and (\ref{1Pir}) are valid for
$r$ replaced by $r+1$ according to (\ref{p3mur1112})--(\ref{p3mur1112b})
and (\ref{1Pir1}), respectively.

\medskip
In order to proceed in the next induction step, we still
need to verify the four $m_{r+1}$-conditions from Paragraph~\ref{prplan}
for level $m_{r+1}$. In Paragraph~\ref{cond1-3},
$m_{r+1}$-conditions~\ref{mc1}--\ref{mc3} have been proven, and thus, it
suffices to validate $m_{r+1}$-condition~\ref{mc4}. For this purpose, we
exploit the fact that $A$ is $\varepsilon'^{11}$-rich after we show
corresponding condition (\ref{acond}) for partition
$\{R_1,\ldots,R_{r+1}\}$ of $I=\bigcup_{b=1}^{r+1}R_b$. In particular,
\begin{equation}
\varepsilon'^{11}<\varrho_{r+1}^{11}
<\prod_{b=1}^{r+1}\left(1-\frac{1}{2^{|R_b|}}\right)
\end{equation}
follows from (\ref{rhor1}) since for any $1\leq b\leq r+1$,
\begin{equation}
\label{to11}
\left(1-\frac{1}{2^{|R_b|+3}}\right)^{11}<1-\frac{1}{2^{|R_b|}}
\end{equation}
for $|R_b|\geq 1$ because $f(x)=\ln(1-\frac{1}{x})/\ln(1-\frac{1}{8x})$
is a decreasing function for $x=2^{|R_b|}\geq 2$, and $f(2)<11$.
This provides required $\mathbf{a}^{(m_{r+1})}\in A$ such that for every
$b=1,\ldots,r+1$ there exists $i\in R_b$ that meets $a_i^{(m_{r+1})}\not=c_i$
according to (\ref{cond}) for $Q=\emptyset$. Obviously, the computational path
for this $\mathbf{a}^{(m_{r+1})}$ ends up in sink $v_1^{(d)}$ or $v_2^{(d)}$
labeled with 1 when we put $\mathbf{a}^{(m_{r+1})}$ at node $v_1^{(m_{r+1})}$
or $v_2^{(m_{r+1})}$ by the definition of $R_b$, $c_i$ and by the structure of
branching program~$P$ (see Figure~\ref{mutonu}), which proves
$m_{r+1}$-condition~\ref{mc4}. Thus, the inductive assumptions
are met for $r+1$ and we can proceed recursively for $r$ replaced with $r+1$ etc.
until condition (\ref{1Pir1}) is broken.

\section{The end of recursion}
\label{indend}

In this section, we will consider the case of
\begin{equation}
\label{1Pir1n}
1-\Pi_{r+1}\geq\delta
\end{equation}
complementary to (\ref{1Pir1}), which concludes the recursion from
Section~\ref{recursionm} as follows.
Suppose $|Q_{bj}|>\log n$ for every $b=1,\ldots,r+1$ and
$j=1,\ldots,q_b$, then we would have
\begin{eqnarray}
\Pi_{r+1}&=&\prod_{b=1}^{r+1}\prod_{j=1}^{q_b}\left(1-\frac{1}{2^{|Q_{bj}|}}\right)
\geq\left(1-\frac{1}{2^{\log n}}\right)^{\frac{n}{\log n}}\noindent\\
&>&1-\frac{1}{n}\cdot\frac{n}{\log n}=1-\frac{1}{\log n}\,,
\end{eqnarray}
which breaks (\ref{1Pir1n}) for sufficiently large $n$.
Hence, there must be $1\leq b^*\leq r+1$ and $1\leq j^*\leq q_{b^*}$ such that
$|Q_{b^*j^*}|\leq\log n$, and we denote $Q=Q_{b^*j^*}$.
Clearly, $Q\cap R_b=\emptyset$ for $b=1,\ldots,b^*-2$
due to $P$ is read-once while it may happen that $Q\cap R_{b^*-1}\not=\emptyset$
for $j^*=1$, $\kappa_{b^*1}=\sigma_{b^*1}=m_{b^*-1}$, and $t_{23}^{(m_{b^*-1})}=0$.
Thus, let $r^*$ be the maximum of $b^*-2$ and $b^*-1$ such that
$Q\cap R_{r^*}=\emptyset$. We will again employ the fact that $A$ is
$\varepsilon'^{11}$-rich. First condition (\ref{acond}) for partition
$\{R_1,\ldots,R_{r^*}\}$ of $I=\bigcup_{b=1}^{r^*}R_b$ is verified as
\begin{equation}
\prod_{b=1}^{r^*}\left(1-\frac{1}{2^{|R_b|}}\right)
>\varrho_r^{11}>\varepsilon'^{11}
\end{equation}
according to (\ref{to11}) and (\ref{rhor}). This provides $\mathbf{a}^*\in A$ such
that $a_i^*=c_i^Q$ for every $i\in Q$ and at the same time,
for every $b=1,\ldots,r^*$ there exists $i\in R_b$ that meets $a_i^*\not=c_i^{R_b}$
according to (\ref{cond}).
\begin{lemma}
\label{gsp}
Denote $\lambda=\lambda_{b^*j^*}$.
There are two generalized `switching' paths (cf.\ Lemma~\ref{aboveedge}.ii)
starting from $v_2^{(k)}$ and $v_3^{(k)}$, respectively, at level $k$
satisfying $3<\max(\lambda-2,\mu_{b^*})\leq k<\lambda$, which end in any node from the set $\left\{v_1^{(\lambda-1)},v_1^{(\lambda)},v_3^{(\lambda)}\right\}$.
\end{lemma}
\begin{proof}
For the notation simplicity, we will omit the block index $b^*$ in this proof.
We know $\omega<m$ due to $q>0$, and $\lambda>\mu$ from Paragraph~\ref{dfpartQ}.
Consider first the case when $t_{12}^{(\lambda)}=t_{13}^{(\lambda)}=0$.
Obviously, $t_{22}^{(\lambda)}<1$ follows from the definition of $\lambda$
for $\lambda>\omega$ and from the definition of $\omega$ for $\lambda=\omega$,
which gives $t_{22}^{(\lambda)}=t_{32}^{(\lambda)}=\frac{1}{2}$ and
$t_{23}^{(\lambda)}>0$ by the normalization of $P$.
For $t_{33}^{(\lambda)}=\frac{1}{2}$, we obtain two switching paths
$v_2^{(\lambda-1)},v_3^{(\lambda)}$ and $v_3^{(\lambda-1)},v_3^{(\lambda)}$.
Thus assume $t_{33}^{(\lambda)}=0$
which ensures $t_{23}^{(\lambda)}=1$ and $\lambda>\mu+1$ since $\lambda=\mu+1$
would give $\omega>\lambda$. Consider first the case when
$t_{12}^{(\lambda-1)}=t_{13}^{(\lambda-1)}=0$,
which implies $t_{22}^{(\lambda-1)}>0$ and $t_{23}^{(\lambda-1)}>0$ by
$t_{11}^{(\lambda-1)}=1$ and the normalization of $P$, providing two
switching paths $v_2^{(\lambda-2)},v_2^{(\lambda-1)},v_3^{(\lambda)}$ and
$v_3^{(\lambda-2)},v_2^{(\lambda-1)},v_3^{(\lambda)}$.
Two switching paths $v_2^{(\lambda-2)},v_1^{(\lambda-1)}$ and
$v_3^{(\lambda-2)},v_1^{(\lambda-1)}$ are also guaranteed when
$t_{12}^{(\lambda-1)}>0$ and $t_{13}^{(\lambda-1)}>0$ appear simultaneously.
For $t_{12}^{(\lambda-1)}=0$ and $t_{13}^{(\lambda-1)}>0$,
we have $t_{22}^{(\lambda-1)}>0$ by the normalization of $P$, which together
with $t_{32}^{(\lambda)}=\frac{1}{2}$ produces two switching paths
$v_2^{(\lambda-2)},v_2^{(\lambda-1)},v_3^{(\lambda)}$ and
$v_3^{(\lambda-2)},v_1^{(\lambda-1)}$. For $t_{12}^{(\lambda-1)}>0$
and $t_{13}^{(\lambda-1)}=0$, the case of $t_{23}^{(\lambda-1)}>0$ ensures
two switching paths $v_2^{(\lambda-2)},v_1^{(\lambda-1)}$ and
$v_3^{(\lambda-2)},v_2^{(\lambda-1)},v_3^{(\lambda)}$, while for
$t_{23}^{(\lambda-1)}=0$ we obtain
$t_{12}^{(\lambda-1)}=t_{22}^{(\lambda-1)}=\frac{1}{2}$ and
$t_{33}^{(\lambda-1)}=1$, which implies $\lambda=\nu+1$ and $\omega>\lambda$
by Lemma~\ref{aboveedge}.iii contradicting the definition of
$\lambda\geq\omega\geq\nu-1$. This completes the argument for
$t_{12}^{(\lambda)}=t_{13}^{(\lambda)}=0$.

The case of $t_{13}^{(\lambda)}>0$ and $t_{12}^{(\lambda)}>0$ produces
two switching paths $v_2^{(\lambda-1)},v_1^{(\lambda)}$ and
$v_3^{(\lambda-1)},v_1^{(\lambda)}$. Further consider the case when
$t_{13}^{(\lambda)}>0$ and $t_{12}^{(\lambda)}=0$. Obviously,
$t_{22}^{(\lambda)}<1$ follows from the definition of $\lambda$ for
$\lambda>\omega$ and from the definition of $\omega$ for $\lambda=\omega$.
Hence, $t_{32}^{(\lambda)}>0$ which provides two switching paths
$v_2^{(\lambda-1)},v_3^{(\lambda)}$ and
$v_3^{(\lambda-1)},v_1^{(\lambda)}$. Finally, consider the case when
$t_{12}^{(\lambda)}>0$ and $t_{13}^{(\lambda)}=0$, for which
$t_{33}^{(\lambda)}>0$ generates two switching paths
$v_2^{(\lambda-1)},v_1^{(\lambda)}$ and $v_3^{(\lambda-1)},v_3^{(\lambda)}$,
while for $t_{33}^{(\lambda)}=0$ we obtain $t_{32}^{(\lambda)}=\frac{1}{2}$
and $t_{23}^{(\lambda)}=1$, which implies $\lambda=\nu$ and $\omega>\lambda$
by Lemma~\ref{aboveedge}.iii contradicting the definition of
$\lambda\geq\omega\geq\nu-1$.
\end{proof}

By a similar argument to Lemma~\ref{aboveedge}.ii, Lemma~\ref{gsp} gives
a 2-neighbor $\mathbf{a}'\in\Omega_2(\mathbf{a}^*)\subseteq H$ of
$\mathbf{a}^*\in A$ such that
$\mathbf{a}'\in M(v_1^{(\lambda)})\cup M(v_3^{(\lambda)})$. Thus,
either $\mathbf{a}'\in M(v_1^{(\lambda)})\subseteq M(v_1^{(m_{b^*-1})})
\cup M(v_2^{(m_{b^*-1})})$ or $\mathbf{a}'\in M(v_3^{(\lambda)})$
which implies $\mathbf{a}'\in M(v_1^{(\kappa_{b^*j^*})})
\subseteq M(v_1^{(m_{b^*-1})})\cup M(v_2^{(m_{b^*-1})})$ since
$a_i'=a_i^*=c_i^Q$ for every $i\in Q$ according to (\ref{cond})
(see Figure~\ref{block} and~\ref{nutom}). Note that
$M(v_1^{(\kappa_{b^*j^*})})=M(v_1^{(m_{b^*-1})})$ for $r^*=b^*-2$.
Hence, $P(\mathbf{a}')=1$ because for every $b=1,\ldots,r^*$ there
exists $i\in R_b$ that meets $a_i'=a_i^*\not=c_i^{R_b}$
due to (\ref{cond}) (see Figure~\ref{block} and~\ref{mutonu}).
This completes the proof of Theorem~\ref{suff}.
\end{proof}

\section{The richness of almost $k$-wise independent sets}
\label{kwISrich}

In order to achieve an explicit polynomial-time construction of a hitting set
for read-once branching programs of width~3 we will combine Theorem~\ref{suff}
with the result due to Alon et al.~\cite{Alon92} who provided simple efficient
constructions of almost $k$-wise independent sets. In particular,
for $\beta>0$ and $k=O(\log n)$ it is possible to construct a
\emph{$(k,\beta)$-wise independent set} ${\cal A}\subseteq\{0,1\}^*$ in
time polynomial in $\frac{n}{\beta}$ such that for sufficiently large $n$ and
any index set $S\subseteq\{1,\ldots,n\}$ of size $|S|\leq k$, the probability
distribution on $S$ is almost uniform, i.e.\ the probability that a given
$\mathbf{c}\in\{0,1\}^n$ coincides with the strings from
${\cal A}_n={\cal A}\cap\{0,1\}^n$ on the bit locations
from $S$ can be approximated as
\begin{equation}
\label{almkwind}
\left|\frac{\left|{\cal A}_n^S(\mathbf{c})\right|}{|{\cal A}_n|}
-\frac{1}{2^{|S|}}\right|\leq\beta\,,
\end{equation}
where
${\cal A}_n^S(\mathbf{c})=\{\mathbf{a}\in{\cal A}_n\,|\,(\forall i\in S)\,a_i=c_i\}$.
We will prove that, for suitable $k$, any almost $k$-wise independent
set is $\varepsilon$-rich. It follows that almost $O(\log n)$-wise
independent sets are hitting sets for the class of read-once conjunctions of
DNF and CNF (cf.~\cite{DeETT10}).
\begin{theorem}
\label{richis}
Let $\varepsilon>0$, $C$ be the least odd integer greater than
$(\frac{2}{\varepsilon}\ln\frac{1}{\varepsilon})^2$, and $0<\beta<\frac{1}{n^{C+3}}\,$.
Then any $(\lceil(C+2)\log n\rceil,\beta)$-wise independent set is $\varepsilon$-rich.
\end{theorem}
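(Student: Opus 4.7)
\emph{Plan.} The plan is to show that the fraction of $\mathbf{a}\in\mathcal{A}_n$ satisfying~(\ref{cond}) is strictly positive. Write $x_j:=2^{-|R_j|}$ and, for $T\subseteq\{1,\ldots,r\}$, $S_T:=Q\cup\bigcup_{j\in T}R_j$. Inclusion-exclusion on which clauses are fully matched on $\mathbf{c}$ gives
\[
N\;:=\;\Pr_{\mathbf{a}\sim\mathcal{A}_n}\!\bigl[\mathbf{a}\text{ satisfies (\ref{cond})}\bigr]\;=\;\sum_{T\subseteq\{1,\ldots,r\}}(-1)^{|T|}\Pr_{\mathbf{a}\sim\mathcal{A}_n}\!\bigl[a_i=c_i\text{ for every } i\in S_T\bigr],
\]
and the same identity under the uniform distribution evaluates to $\frac{1}{2^{|Q|}}\prod_j(1-x_j)\ge\varepsilon/2^{|Q|}$ by~(\ref{acond}). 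The strategy will combine Bonferroni at the prescribed odd depth $C$, almost $k$-wise independence on the surviving terms, and a Taylor-style bound on the Bonferroni remainder.

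First I would reduce to $|R_j|\le\log n$ for every $j$: any longer clause is shrunk to an arbitrary subset $R_j'\subseteq R_j$ of size $\lceil\log n\rceil$; the new condition strengthens~(\ref{cond}), so it suffices to prove the strengthened version, and the product in~(\ref{acond}) shrinks by at most $(1-1/n)^{n/\log n}=1-o(1)$, hence~(\ref{acond}) survives with essentially the same $\varepsilon$. After this reduction, every $T$ with $|T|\le C+1$ satisfies $|S_T|\le|Q|+(C+1)\log n\le(C+2)\log n=k$, putting every surviving term within reach of~(\ref{almkwind}).

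Next I would truncate: since $C$ is odd, Bonferroni gives $N\ge N_C^{\mathcal{A}}:=\sum_{|T|\le C}(-1)^{|T|}\Pr_{\mathcal{A}}[a_i=c_i\text{ for every } i\in S_T]$. Then~(\ref{almkwind}), applied to each of the at most $n^{C+1}$ surviving terms, yields $N_C^{\mathcal{A}}\ge N_C^{\mathrm{unif}}-n^{C+1}\beta\ge N_C^{\mathrm{unif}}-n^{-2}$ via $\beta<n^{-(C+3)}$. A second use of Bonferroni (now under the uniform measure) gives $N^{\mathrm{unif}}-N_C^{\mathrm{unif}}\le\frac{1}{2^{|Q|}}e_{C+1}(x_1,\ldots,x_r)$, and since $\prod(1-x_j)\ge\varepsilon$ forces $\sum_j x_j\le\ln(1/\varepsilon)$, Maclaurin's inequality combined with Stirling gives $e_{C+1}\le(\ln(1/\varepsilon))^{C+1}/(C+1)!\le(e\ln(1/\varepsilon)/(C+1))^{C+1}$. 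The specific choice $C>(2\varepsilon^{-1}\ln(1/\varepsilon))^2$ is tuned so that this last bound lies far below $\varepsilon$, and chaining the inequalities yields $N\ge 2^{-|Q|}(\varepsilon-e_{C+1})-n^{-2}>0$ for sufficiently large $n$.

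The chief technical obstacle is the Taylor/Stirling calibration: verifying that the explicit exponent $C\sim\varepsilon^{-2}(\ln(1/\varepsilon))^2$ really does push $e_{C+1}$ much below $\varepsilon$ even as $\varepsilon\to 0$. The other pieces --- the shrinkage of long clauses, the two applications of Bonferroni, the term count, and the $\beta$-approximation absorbed by the $\beta<n^{-(C+3)}$ margin --- will then follow by routine computation once the right truncation parameter is in hand.
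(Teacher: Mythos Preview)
Your proposal is correct and follows the same high-level architecture as the paper (shrink long clauses to logarithmic size, truncate inclusion--exclusion at depth $C$, invoke almost $k$-wise independence term-by-term, control the truncation error via a Taylor-type bound calibrated by the choice of $C$), but your execution is noticeably more streamlined. After shrinking, the paper groups the clauses by cardinality, which introduces factors $\prod_{j=1}^{k_i-1}(1-j/r_i)$ that must be driven close to~$1$; this forces a further three-way split of cardinalities into ``small'' (absorbed into $Q'$, hence the $2\log n$ budget for $|Q'|$), ``rare'' (bounded separately via $T_1$), and ``frequent'' (where the product factors are controllable), with the residual errors handled through hyperbolic sine/cosine expansions. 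You sidestep all of this by keeping the inner sum as the elementary symmetric polynomial $e_k(x_1,\ldots,x_r)$ and bounding only the single remainder term $e_{C+1}\le(\sum_j x_j)^{C+1}/(C+1)!$; the key calibration $\bigl(\ln(1/\varepsilon)/\sqrt{C}\bigr)^{C+1}<(\varepsilon/2)^{C+1}$, which is exactly what the paper uses for its Lagrange remainder, then finishes the job. Your route therefore needs less of the independence budget (only $|Q|\le\log n$ rather than $|Q'|<2\log n$) and avoids Lemmas~\ref{l10}--\ref{Lrb} entirely, at no apparent cost.
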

\begin{proof}
Let ${\cal A}\subseteq\{0,1\}^*$ be a $(\lceil(C+2)\log n\rceil,\beta)$-wise independent
set. We will show that ${\cal A}$ is $\varepsilon$-rich. Assume $\{R_1,\ldots,R_r\}$ is
a partition of index set $I\subseteq\{1,\ldots,n\}$ satisfying condition (\ref{acond}),
and $Q\subseteq\{1,\ldots,n\}\setminus I$ such that $|Q|\leq\log n$. In order to show
for a given $\mathbf{c}\in\{0,1\}^n$ that there is $\mathbf{a}\in {\cal A}_n$ that meets
(\ref{cond}) for $Q$ and partition $\{R_1,\ldots,R_r\}$, we will prove that the
probability
\begin{equation}
\label{probability}
p=p({\cal A}_n)=\frac{\left|{\cal A}_n^Q(\mathbf{c})\setminus
\bigcup_{j=1}^r{\cal A}_n^{R_j}(\mathbf{c})\right|}{|{\cal A}_n|}
\end{equation}
of the event that $\mathbf{a}\in {\cal A}_n$ chosen uniformly at random satisfies
$\mathbf{a}\in{\cal A}_n^Q(\mathbf{c})$ and
$\mathbf{a}\not\in{\cal A}_n^{R_j}(\mathbf{c})$ for every $j=1,\ldots,r$,
is \emph{strictly positive}.

\begin{figure}[htbp]
\vspace{-2mm}
\centering
\includegraphics[height=19cm]{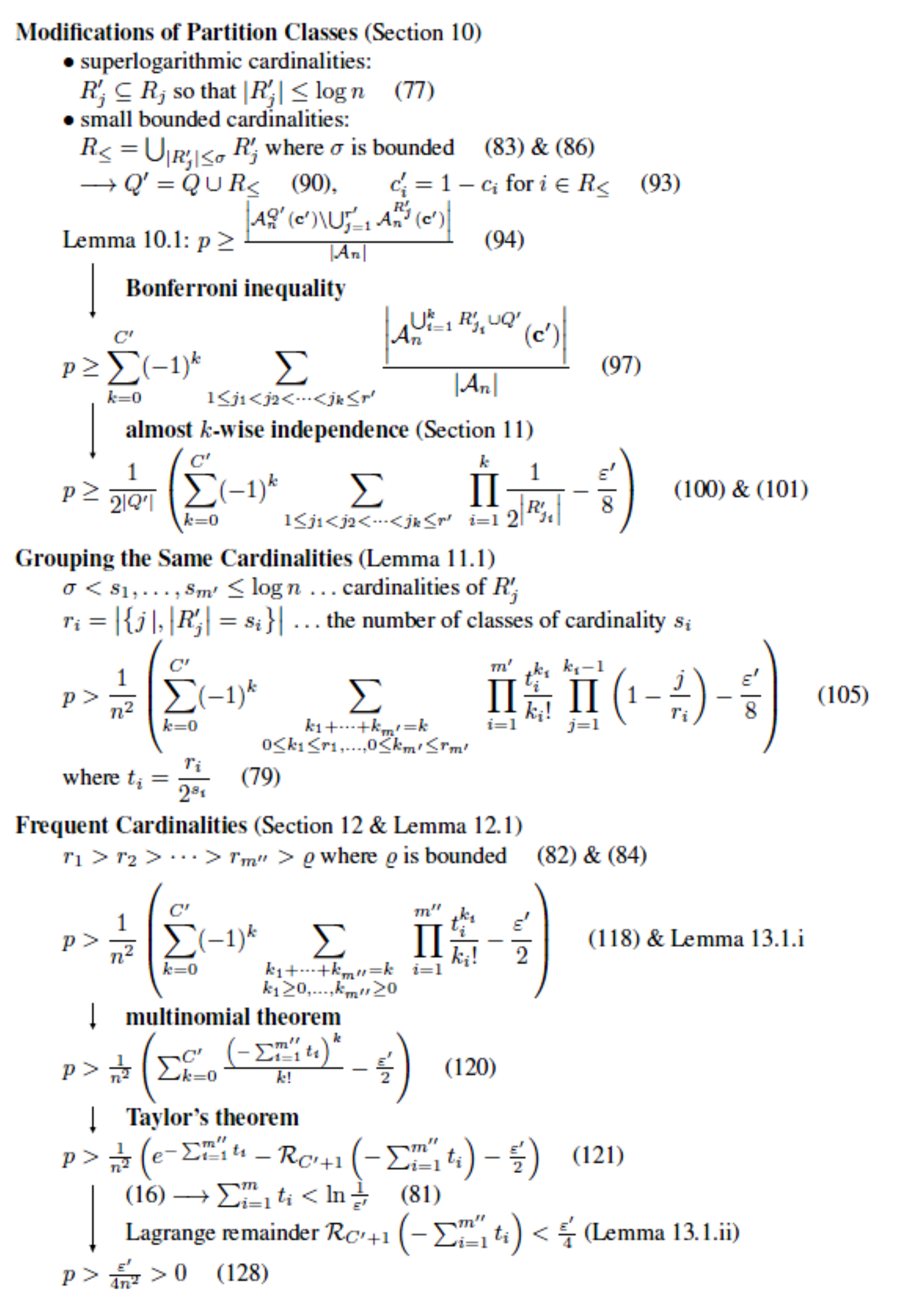}\vspace*{-2mm}
\caption{The main steps of the proof of Theorem~\ref{richis}.}
\label{prstr}
\end{figure}

\medskip
The main idea of the proof lies in lower-bounding the probability
(\ref{probability}). By using the assumption that ${\cal A}$ is almost
$O(\log n)$-wise independent this probability can be approximated by
the probability that any $\mathbf{a}\in\{0,1\}^n$ (not necessarily
in ${\cal A}_n$) satisfies (\ref{cond}) which can be expressed and
lower-bounded as
\begin{equation}
p(\{0,1\}^n)=\frac{1}{2^{|Q|}}\prod_{j=1}^r\left(1-\frac{1}{2^{|R_j|}}\right)
\geq\frac{\varepsilon}{n}>0
\end{equation}
according to (\ref{acond}) and $|Q|\leq\log n$. In particular, we briefly
comment on the main steps of the proof which are schematically
depicted in Figure~\ref{prstr} including references to corresponding sections,
lemmas, and equations. In Section~\ref{pclmod}, we will first modify
the partition classes $R_j$ so that their cardinalities are at most logarithmic
whereas the classes of small bounded cardinalities are merged with $Q$ and also
$\mathbf{c}$ is adjusted correspondingly. Lemma~\ref{lbmod} then ensures that
the probability $p$ from (\ref{probability}) is lower-bounded when using these
modified classes. Furthermore, Bonferroni inequality (the inclusion-exclusion
principle) and the assumption concerning the almost $k$-wise independence are
employed in Section~\ref{bonikwind} where also the classes of the same cardinality
are grouped. In Section~\ref{freqcard}, we will further reduce the underlying lower
bound on $p$ only to a sum over frequent cardinalities of partition classes to which
Taylor's theorem is applied in Section~\ref{Taylorth}, whereas a corresponding
Lagrange remainder is bounded using the assumption on constant~$C$.

\section{Modifications of partition classes}
\label{pclmod}

We properly modify the underlying partition classes in order
to further upper-bound their cardinalities by the logarithmic function so that
the assumption concerning almost $\lceil(C+2)\log n\rceil$-wise independence
of ${\cal A}$ can be applied in the following Section~\ref{bonikwind}. Thus, we
confine ourselves to at most logarithmic-size arbitrary subsets $R'_j$ of
partition classes $R_j$, that is
\begin{equation}
\label{dfRjp}
R'_j\left\{
\begin{array}{ll}
=R_j&\quad\mbox{if }|R_j|\leq\log n\\
\subset R_j\mbox{ so that }|R'_j|=\lfloor\log n\rfloor&\quad\mbox{otherwise}\,,
\end{array}
\right.
\end{equation}
which ensures $R'_j\subseteq R_j$ and $|R'_j|\leq\log n$ for every $j=1,\ldots,r$.
For these new classes, assumption (\ref{acond}) can be rewritten as
\begin{eqnarray}
\prod_{j=1}^{r}\left(1-\frac{1}{2^{|R'_j|}}\right)&>&
\left(1-\frac{1}{2^{\log n}}\right)^{\frac{n}{\log n}}
\prod_{|R_j|\leq\log n}\left(1-\frac{1}{2^{|R_j|}}\right)
\nonumber\\
\label{acondRjp}
&>&\left(1-\frac{1}{n}\cdot\frac{n}{\log n}\right)\varepsilon
=\left(1-\frac{1}{\log n}\right)\varepsilon=\varepsilon'\,,
\end{eqnarray}
where $\varepsilon'>0$ is arbitrarily close to $\varepsilon$ for sufficiently
large $n$.

\medskip
Denote by $\{s_1,s_2,\ldots,s_m\}=\{|R'_1|,\ldots,|R'_r|\}$ the set of all cardinalities $1\leq s_i\leq\log n$ of classes $R'_1,\ldots,R'_r$,
and for every $i=1,\ldots,m$, let $r_i=|\{j\,|\,|R'_j|=s_i\}|$ be the
number of classes $R'_j$ having cardinality $s_i$, that is,
$r=\sum_{i=1}^m r_i$. Furthermore, we define
\begin{equation}
\label{dfti}
t_i=\frac{r_i}{2^{s_i}}>0\quad\mbox{for }i=1,\ldots,m\,.
\end{equation}
It follows from (\ref{acondRjp}) and (\ref{dfti}) that
\begin{eqnarray}
0<\varepsilon'&<&\prod_{j=1}^{r}\left(1-\frac{1}{2^{|R'_j|}}\right)=
\prod_{i=1}^{m}\left(1-\frac{1}{2^{s_i}}\right)^{r_i}\nonumber\\
\label{prodeps}
&=&\prod_{i=1}^{m}\left(\left(1-\frac{1}{2^{s_i}}\right)^{2^{s_i}}\right)^{t_i}
<e^{-\sum_{i=1}^{m} t_i}
\end{eqnarray}
implying
\begin{equation}
\label{sumtieps}
\sum_{i=1}^{m} t_i<\ln\frac{1}{\varepsilon'}\,.
\end{equation}

\medskip
Moreover, we define parameters
\begin{eqnarray}
\label{dfrho}
\varrho&=&\frac{C}{1-\left(1-\frac{\varepsilon'^2}
{4(1+\varepsilon'^2)}\right)^\frac{1}{C}}>C\geq 1\,,\\
\label{dfsigma}
\sigma&=&\log\left(\frac{4\varrho\,(1+\varepsilon'^2)}{\varepsilon'^2}\right)
\end{eqnarray}
which are bounded since $\varepsilon'$ is arbitrarily close to $\varepsilon$ for sufficiently large $n$. The parameter $\varrho$ is used to distinguish between the frequent cardinalities of classes $R_1',\ldots,R_r'$ and the rare ones, while $\sigma$ represents an upper bound on these cardinalities. Thus, the cardinalities $s_1,\ldots,s_m$ are sorted so that
\begin{eqnarray}
\label{podmsiC}
r_i>\varrho\mbox{ and }s_i>\sigma&\mbox{ for }&i=1,\ldots,m''\\
\label{podmsiCb}
r_i\leq\varrho\mbox{ and }s_i>\sigma&\mbox{ for }&i=m''+1,\ldots,m'\\
\label{podmsiCc}
s_i\leq\sigma&\mbox{ for }&i=m'+1,\ldots,m\,.
\end{eqnarray}
We will further confine ourselves to the first $m'\geq 0$
cardinalities satisfying $s_i>\sigma$ for $i=1,\ldots,m'$. Without loss
of generality, we can also sort the corresponding partition classes so that
\begin{eqnarray}
\label{podmsiCr}
|R'_j|>\sigma&\quad\mbox{for }&j=1,\ldots,r'\\
|R'_j|\leq\sigma&\quad\mbox{for }&j=r'+1,\ldots,r\,,
\end{eqnarray}
which implies
\begin{equation}
\label{lbr'}
r'=\sum_{i=1}^{m'} r_i=\sum_{i=1}^{m'}t_i2^{s_i}>
\frac{4\varrho\,(1+\varepsilon'^2)}{\varepsilon'^2}\,\sum_{i=1}^{m'}t_i
\end{equation}
according to (\ref{dfti}), (\ref{podmsiC})--(\ref{podmsiCb}), and (\ref{dfsigma}).
We include the remaining bounded-size classes $R'_j$ for $j=r'+1,\ldots,r$ into
$Q$, that is,
\begin{equation}
\label{dfQp}
Q'=Q\cup\bigcup_{j=r'+1}^rR'_j
\end{equation}
whose size can be upper-bounded as
\begin{equation}
\label{cardQp}
|Q'|\leq\log n+\sum_{i=m'+1}^m r_i\log\left(\frac{4\varrho\,(1+\varepsilon'^2)}
{\varepsilon'^2}\right)<2 \log n
\end{equation}
for sufficiently large $n$, since
\begin{equation}
\sum_{i=m'+1}^m r_i=\sum_{i=m'+1}^m t_i2^{s_i}<
\frac{4\varrho\,(1+\varepsilon'^2)}{\varepsilon'^2}\,
\ln\frac{1}{\varepsilon'}
\end{equation}
according to (\ref{dfti}), (\ref{sumtieps}), (\ref{podmsiCc}), and (\ref{dfsigma}).
\eject
\noindent This completes the definition of new classes $Q',R'_1,$ $\ldots,R'_{r'}$.
In addition, we define $\mathbf{c}'\in\{0,1\}^n$ that differs from~$\mathbf{c}$ exactly on a bounded number of bit locations from $R'_{r'+1},\ldots,R'_r$, e.g.
\begin{equation}
\label{dfcip}
c_i'=\left\{
\begin{array}{ll}
1-c_i&\quad\mbox{if }i\in\bigcup_{j=r'+1}^r R'_j\\
c_i&\quad\mbox{otherwise.}
\end{array}
\right.
\end{equation}
The modified $Q',R'_1,\ldots,R'_{r'}$ and $\mathbf{c}'$ are used in the following
lemma for lower-bounding the probability (\ref{probability}).
\begin{lemma}
\label{lbmod}
\begin{equation}
\label{prP2}
p\geq\frac{\left|{\cal A}_n^{Q'}(\mathbf{c'})\setminus\bigcup_{j=1}^{r'}
{\cal A}_n^{R'_j}(\mathbf{c'})\right|}{|{\cal A}_n|}=
\frac{\left|{\cal A}_n^{Q'}(\mathbf{c'})\right|}{|{\cal A}_n|}
-\frac{\left|\bigcup_{j=1}^{r'}{\cal A}_n^{R'_j\cup Q'}(\mathbf{c'})\right|}{|{\cal A}_n|}\,.
\end{equation}
\end{lemma}
\begin{proof}
For verifying the lower bound in (\ref{prP2}) it suffices to show that
\begin{equation}
{\cal A}_n^{Q'}(\mathbf{c'})\setminus\bigcup_{j=1}^{r'}
{\cal A}_n^{R'_j}(\mathbf{c'})\subseteq{\cal A}_n^Q(\mathbf{c})\setminus
\bigcup_{j=1}^r{\cal A}_n^{R_j}(\mathbf{c})
\end{equation}
according to (\ref{probability}).
Assume $\mathbf{a}\in{\cal A}_n^{Q'}(\mathbf{c'})\setminus\bigcup_{j=1}^{r'}
{\cal A}_n^{R'_j}(\mathbf{c'})$, which means
$\mathbf{a}\in{\cal A}_n^{Q'}(\mathbf{c'})\subseteq{\cal A}_n^{Q}(\mathbf{c'})
={\cal A}_n^{Q}(\mathbf{c})$ and
$\mathbf{a}\not\in{\cal A}_n^{R'_j}(\mathbf{c'})=
{\cal A}_n^{R'_j}(\mathbf{c})\supseteq{\cal A}_n^{R_j}(\mathbf{c})$ for every
$j=1,\ldots,r'$ by definitions (\ref{dfRjp}), (\ref{dfQp}), (\ref{dfcip}), and
the fact that $S_1\subseteq S_2$ implies
${\cal A}_n^{S_2}(\mathbf{c})\subseteq{\cal A}_n^{S_1}(\mathbf{c})$. In addition,
$\mathbf{a}\in{\cal A}_n^{Q'}(\mathbf{c'})$ implies
$\mathbf{a}\not\in{\cal A}_n^{R_j}(\mathbf{c})$ for every
$j=r'+1,\ldots,r$ according to (\ref{dfcip}), and hence, $\mathbf{a}\in
{\cal A}_n^Q(\mathbf{c})\setminus\bigcup_{j=1}^r{\cal A}_n^{R_j}(\mathbf{c})$.
This completes the proof of the lower bound, while the equality in (\ref{prP2})
follows from ${\cal A}_n^{R'_j\cup Q'}(\mathbf{c'})
\subseteq{\cal A}_n^{Q'}(\mathbf{c'})$ for every $j=1,\ldots,r'$.
\end{proof}

\section{Almost $k$-wise independence}
\label{bonikwind}

Furthermore, we will upper-bound the probability of the finite
union of events appearing in formula (\ref{prP2}) by using Bonferroni inequality for constant number $C'=\min(C,r')$ of terms, which gives
\begin{eqnarray}
\label{prP3a}
p&\geq&\frac{\left|{\cal A}_n^{Q'}(\mathbf{c'})\right|}{|{\cal A}_n|}
-\sum_{k=1}^{C'}(-1)^{k+1}\sum_{1\leq j_1<j_2<\cdots<j_k\leq r'}
\frac{\left|\bigcap_{i=1}^k{\cal A}_n^{R'_{j_i}\cup Q'}(\mathbf{c'})\right|}{|{\cal A}_n|}\\
\label{prP3}
&=&\sum_{k=0}^{C'}(-1)^{k}\sum_{1\leq j_1<j_2<\cdots<j_k\leq r'}
\frac{\left|{\cal A}_n^{\bigcup_{i=1}^k R'_{j_i}\cup Q'}(\mathbf{c'})\right|}{|{\cal A}_n|}
\end{eqnarray}
according to Lemma~\ref{lbmod}. For notational simplicity, the inner sum in (\ref{prP3})
over $1\leq j_1<j_2<\cdots<j_k\leq r'$ for $k=0$
reads formally as it includes one summand $|{\cal A}_n^{Q'}(\mathbf{c'})|/|{\cal A}_n|$.
Note that $C'$ is odd for $C<r'$, while equality holds in~(\ref{prP3a}) for $C'=r'$,
which is the probabilistic inclusion-exclusion principle. For any $0\leq k\leq C'\leq C$,
we know $\left|\bigcup_{i=1}^k R'_{j_i}\cup Q'\right|\leq\lceil(C+2)\log n\rceil$
according to (\ref{dfRjp}) and (\ref{cardQp}), and hence,
\begin{equation}
\label{kwiseind}
\frac{\left|{\cal A}_n^{\bigcup_{i=1}^k R'_{j_i}\cup Q'}(\mathbf{c'})\right|}{|{\cal A}_n|}
\geq\frac{1}{2^{|Q'|+\sum_{i=1}^k\left|R'_{j_i}\right|}}-\beta
=\frac{1}{2^{|Q'|}}\prod_{i=1}^k\frac{1}{2^{\left|R'_{j_i}\right|}}-\beta
\end{equation}
(where the product in (\ref{kwiseind}) equals formally 1 for $k=0$) and similarly,
\begin{equation}
-\frac{\left|{\cal A}_n^{\bigcup_{i=1}^k R'_{j_i}\cup Q'}(\mathbf{c'})\right|}
{|{\cal A}_n|}
\geq-\frac{1}{2^{|Q'|}}\prod_{i=1}^k\frac{1}{2^{\left|R'_{j_i}\right|}}-\beta
\end{equation}
according to (\ref{almkwind}) since ${\cal A}$ is $(\lceil(C+2)\log n\rceil,\beta)$-wise
independent. We plug these inequalities into (\ref{prP3}), which leads to
\begin{eqnarray}
p&\geq&\sum_{k=0}^{C'}(-1)^{k}\sum_{1\leq j_1<j_2<\cdots<j_k\leq r'}\enspace
\frac{1}{2^{|Q'|}}\prod_{i=1}^k\frac{1}{2^{\left|R'_{j_i}\right|}}
-\beta\sum_{k=0}^{C'}\binom{r'}{k}\nonumber\\
\label{prP4}
&\geq&\frac{1}{2^{|Q'|}}\left(\sum_{k=0}^{C'}(-1)^{k}
\sum_{1\leq j_1<j_2<\cdots<j_k\leq r'}\enspace
\prod_{i=1}^k\frac{1}{2^{\left|R'_{j_i}\right|}}
-\beta\,2^{|Q'|}\,(r'+1)^{C'}\right)\,,\qquad
\end{eqnarray}
where
\begin{equation}
\label{kwer}
\beta\,2^{|Q'|}\,(r'+1)^{C'}<\frac{1}{n^{C+3}}\,n^2\,n^C=\frac{1}{n}
<\frac{\varepsilon'}{8}
\end{equation}
for sufficiently large $n>8/\varepsilon'$ by using the assumption on $\beta$,
inequality (\ref{cardQp}), $r'<n$ (e.g., $r'=n$ would break  (\ref{podmsiCr})
and (\ref{dfsigma})), and $C'\leq C$. The following lemma rewrites the inner
sum in formula (\ref{prP4}).
\begin{lemma}
\label{l10}
For $0\leq k\leq C'$,
\begin{equation}
\label{isprP4}
\sum_{1\leq j_1<j_2<\cdots<j_k\leq {r'}}\enspace
\prod_{i=1}^k\frac{1}{2^{|R'_{j_i}|}}=
\sum_{\substack{k_1+\cdots+k_{m'}=k\\0\leq k_1\leq r_1,\ldots,0\leq k_{m'}\leq r_{m'}}}\enspace
\prod_{i=1}^{m'}\frac{t_i^{k_i}}{k_i!}\,\prod_{j=1}^{k_i-1}\left(1-\frac{j}{r_i}\right)\,.
\end{equation}
\end{lemma}
\begin{proof}
By grouping the classes of the same cardinality together, the left-hand side of inequality (\ref{isprP4}) can be rewritten as
\begin{equation}
\sum_{1\leq j_1<j_2<\cdots<j_k\leq {r'}}\enspace
\prod_{i=1}^k\frac{1}{2^{|R'_{j_i}|}}=
\sum_{\substack{k_1+k_2+\cdots+k_{m'}=k\\0\leq k_1\leq r_1,\ldots,0\leq k_{m'}\leq r_{m'}}}\enspace
\prod_{i=1}^{m'}\binom{r_i}{k_i}\left(\frac{1}{2^{s_i}}\right)^{k_i}\,,
\end{equation}
where $k_1,\ldots,k_{m'}$ denote the numbers of classes of corresponding cardinalities
$s_1,\ldots,s_{m'}$ considered in a current summand, and
\begin{equation}
\binom{r_i}{k_i}\left(\frac{1}{2^{s_i}}\right)^{k_i}=
\frac{r_i\left(r_i-1\right)\cdots\left(r_i-k_i+1\right)}{k_i!}
\left(\frac{t_i}{r_i}\right)^{k_i}
=\frac{t_i^{k_i}}{k_i!}\,\prod_{j=1}^{k_i-1}\left(1-\frac{j}{r_i}\right)
\end{equation}
according to (\ref{dfti}).
\end{proof}
Thus, we plug equations (\ref{kwer}) and (\ref{isprP4}) into (\ref{prP4}) and obtain
\begin{equation}
\label{prP5}
p>\frac{1}{n^2}\left(\sum_{k=0}^{C'}(-1)^{k}
\sum_{\substack{k_1+\cdots+k_{m'}=k\\0\leq k_1\leq r_1,\ldots,0\leq k_{m'}\leq r_{m'}}}\enspace
\prod_{i=1}^{m'}\frac{t_i^{k_i}}{k_i!}\,\prod_{j=1}^{k_i-1}\left(1-\frac{j}{r_i}\right)
-\frac{\varepsilon'}{8}\right)\,.
\end{equation}
Note that for $m'=0$ (implying $r'=C'=0$), the inner sum in (\ref{prP5}) equals 1.

\section{Frequent cardinalities}
\label{freqcard}

We sort out the terms with frequent cardinalities (\ref{podmsiC}) from
the sum in formula~(\ref{prP5}), that is,
\begin{equation}
\label{prP5aa}
p>\frac{1}{n^2}\left(\sum_{k=0}^{C'}(-1)^{k}
\sum_{\substack{k_1+\cdots+k_{m''}=k\\0\leq k_1\leq r_1,\ldots,0\leq k_{m''}\leq r_{m''}}}\enspace
\prod_{i=1}^{m''}\frac{t_i^{k_i}}{k_i!}\,\prod_{j=1}^{k_i-1}\left(1-\frac{j}{r_i}\right)
-T_1-\frac{\varepsilon'}{8}\right),
\end{equation}
where the inner sum in (\ref{prP5aa}) equals zero for $k>r''=\sum_{i=1}^{m''}r_i\,$, and
\begin{equation}
\label{dft1}
T_1=\sum_{k=0}^{C'}(-1)^{k+1}
\sum_{\substack{k_1+\cdots+k_{m'}=k\\
0\leq k_1\leq r_1,\ldots,0\leq k_{m'}\leq r_{m'}\\
(\exists\, m''+1\leq\ell\leq m')\,k_\ell>0}}\enspace
\prod_{i=1}^{m'}\frac{t_i^{k_i}}{k_i!}\,
\prod_{j=1}^{k_i-1}\left(1-\frac{j}{r_i}\right)
\end{equation}
sums up the terms including rare cardinalities (\ref{podmsiCb}). In addition, we know
\begin{eqnarray}
\label{produb}
1\geq\prod_{i=1}^{m''}\,\prod_{j=1}^{k_i-1}\left(1-\frac{j}{r_i}\right)&\geq&\\
\label{prodlb}
\prod_{i=1}^{m''}\,\left(1-\frac{C-1}{\varrho}\right)^{k_i-1}
&>&\left(1-\frac{C}{\varrho}\right)^{C}
=1-\frac{\varepsilon'^2}{4(1+\varepsilon'^2)}
\end{eqnarray}
according to (\ref{podmsiC}), (\ref{dfrho}), and
$k_i\leq k=\sum_{i=1}^{m''}k_i\leq C'\leq C<\varrho\,$.
The upper bound (\ref{produb}) and lower bound (\ref{prodlb}) on the underlying
product are used to lower-bound the negative terms of (\ref{prP5aa})
for odd $k$ and the positive ones for even $k$, respectively, that is,
\begin{equation}
\label{prP5ab}
p>\frac{1}{n^2}\left(\sum_{k=0}^{C'}(-1)^{k}
\sum_{\substack{k_1+\cdots+k_{m''}=k\\0\leq k_1\leq r_1,\ldots,0\leq k_{m''}\leq r_{m''}}}\enspace
\prod_{i=1}^{m''}\frac{t_i^{k_i}}{k_i!}-\frac{\varepsilon'^2}{4(1+\varepsilon'^2)}\,T_2
-T_1-\frac{\varepsilon'}{8}\right)
\end{equation}
where
\begin{equation}
\label{prP5b}
T_2=\sum_{k=0,2,4,\ldots}^{C'}\enspace
\sum_{\substack{k_1+\cdots+k_{m''}=k\\0\leq k_1\leq r_1,\ldots,0\leq k_{m''}\leq r_{m''}}}\enspace
\prod_{i=1}^{m''}\frac{t_i^{k_i}}{k_i!}\,.
\end{equation}
The following lemma upper-bounds the above-introduced terms $T_1$ and $T_2$.
\begin{lemma}~
\begin{enumerate}
\renewcommand\labelenumi{\theenumi}
\renewcommand{\theenumi}{(\roman{enumi})}
\label{rarec}
\item
$T_1<\frac{\varepsilon'}{8}\,$.
\item
$T_2<\frac{1+\varepsilon'^2}{2\,\varepsilon'}\,$.
\end{enumerate}
\end{lemma}
\begin{proof}

\vspace*{-6mm}
\begin{enumerate}
\renewcommand\labelenumi{\theenumi}
\renewcommand{\theenumi}{(\roman{enumi})}
\item
We can only take the terms of (\ref{dft1}) for odd $k=1,3,5,\ldots$ into account
since those for even $k$ are nonpositive (e.g.\ the term for $k=0$ equals zero
because there is no $m''+1\leq\ell\leq m'$ such that $k_\ell>0$ in this case).
Thus,
\begin{eqnarray}
T_1&\leq&\sum_{k=1,3,5,\ldots}^{C'}\enspace
\sum_{\substack{k_1+\cdots+k_{m'}=k\\
0\leq k_1\leq r_1,\ldots,0\leq k_{m'}\leq r_{m'}\\
(\exists\, m''+1\leq\ell\leq m')\,k_\ell>0}}
\frac{r_\ell}{2^{s_\ell}}\,\frac{1}{k_\ell}\,
\frac{t_\ell^{k_\ell-1}}{(k_\ell-1)!}\,
\prod_{\substack{i=1\\i\not=\ell}}^{m'}\frac{t_i^{k_i}}{k_i!}\nonumber\\
\label{srlb}
&\leq&\frac{\varrho}{2^\sigma}\sum_{k=1,3,5,\ldots}^{C'}\enspace
\sum_{\substack{k_1+\cdots+k_{m'}=k\\
0\leq k_1\leq r_1,\ldots,0\leq k_{m'}\leq r_{m'}\\
(\exists\, m''+1\leq\ell\leq m')\,k_\ell>0}}
\frac{t_\ell^{k_\ell-1}}{(k_\ell-1)!}\,
\prod_{\substack{i=1\\i\not=\ell}}^{m'}\frac{t_i^{k_i}}{k_i!}
\end{eqnarray}
according to (\ref{dfti}) and (\ref{podmsiCb}). Formula (\ref{srlb}) is
rewritten by replacing indices $k_\ell-1$ and $k-1$ with $k_\ell$ and $k$,
respectively, which is further upper-bounded by removing the upper bounds
that are set on indices $k_1,\ldots,k_{m'}$ and by omitting the condition
concerning the existence of special index $\ell$, as follows:
\begin{equation}
\label{srlb2}
T_1\leq\frac{\varrho}{2^\sigma}\sum_{k=0,2,4,\ldots}^{C'-1}\enspace
\sum_{\substack{k_1+\cdots+k_{m'}=k\\
k_1\geq 0,\ldots,k_{m'}\geq 0}}\enspace
\prod_{i=1}^{m'}\frac{t_i^{k_i}}{k_i!}
=\frac{\varrho}{2^\sigma}\sum_{k=0,2,4,\ldots}^{C'-1}
\frac{\left(\sum_{i=1}^{m'}t_i\right)^k}{k!}\,,
\end{equation}
where the multinomial theorem is employed. Notice that the sum on the right-hand
side of equation (\ref{srlb2}) represents the first few terms of $\,$Taylor series
of the hyperbolic$\,$ cosine$\,$ at point
  $\sum_{i=1}^{m'}t_i\geq 0$, which implies
\begin{equation}
\label{Taylcods}
T_1<\frac{\varrho}{2^\sigma}\,\cosh\left(\sum_{i=1}^{m'}t_i\right)
<\frac{\varepsilon'^2}{4(1+\varepsilon'^2)}\,\cdot\,
\frac{\frac{1}{\varepsilon'}+\varepsilon'}{2}=\frac{\varepsilon'}{8}
\end{equation}
according to  (\ref{sumtieps}) and (\ref{dfsigma}) since the hyperbolic cosine
is an increasing function for nonnegative arguments.

\item
Similarly as in the proof of (i), we apply the multinomial theorem
(cf.\ (\ref{srlb2})) and the Taylor series of the hyperbolic cosine (cf.\
(\ref{Taylcods})) to (\ref{prP5b}), which gives
\begin{eqnarray}
T_2&\leq&\sum_{k=0,2,4,\ldots}^{C'}\enspace
\sum_{\substack{k_1+\cdots+k_{m''}=k\\k_1\geq 0,\ldots,k_{m''}\geq 0}}\enspace
\prod_{i=1}^{m''}\frac{t_i^{k_i}}{k_i!}
=\sum_{k=0,2,4,\ldots}^{C'}\frac{\left(\sum_{i=1}^{m''}t_i\right)^k}{k!}\\
&\leq&\cosh\left(\sum_{i=1}^{m''}t_i\right)<
\frac{1+\varepsilon'^2}{2\,\varepsilon'}\,.
\end{eqnarray}
\end{enumerate}
\end{proof}

\noindent We plug the bounds from Lemma~\ref{rarec} into (\ref{prP5ab}) and obtain
\begin{equation}
\label{prP5a}
p>\frac{1}{n^2}\left(\sum_{k=0}^{C'}(-1)^{k}
\sum_{\substack{k_1+\cdots+k_{m''}=k\\0\leq k_1\leq r_1,\ldots,0\leq k_{m''}\leq r_{m''}}}\enspace
\prod_{i=1}^{m''}\frac{t_i^{k_i}}{k_i!}
-\frac{3\,\varepsilon'}{8}\right)\,.
\end{equation}

\section{Taylor's theorem}
\label{Taylorth}

In order to apply the multinomial theorem again, we remove the upper bounds
that are set on indices in the inner sum of formula (\ref{prP5a}), that is,
\begin{equation}
\label{prP6}
p>\frac{1}{n^2}\left(\sum_{k=0}^{C'}(-1)^{k}
\sum_{\substack{k_1+\cdots+k_{m''}=k\\k_1\geq 0,\ldots,k_{m''}\geq 0}}\enspace
\prod_{i=1}^{m''}\frac{t_i^{k_i}}{k_i!}-T-\frac{3\,\varepsilon'}{8}\right)\,,
\end{equation}
which is corrected by introducing additional term
\begin{equation}
\label{dfT}
T=\sum_{k=0}^{C'}(-1)^{k}
\sum_{\substack{k_1+\cdots+k_{m''}=k\\k_1\geq 0,\ldots,k_{m''}\geq 0\\
(\exists 1\leq\ell\leq m'')\,k_\ell>r_\ell}}\enspace
\prod_{i=1}^{m''}\frac{t_i^{k_i}}{k_i!}\,.
\end{equation}
Thus, inequality (\ref{prP6}) can be further rewritten as
\begin{eqnarray}
\label{prP7a}
p&>&\frac{1}{n^2}\left(\sum_{k=0}^{C'}
\frac{\left(-\sum_{i=1}^{m''}t_i\right)^k}{k!}
-T-\frac{3\,\varepsilon'}{8}\right)\\
\label{prP7}
&=&\frac{1}{n^2}\left(e^{-\sum_{i=1}^{m''} t_i}
-{\cal R}_{C'+1}\left(-\sum_{i=1}^{m''} t_i\right)
-T-\frac{3\,\varepsilon'}{8}\right)\,,
\end{eqnarray}
where Taylor's theorem is employed for the exponential function at point $-\sum_{i=1}^{m''} t_i$ producing the Lagrange remainder
\begin{equation}
\label{Lr}
{\cal R}_{C'+1}\left(-\sum_{i=1}^{m''} t_i\right)=
\frac{\left(-\sum_{i=1}^{m''}t_i\right)^{C'+1}}{(C'+1)!}\,
e^{-\vartheta\sum_{i=1}^{m''} t_i}
<\left(\frac{\sum_{i=1}^{m''}t_i}{\sqrt{C'}}\right)^{C'+1}
\end{equation}
with parameter $0<\vartheta<1$. Note that the upper bound in (\ref{Lr})
assumes $C'>0$, whereas for $C'=r'=0$ implying
$m''=m'=0$, we know ${\cal R}_1(0)=0$.
This remainder and term $T$ are upper-bounded in the following lemma.
\begin{lemma}~
\label{Lrb}
\begin{enumerate}
\renewcommand\labelenumi{\theenumi}
\renewcommand{\theenumi}{(\roman{enumi})}
\item
$T<\frac{\varepsilon'}{8}\,$.
\item
${\cal R}_{C'+1}\left(-\sum_{i=1}^{m''} t_i\right)<\frac{\varepsilon'}{4}\,$.
\end{enumerate}
\end{lemma}
\begin{proof}

\vspace*{-6mm}
\begin{enumerate}
\renewcommand\labelenumi{\theenumi}
\renewcommand{\theenumi}{(\roman{enumi})}
\item
We take only the summands of (\ref{dfT}) for even $k\geq 2$ into account since
the summands for odd $k$ are not positive, while for $k=0$ there is no
$1\leq\ell\leq m''$ such that $0=k\geq k_\ell>r_\ell\geq 1$, which gives
\begin{equation}
T\leq\sum_{k=2,4,6,\ldots}^{C'}\enspace
\sum_{\substack{k_1+\cdots+k_{m''}=k\\k_1\geq 0,\ldots,k_{m''}\geq 0\\
(\exists 1\leq\ell\leq m'')\,k_\ell>r_\ell}}
\frac{1}{2^{s_\ell}}\,\frac{r_\ell}{k_\ell}\,\frac{t_\ell^{k_\ell-1}}{(k_\ell-1)!}\enspace
\prod_{\substack{i=1\\i\not=\ell}}^{m''}\frac{t_i^{k_i}}{k_i!}\nonumber
\end{equation}
\begin{equation}
\label{ubT}
\leq\frac{1}{2^\sigma}
\sum_{k=2,4,6,\ldots}^{C'}\enspace
\sum_{\substack{k_1+\cdots+k_{m''}=k\\k_1\geq 0,\ldots,k_{m''}\geq 0\\
(\exists 1\leq\ell\leq m'')\,k_\ell>r_\ell}}\frac{t_\ell^{k_\ell-1}}{(k_\ell-1)!}\enspace
\prod_{\substack{i=1\\i\not=\ell}}^{m''}\frac{t_i^{k_i}}{k_i!}
\end{equation}
using (\ref{dfti}) and (\ref{podmsiC}). Formula (\ref{ubT}) is rewritten by replacing
indices $k_\ell-1$ and $k-1$ with $k_\ell$ and $k$, respectively, which is further upper
bounded by omitting the condition concerning the existence of special index $\ell$,
as follows:
\begin{equation}
\label{ubT2}
T\leq\frac{1}{2^\sigma}
\sum_{k=1,3,5,\ldots}^{C'-1}\enspace
\sum_{\substack{k_1+\cdots+k_{m''}=k\\k_1\geq 0,\ldots,k_{m''}\geq 0}}\enspace
\prod_{i=1}^{m''}\frac{t_i^{k_i}}{k_i!}
=\frac{1}{2^\sigma}\sum_{k=1,3,5,\ldots}^{C'-1}
\frac{\left(\sum_{i=1}^{m''}t_i\right)^k}{k!}\,,
\end{equation}
where the multinomial theorem is employed. Notice that the sum on the right-hand
side of equation (\ref{ubT2}) represents the first few terms of Taylor series
of the hyperbolic sine at point $\sum_{i=1}^{m''}t_i$, which implies
\begin{equation}
T\leq\frac{1}{2^\sigma}\,\sinh\left(\sum_{i=1}^{m''}t_i\right)<
\frac{\varepsilon'^2}{4\varrho\,(1+\varepsilon'^2)}\,\cdot\,
\frac{\frac{1}{\varepsilon'}-\varepsilon'}{2}<\frac{\varepsilon'}{8}
\end{equation}
according to (\ref{sumtieps}) and (\ref{dfsigma}) since the hyperbolic sine
is an increasing function.

\item
For $C'=C\geq 1$, Lagrange remainder (\ref{Lr}) can further be upper-bounded as
\begin{equation}
{\cal R}_{C'+1}\left(-\sum_{i=1}^{m''} t_i\right)
<\left(\frac{\ln\frac{1}{\varepsilon'}}{\sqrt{C}}\right)^{C+1}
<\left(\frac{\varepsilon'}{2}\right)^{C+1}<\frac{\varepsilon'}{4}
\end{equation}
for sufficiently large $n$ by using (\ref{sumtieps}) and
the definition of $C$, while for $C'=r'<C$, the underlying upper bound
\begin{equation}
{\cal R}_{C'+1}\left(-\sum_{i=1}^{m''} t_i\right)
\leq\left(\frac{\sum_{i=1}^{m'}t_i}
{\frac{4\varrho\,(1+\varepsilon'^2)}{\varepsilon'^2}}\right)^{\frac{r'+1}{2}}
<\frac{\ln\frac{1}{\varepsilon'}}{\frac{4\varrho\,(1+\varepsilon'^2)}{\varepsilon'^2}}
<\frac{\varepsilon'}{4}
\end{equation}
can be obtained from (\ref{lbr'}) and (\ref{sumtieps}).
\end{enumerate}

\vspace*{-7mm}
\end{proof}

Finally, inequality (\ref{prodeps}) together with the upper bounds
from Lemma~\ref{Lrb} are plugged into (\ref{prP7}), which leads to
\begin{equation}
\label{prPf}
p>\frac{\varepsilon'}{4n^2}=\frac{\varepsilon}{4n^2}\left(1-\frac{1}{\log n}\right)>0
\end{equation}
according to (\ref{acondRjp}).
Thus, we have proven that for any
$\mathbf{c}\in\{0,1\}^n$ the probability that there is $\mathbf{a}\in {\cal A}_n$
satisfying the conjunction (\ref{cond}) for $Q$ and partition $\{R_1,\ldots,R_r\}$
is strictly positive, which means such $\mathbf{a}$ does exist. This completes
the proof that ${\cal A}$ is $\varepsilon$-rich.
\end{proof}

\section{Conclusion}
\label{concl}

In the present paper, we have made an important step in the effort of
constructing hitting set generators for the model of read-once branching
programs of bounded width. Such constructions have so far been known only
in the case of width 2 and in very restricted cases of bounded width (e.g.\
regular oblivious read-once branching programs). We have now
provided an explicit polynomial-time construction of a hitting set for
read-once branching programs of width 3 with acceptance probability
$\varepsilon>\frac{5}{6}$. Although this model seems to be
relatively weak, the presented proof is far from being trivial. In particular,
we have formulated a so-called richness condition which is independent of
the notion of branching programs. This condition characterizes the hitting
sets for read-once branching programs of width~3. We have shown that such
a hitting set hits read-once conjunctions of DNF and CNF, which corresponds
to the weak richness condition. On the other hand, the richness condition
proves to be sufficient for a set extended with all strings within Hamming
distance of 3 to be a hitting set for width-3 1-branching programs.
In addition, we have proven for a suitable constant $C$ that any almost
$(C\log n)$-wise independent set which can be constructed in polynomial
time due to Alon et al.~\cite{Alon92}, satisfies this richness condition,
which implies our result. It also follows that almost $O(\log n)$-wise
independent sets are hitting sets for read-once conjunctions of DNF and CNF.

From the point of view of derandomization of unrestricted models, our result
still appears to be unsatisfactory but it is the best we know so far.
The issue of whether our technique based on the richness condition can be
extended to the case of width 4 or to bounded width represents an open problem
for further research. Another challenge for improving our result is to optimize
parameter $\varepsilon$, e.g.\ to achieve the result for
$\varepsilon\leq\frac{1}{n}$, which would be important for practical
derandomizations.

\subsection*{Acknowledgement}

Ji\v{r}\'{\i} \v{S}\'{\i}ma  was partially supported by the institutional support RVO: 67985807 and by the grant of the Czech Science Foundation GBP202/12/G061.

\smallskip\noindent
Stanislav \v{Z}\'{a}k  was partially supported by the institutional support RVO: 67985807 and by the grant of the Czech Science Foundation GAP202/10/1333.

\smallskip
The authors would like to thank Pavel Pudl\'ak for pointing out the problem
of hitting sets for width-3 read-once branching programs. The presentation of this paper benefited from valuable
suggestions of anonymous reviewers.

\end{document}